\newtheorem{thm}{Theorem}[section]
\newtheorem{prop}[thm]{Proposition}
\newtheorem{lem}[thm]{Lemma}
\newtheorem{cor}[thm]{Corollary}
\newtheorem{conj}[thm]{Conjecture}
\theoremstyle{definition}
\theoremstyle{remark}
\newtheorem{remark}[thm]{Remark}
\numberwithin{equation}{section}
\newcommand{\Tr}{\operatorname{Tr}}
\newcommand{\C}{\mathbb{C}}
\newcommand{\R}{\mathbb{R}}
\newcommand{\gog}{\mathfrak{g}}
\newcommand{\goh}{\mathfrak{h}}
\newcommand{\tot}{\mathfrak{t}}
\def\SU{{\rm SU}}\def\U{{\rm U}}
\begin{document}

\title{Projections of Orbital Measures and Quantum Marginal Problems}

\author{Beno\^it Collins}
\address{Department of Mathematics, Kyoto University}
\email{collins@math.kyoto-u.ac.jp}

 \author{Colin McSwiggen}
\address{Courant Institute of Mathematical Sciences, New York University}
\email{csm482@nyu.edu}

 \maketitle

\begin{abstract}
This paper studies projections of uniform random elements of (co)adjoint orbits of compact Lie groups.  Such projections generalize several widely studied ensembles in random matrix theory, including the randomized Horn's problem, the randomized Schur's problem, and the orbital corners process.  In this general setting, we prove integral formulae for the probability densities, establish some properties of the densities, and discuss connections to multiplicity problems in representation theory as well as to known results in the symplectic geometry literature.  As applications, we show a number of results on marginal problems in quantum information theory and also prove an integral formula for restriction multiplicities.
\end{abstract}

 \tableofcontents

\section{Introduction}

Recent years have seen a surge of interest among random matrix theorists in projections of random elements of Lie group orbits.  The following three problems, for example, have attracted particular attention.  Let $A, B$ be two fixed $n$-by-$n$ real diagonal matrices, and let $U \in \U(n)$ be a Haar-distributed random unitary matrix.

\begin{itemize}
\item \textbf{The randomized Horn's problem \cite{Z, CZ1, CZ2, CMZ, CMZ2, McS-splines, BGH, ForresterZhang, ZKF}:} What is the joint distribution of the eigenvalues of $A + UBU^\dagger$?
\item \textbf{The randomized Schur's problem \cite{CMZ2, CZ-SchurKostka, BGH}:} What is the joint distribution of the diagonal entries of $U B U^\dagger$?
\item \textbf{The orbital corners process \cite{Barysh, CC-corners, Z2}:} For $1 \le k < n$, what is the joint distribution of the eigenvalues of the upper left $k$-by-$k$ submatrix of $U B U^\dagger$?
\end{itemize}

Although the above problems may not appear to be obviously related, they can all be put in a common form.  Suppose we are given a connected, compact Lie group $G$ and a closed, connected subgroup $H \subset G$.  Write $\gog$, $\goh$ for the respective Lie algebras of these two groups, choose an $\mathrm{Ad}$-invariant inner product on $\gog$, and let $\phi: \gog \to \goh$ be the orthogonal projection from $\gog$ onto $\goh$.  Fix a Cartan subalgebra $\tot \subset \goh$ and a dominant Weyl chamber $\tot_+ \subset \tot$, and let $\psi : \goh \to \tot_+$ be the projection along orbits, i.e.~the map that sends each adjoint $H$-orbit in $\goh$ to its unique representative in $\tot_+$.  Finally, given $\lambda \in \gog$, write $\mathcal{O}_\lambda = \{ \mathrm{Ad}_g \lambda \ | \ g \in G \}$ for its orbit under the adjoint action of $G$.  The orbit $\mathcal{O}_\lambda$ carries a unique $G$-invariant probability measure, called the {\it orbital measure}, obtained as the pushforward of the Haar probability measure on $G$.

We now pose the following question.  Let $X$ be a random element of $\mathcal{O}_\lambda$ distributed according to the orbital measure, that is, $X = \mathrm{Ad}_g \lambda$ where $g$ is a Haar-distributed random element of $G$.  What is the distribution of $\psi(\phi(X)) \in \tot_+$?  If $G = \U(n) \times \U(n)$ and $H \cong \U(n)$ is the diagonal subgroup, this question reduces to the randomized Horn's problem; if $G = \U(n)$ and $H$ is a maximal torus, it reduces to the randomized Schur's problem; and if $G = \U(n)$ with
$$H = \{ V \oplus \mathbbm{1}_{n-k} \ | \ V \in \U(k) \} \cong \U(k),$$
where $\mathbbm{1}_{n-k}$ is the $(n-k)$-by-$(n-k)$ identity matrix, then it reduces to the orbital corners process.  For details, see the papers cited above.

In the present paper, we study the general case where $G$ and $H$ are arbitrary up to the assumptions of compactness and connectedness.  We derive an integral formula for the distribution of $\psi(\phi(X))$, use this formula to study properties of this distribution, and discuss connections to multiplicity problems in combinatorial representation theory.  These results unify and generalize a number of statements that were previously known to hold in the special cases described above.  Our representation-theoretic results include an integral formula that expresses restriction multiplicities in terms of the spectral density of a random matrix model.  We also develop detailed applications to quantum marginal problems, a subject in quantum information theory that has been the focus of considerable research by both mathematicians and physicists over the past two decades.  We provide background on quantum marginal problems in the following subsection.

The problem of describing the distribution of $\psi(\phi(X))$ is well known in symplectic geometry.  Although we make minimal direct use of symplectic geometry in this paper, we now briefly explain how the problem arises in that context.  Since $G$ is compact, we can use an $\mathrm{Ad}$-invariant inner product to identify $\gog$ with its dual $\gog^*$, giving an isometry between adjoint and coadjoint orbits of $G$.  Coadjoint orbits of Lie groups are well known to carry a canonical symplectic form (see e.g.~\cite{Kirillov-lectures}), for which the action of $H$ is Hamiltonian and the projection $\phi$ is a moment map.  Furthermore, the invariant measure on the orbit $\mathcal{O}_\lambda$ coincides up to normalization with the Liouville measure associated to this symplectic form.  Thus the distribution of $\phi(X)$ is equal, up to a constant, to the Duistermaat--Heckman measure for the action of $H$ on $\mathcal{O}_\lambda$.  Duistermaat--Heckman measures, in general, have been the object of intense study by symplectic geometers since the seminal works \cite{DH, Heck}, and most of what is currently known about the problems studied in this paper comes from symplectic geometry.  However, here we take a different approach, based primarily on harmonic analysis.  Our methods build on recent joint and independent work by Coquereaux, Zuber, and the second author, which applied similar techniques to study the randomized Horn's problem, the randomized Schur's problem, and the orbital corners process \cite{Coq2, CMZ, CMZ2, CZ1, CZ-SchurKostka, McS-splines, Z, Z2}.

\subsection{Quantum marginal problems}
The main applications in this paper concern so-called {\it quantum marginal problems} (also known as $N$-representability problems in quantum chemistry \cite{Ruskai}), which have recently been an important subject of research in quantum information theory.  The most basic example of a quantum marginal problem concerns two Hilbert spaces $V$ and $W$, which we regard as the state spaces of two distinct quantum-mechanical systems.  Given a pair of vectors $v \in V$, $w \in W$, we would like to know whether these are compatible as reduced states, or {\it marginals}, of a given joint state in the composite system with state space $V \otimes W$.

In the case where $V \cong \C^m$ and $W \cong \C^n$ are both finite dimensional, a complete solution was given in 2004 by Klyachko \cite{Klyachko-marginals}, who derived a system of linear inequalities on $v$ and $w$ that fully characterize the possible marginals for any given joint state in $V \otimes W$.  This finite-dimensional problem can be stated as follows.

An $mn$-by-$mn$ Hermitian matrix $M$ can be regarded as an operator on $V \otimes W \cong \C^{mn}$. We then write $M_{ij,kl}$ for the entries of $M$ relative to the basis $e_i \otimes f_j$, where $\{e_i \}_{i=1}^m$ and $\{ f_j \}_{j=1}^n$ are the standard bases of $\C^m$ and $\C^n$ respectively, and the double indices $ij$, $kl$ run from $11$ to $mn$ in lexicographic order.  The {\it marginals} of $M$ are the $m$-by-$m$ Hermitian matrix $\pi_1(M)$ and the $n$-by-$n$ Hermitian matrix $\pi_2(M)$ obtained by taking the partial trace over each subsystem.  That is, the entries of the marginals are given by
\begin{align}
\begin{split} \label{eqn:marginal-def}
\pi_1(M)_{ik} &= \sum_{j=1}^{n} M_{ij,kj}, \qquad 1 \le i, k \le m,\\
 \pi_2(M)_{jl} &= \sum_{i=1}^{m} M_{ij,il}, \qquad 1 \le j, l \le n.
 \end{split}
\end{align}
In concrete mathematical terms, Klyachko answered the following question.  Suppose we are given vectors $\lambda \in \R^{mn}$, $\mu \in \R^m$, and $\nu \in \R^n$, with coordinates in non-increasing order (i.e.~$\lambda_{11} \ge \hdots \ge \lambda_{mn}$, and likewise for $\mu$ and $\nu$).  Does there exist an $mn$-by-$mn$ Hermitian matrix $M$ with eigenvalues $(\lambda_{11}, \hdots, \lambda_{mn})$, such that $\pi_1(M)$ has eigenvalues $(\mu_1, \hdots, \mu_m)$ and $\pi_2(M)$ has eigenvalues $(\nu_1, \hdots, \nu_n)$?  In particular, for any given $\lambda$, Klyachko gave a recursive procedure for enumerating a list of linear inequalities that cut out a polytope in $\R^{m+n}$ whose points are exactly the compatible pairs $(\mu, \nu)$.  For a mathematical review of this problem and its solution, see \cite{Knut-lectures}.  For reviews from a physics perspective, see \cite{Schilling-review, Tyc-Vlach}.

The problem just described corresponds to the case of two distinguishable particles, but we can pose similar questions about bosons, fermions, or particles that obey more exotic statistics.  For systems of $k$ indistinguishable bosons or fermions, states correspond to self-adjoint operators acting on $\mathrm{Sym}^k \C^n$ or $\wedge^k \C^n$ respectively; we can regard such operators as $n^k$-by-$n^k$ Hermitian matrices that satisfy a symmetry or antisymmetry constraint.  The quantum marginal problem in this setting asks: given $\lambda \in \R^{n^k}$ and $\mu \in \R^n$, does there exist a Hermitian matrix $M$ with eigenvalues $\lambda$, {\it satisfying the (anti)symmetry constraint}, whose single-particle marginals have eigenvalues $\mu$?

In this paper, we study extensions of the above quantum marginal problems in terms of random matrices.  In the case of two distinguishable particles, we ask: given a uniform random $mn$-by-$mn$ Hermitian matrix $M$ with spectrum $\lambda$, what is the joint probability distribution of the spectra of $\pi_1(M)$ and $\pi_2(M)$?  This question can be regarded as a quantitative refinement of the original quantum marginal problem.  Whereas the original problem only asks whether any $M$ with the given marginal spectra $(\mu, \nu)$ exists, here we want to know -- in a loose sense -- how many such $M$ there are.  That is, we now want a complete description of the distribution of $(\mu, \nu)$, while the original problem only asks about the support of this distribution.  Similarly, for systems of bosons or fermions, we ask: given a uniform random self-adjoint operator on $\mathrm{Sym}^k \C^n$ or $\wedge^k \C^n$ with fixed eigenvalues, what is the probability distribution of the spectrum of a single-particle marginal?  Each of these problems can be framed in terms of projections of adjoint orbits of a compact Lie group.

This randomized version of the problem is a natural question from the point of view of quantum statistical mechanics, where numerous authors have investigated the properties of random quantum states that are ``typical'' in the sense of being uniformly distributed conditional on certain information \cite{LloydPagels, SZ1, SZ2, SZ3, BHKRV}.  The systematic study of randomized quantum marginal problems was initiated by Christandl, Doran, Kousidis and Walter \cite{CDKW} using techniques from symplectic geometry.  In particular, they gave an efficient algorithm for computing the spectral distributions of quantum marginals for distinguishable particles, bosons, and fermions.  While this represents a solution to the above problems in a computational sense, our understanding of this subject is still far from complete, and a number of unanswered questions remain.  For example, from the perspective of random matrix theory, one would like to understand the asymptotic behavior of the spectral distributions as the dimension of the system or the number of particles tends to infinity.  However, the known methods for computing the distributions do not appear to yield concrete formulae that could be used to approach such questions analytically.  Accordingly, here we take a different, complementary approach to \cite{CDKW}, relying mainly on harmonic analysis.  A further advantage of the analytic approach is that it can be used to study projections of group orbits in more general settings where the orbits do not carry a symplectic structure; see e.g.~\cite{CZ2}.  In \cite{MatsMcS}, Matsumoto and the second author have developed a third distinct approach to randomized quantum marginal problems based on moment methods and Weingarten calculus.

The contributions of the present work include integral formulae for the spectral distributions for the randomized quantum marginal problems for distinguishable particles, indistinguishable bosons, and indistinguishable fermions, as well as results on basic properties of these distributions, such as the degree of differentiability of their densities.  We also relate the probability densities for distinguishable particles to Kronecker coefficients, i.e.~the tensor product multiplicities for irreducible representations of the symmetric group.  It is well known that these probability densities are asymptotically related to Kronecker coefficients in the semiclassical limit, but here we record an exact (non-asymptotic) relation, which follows from a box spline convolution identity due to De Concini, Procesi and Vergne \cite{DPV}.

Finally, at the end of the paper we identify some questions for further research.  Recently Belinschi, Guionnet and Huang proved large-deviations principles for the randomized Horn's and Schur's problems in the large-rank limit, and used these to derive related large-deviations principles for Kostka numbers and Littlewood--Richardson coefficients \cite{BGH}.  Our hope is that the present work will be a step toward a similar analysis of the asymptotic behavior of quantum marginal problems and of Kronecker coefficients.  We also discuss possible applications to quantum de Finetti theorems and tests of entanglement for quantum states.

\subsection{Organization of the paper}
In Section \ref{sec:formulae}, we derive general integral formulae for projections of orbital measures of a connected, compact Lie group onto the Lie algebra of a closed, connected subgroup.  We then consider the special cases of the two-body and multi-body finite-dimensional quantum marginal problems for distinguishable particles, as well as for indistinguishable bosons and fermions.  The proofs are essentially calculations in Fourier analysis, which make extensive use of celebrated formulae of Harish-Chandra and Itzykson--Zuber.  In Section \ref{sec:properties}, we prove some basic properties of these measures' probability density functions.  It is an immediate consequence of the Duistermaat--Heckman theorem and Kirwan convexity theorem in symplectic geometry that such a density is a piecewise polynomial function supported on a convex polytope; here we prove a lower bound on the number of continuous derivatives of the density at a point where it is not analytic, and we compute the maximum degree of the local polynomial expressions in the case of quantum marginal problems.  In Section \ref{sec:multiplicities}, we discuss representation-theoretic restriction multiplicities, which are related to projections of orbital measures by a semiclassical limit in the sense of geometric quantization as elaborated by Guillemin--Sternberg \cite{GS}.  In the case of the two-body quantum marginal problem, these multiplicities include the well-known Kronecker coefficients.  Finally, in Section \ref{sec:further}, we discuss some questions for further research, including the possibility of proving large-deviations principles for quantum marginal problems and Kronecker coefficients.

\section{Integral formulae for the densities} \label{sec:formulae}

In this section, we prove integral formulae for the joint spectral distributions of random matrix models related to projections of coadjoint orbits.  In Section \ref{sec:projections-general} we consider the general case of a connected, compact Lie group $G$ and a closed, connected subgroup $H \subset G$.  We then turn to quantum marginal problems, treating separately the cases of distinguishable particles (Section \ref{sec:formula-dist}), indistinguishable bosons (Section \ref{sec:formula-bos}) and indistinguishable fermions (Section \ref{sec:formula-fer}).  In what follows, we assume familiarity with basic notions of Lie theory.  We refer the reader to \cite{Hall} or \cite{Helgason-DLS} for introductions to Lie groups and Lie algebras, and to \cite[\textsection1.3]{McS-HC} for a brief review of definitions.

\subsection{Projections of orbital measures} \label{sec:projections-general}
Let $G$ be a connected, compact Lie group, $H \subset G$ a closed, connected subgroup, and $\gog, \goh$ their respective Lie algebras.  Let $\langle \cdot, \cdot \rangle$ be an $\mathrm{Ad}$-invariant inner product on $\gog$, which we use to identify $\gog \cong \gog^*$ and $\goh \cong \goh^*$, and let $\phi: \gog \to \goh$ be the orthogonal projection.  Write $\mathcal{O}_\lambda \subset \gog$ for the (co)adjoint orbit of $\lambda \in \gog$.  This orbit carries a unique $G$-invariant probability measure, the {\it orbital measure} $\beta_\lambda$.

Given a random element $X \in \mathcal{O}_\lambda$ with distribution $\beta_\lambda$, its projection $\phi(X) \in \goh$ lies in a random (co)adjoint orbit of $H$.  We want to study the resulting distribution on the space of $H$-orbits.

To make this problem concrete, let $\tot \subset \goh$ be a Cartan subalgebra and $\Phi_\goh \subset \tot^*$ the roots of $\goh$ with respect to $\tot$, which we identify with elements of $\tot$ via the inner product.\footnote{Here we take the roots to be real-valued linear functionals on $\tot$, so that their definition differs by a factor of $i$ from the convention usually used in the study of complex Lie algebras.}  Fix a choice $\Phi_\goh^+$ of positive roots, let $\tot_+ \subset \tot$ be the positive Weyl chamber, and let $\psi : \goh \to \tot_+$ be the map that sends each element of $\goh$ to the unique representative of its $H$-orbit in $\tot_+$.  We will derive an integral formula for the pushforward measure $\psi_* \phi_* \beta_\lambda$.

Before stating the formula, we must introduce some more notation.  Let $\tilde \tot \supset \tot$ be a Cartan subalgebra of $\gog$ containing $\tot$, $\Phi_\gog$ the positive roots of $\gog$ with respect to $\tilde \tot$, $\Phi_\gog^+$ a choice of positive roots, and $\tilde \tot_+ \subset \tilde \tot$ the positive Weyl chamber.  Since each $G$-orbit in $\gog$ intersects $\tilde \tot_+$ at a unique point, without loss of generality we may assume $\lambda \in \tilde \tot_+$.  Write
$$\Delta_\goh(x) = \prod_{\alpha \in \Phi^+_\goh} \langle \alpha, x \rangle, \qquad x \in \tot,$$
$$\Delta_\gog(x) = \prod_{\alpha \in \Phi^+_\gog} \langle \alpha, x \rangle, \qquad x \in \tilde \tot$$
for the products of the positive roots of $\goh$ and $\gog$ respectively, and
$$\rho_\goh = \frac{1}{2} \sum_{\alpha \in \Phi^+_\goh} \alpha, \qquad \rho_\gog = \frac{1}{2} \sum_{\alpha \in \Phi^+_\gog} \alpha$$
for the respective Weyl vectors.  Let $W_\goh$, $W_\gog$ be the Weyl groups generated by reflections in the root hyperplanes in $\tot$ and $\tilde \tot$ respectively.  Write $\epsilon(w)$ for the sign of $w \in W_\goh$ or $W_\gog$, that is, $\epsilon(w) = (-1)^{l(w)}$, where $l(w)$ is the number of reflections required to generate $w$.  Let $r$ be the rank of $\goh$, which is equal to $\dim \tot$ or equivalently to $\dim \goh - 2 |\Phi_\goh^+|$.

Each $\alpha \in \Phi_\goh^+$, regarded as a linear functional on $\tot$, is the restriction to $\tot$ of some root of $\gog$.  Accordingly, $\Delta_\goh$ divides the restriction $\Delta_\gog |_\tot$, so that $\Delta_\gog / \Delta_\goh$ is a polynomial on $\tot$, and we can define
\begin{equation} \label{eqn:delta-gh-def}
\Delta_{\gog / \goh}(x) = \frac{\Delta_\gog (x)}{\Delta_\goh(x)} = \prod_{\alpha \in \Phi^+_{\gog / \goh}} \langle \alpha, x \rangle, \qquad x \in \tot
\end{equation}
for some multiset of vectors $\Phi^+_{\gog / \goh} \subset \tot$.\footnote{Like the roots of $\goh$, the linear functionals $x \mapsto \langle \alpha, x \rangle$, $\alpha \in \Phi^+_{\gog / \goh}$ are restrictions to $\tot$ of roots of $\gog$.  They form a positive system of weights for the action of $H$ on $\gog / \goh$.  Elements of $\Phi^+_{\gog / \goh}$ may occur with multiplicity greater than 1, and we always count them with multiplicity.  As an example, for the two-body quantum marginal problem studied below in Section \ref{sec:formula-dist}, we find from (\ref{eqn:delta-rewrite}) that each positive root of $\mathfrak{su}(m)$ occurs with multiplicity $n-1$ and each positive root of $\mathfrak{su}(n)$ occurs with multiplicity $m-1$, in addition to the roots of the form $\big( \xi_j - \xi_p \pm (\zeta_k - \zeta_q) \big )$.}

For the sake of simplicity, we initially assume that $\Phi^+_{\gog/\goh}$ spans $\tot$, that $\Delta_\gog(\lambda) \ne 0$, and that the restriction of $\Delta_\gog$ to $\tot$ is not identically zero.  Under these assumptions, $\phi_* \beta_\lambda$ has a density $q_\lambda$ with respect to Lebesgue measure on $\goh$, and $\psi_* \phi_* \beta_\lambda$ has a density $p_\lambda$ with respect to Lebesgue measure on $\tot_+$.  Below in Remark \ref{rem:unfaithful}, we explain the role of these assumptions and discuss how the results of this subsection must be modified when they do not hold.

We have the following integral formula for the density $p_\lambda$.

\begin{thm} \label{thm:integral-formula-gen}
Let $\lambda \in \tilde \tot_+$ with $\Delta_\gog(\lambda) \ne 0$. Suppose that $\Phi^+_{\gog/\goh}$ spans $\tot$ and that $\Delta_\gog$ is not identically zero on $\tot$.  Then the density of $\psi_* \phi_* \beta_\lambda$ is equal to
\begin{equation} \label{eqn:integral-formula-gen}
p_\lambda(x) = C_{\gog / \goh} \frac{\Delta_\goh(x)}{\Delta_\gog(\lambda)} \int_\tot \frac{\Delta_\goh(\xi)}{\Delta_\gog(\xi)}  \sum_{w \in W_\gog} \epsilon(w) \, e^{i \langle \xi, w(\lambda) - x \rangle} \, d\xi
\end{equation}
for $x \in \tot_+$, where $d\xi$ is the Lebesgue measure on $\tot$ induced by the inner product, and
\begin{equation} \label{eqn:integral-constant-gen}
C_{\gog / \goh} = \frac{ (-i)^{|\Phi^+_\gog| - |\Phi^+_\goh|}}{(2\pi)^{r}} \frac{\Delta_\gog(\rho_\gog)}{\Delta_\goh(\rho_\goh)}.
\end{equation}
\end{thm}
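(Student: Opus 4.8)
The plan is to compute the density $p_\lambda$ by Fourier inversion, pushing the orbital measure through the two projections $\phi$ and $\psi$ and identifying the resulting Fourier transform with a known special function. The key input is the Harish-Chandra–Itzykson–Zuber formula, which gives the Fourier transform of the orbital measure $\beta_\lambda$ on $\gog$ in closed form: for $\eta \in \tilde\tot$,
\[
\widehat{\beta_\lambda}(\eta) = \int_{\mathcal{O}_\lambda} e^{i\langle X, \eta\rangle}\, d\beta_\lambda(X) = c_\gog \, \frac{\sum_{w \in W_\gog} \epsilon(w)\, e^{i\langle w(\lambda), \eta\rangle}}{\Delta_\gog(\lambda)\,\Delta_\gog(\eta)},
\]
with $c_\gog = \Delta_\gog(\rho_\gog) \cdot (\text{power of } i)$ coming from the standard normalization $\prod_{\alpha \in \Phi^+_\gog}\langle\alpha,\rho_\gog\rangle^{-1}$ appearing in HCIZ (the precise constant is what ultimately produces $C_{\gog/\goh}$). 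Since $\phi$ is orthogonal projection, $\widehat{\phi_*\beta_\lambda}(\xi) = \widehat{\beta_\lambda}(\xi)$ for $\xi \in \goh$, and because $\widehat{\beta_\lambda}$ restricted to $\goh$ only depends on $\xi$ through its $\tot$-component up to the $W_\goh$-action (by $H$-invariance of $\phi_*\beta_\lambda$), it suffices to evaluate at $\xi \in \tot$, where we may replace $\Delta_\gog(\xi)$ by its restriction $\Delta_\gog|_\tot(\xi)$.

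Next I would Fourier-invert on $\goh$ to get $q_\lambda$, then relate $q_\lambda$ to $p_\lambda$ via the Weyl integration formula on $\goh$: the pushforward under $\psi$ of a density $q_\lambda$ that is $\operatorname{Ad}$-invariant contributes the Jacobian factor $\operatorname{const} \cdot \Delta_\goh(x)^2$ (the volume of the $H$-orbit through $x$), but since we also want the density with respect to Lebesgue measure on $\tot_+$ rather than on the orbit space, we must be careful: the net effect is that $p_\lambda(x) = (\text{const})\,\Delta_\goh(x)^2\, q_\lambda(x)$ up to the appropriate normalization, and one of the two factors of $\Delta_\goh$ gets absorbed when we write the $\goh$-Fourier integral as a $\tot$-integral (again by Weyl integration, which converts $\int_\goh f(\xi)\,d\xi$ for radial $f$ into $\operatorname{const}\int_\tot f(\xi)\Delta_\goh(\xi)^2\,d\xi$, and one of those $\Delta_\goh(\xi)$ factors cancels against $\Delta_\gog(\xi) = \Delta_\goh(\xi)\,\Delta_{\gog/\goh}(\xi)$ in the denominator). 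Carrying this bookkeeping through yields exactly
\[
p_\lambda(x) = C_{\gog/\goh}\,\frac{\Delta_\goh(x)}{\Delta_\gog(\lambda)}\int_\tot \frac{\Delta_\goh(\xi)}{\Delta_\gog(\xi)}\sum_{w\in W_\gog}\epsilon(w)\,e^{i\langle\xi, w(\lambda)-x\rangle}\,d\xi,
\]
with the constants from HCIZ and the two Weyl integration formulae combining into $C_{\gog/\goh}$ as in \eqref{eqn:integral-constant-gen}.

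The main obstacle is justifying convergence and the interchange of integrals: the integrand $\Delta_\goh(\xi)/\Delta_\gog(\xi)$ decays only like $|\xi|^{-|\Phi^+_{\gog/\goh}|}$, which is not absolutely integrable in general, so the $\tot$-integral must be interpreted as an oscillatory/distributional integral (a tempered-distribution Fourier transform), and one must check that the alternating sum over $W_\gog$ improves the decay or that the singularities along root hyperplanes of $\Delta_\gog|_\tot$ are removable after summing — this is where the hypotheses $\Delta_\gog(\lambda)\neq 0$, $\Delta_\gog|_\tot \not\equiv 0$, and "$\Phi^+_{\gog/\goh}$ spans $\tot$" are used (the spanning condition guarantees $\phi_*\beta_\lambda$ is genuinely absolutely continuous so that $q_\lambda$ and hence $p_\lambda$ exist as functions). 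I would handle this either by a regularization argument (insert $e^{-\varepsilon|\xi|^2}$, compute, take $\varepsilon\to 0$) or by invoking the known piecewise-polynomial structure of $p_\lambda$ from Duistermaat–Heckman to argue the Fourier integral converges in the sense of distributions to a locally $L^1$ function; a secondary technical point is verifying that the HCIZ normalization constant is correctly tracked through the Weyl integration formulae, which is routine but error-prone.
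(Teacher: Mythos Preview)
Your overall strategy---compute the Fourier transform of $\phi_*\beta_\lambda$ via Harish-Chandra, invert over $\goh$, and pass to $\tot_+$ using Weyl integration---is exactly the paper's approach. But your bookkeeping has a gap that would leave you off by a factor of $\Delta_\goh(x)$.

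The issue is in how you reduce the $\goh$-Fourier inversion to a $\tot$-integral. You write that Weyl integration ``converts $\int_\goh f(\xi)\,d\xi$ for radial $f$ into $\text{const}\int_\tot f(\xi)\Delta_\goh(\xi)^2\,d\xi$,'' but the integrand $e^{-i\langle x,\xi\rangle}\,\widehat{\phi_*\beta_\lambda}(\xi)$ is \emph{not} radial in $\xi$ once $x \neq 0$. What the paper does (and what you are missing) is: after passing to polar coordinates on $\goh$, the angular part of the inverse Fourier transform is itself a Harish-Chandra integral for the subgroup $H$, namely $\mathcal{H}_H(x,-i\xi)=\int_H e^{-i\langle x,\mathrm{Ad}_h\xi\rangle}\,dh$. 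Applying the Harish-Chandra formula \emph{a second time}, now for $H$, produces a factor $\Delta_\goh(\rho_\goh)/(\Delta_\goh(x)\Delta_\goh(-i\xi))$ together with a signed sum over $W_\goh$. This is the mechanism that cancels one of the two $\Delta_\goh(\xi)$'s from the Weyl Jacobian \emph{and} one of the two $\Delta_\goh(x)$'s from the pushforward Jacobian, yielding the single powers $\Delta_\goh(x)$ and $\Delta_\goh(\xi)$ in the final formula. Your claimed mechanism---that one $\Delta_\goh(\xi)$ cancels against $\Delta_\gog(\xi)=\Delta_\goh(\xi)\Delta_{\gog/\goh}(\xi)$---accounts for the $\xi$-side but not the $x$-side, so as written your computation would give $\Delta_\goh(x)^2$ rather than $\Delta_\goh(x)$. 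After the second Harish-Chandra application one then collapses the $W_\goh$-sum using skew-symmetry, and a third application (for $G$) turns $\mathcal{H}_G(\lambda,i\xi)$ into the explicit $W_\gog$-sum, giving the stated formula with the correct constant.
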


\begin{remark} \label{rem:integral-convergence}
The integrand in (\ref{eqn:integral-formula-gen}) is an analytic function of $\xi$.  Although it may appear to diverge at points where $\Delta_\gog(\xi)$ vanishes, in fact at all such points the sum over $W_\gog$ vanishes as well, canceling the ostensible singularities.  However, for some choices of $G$ and $H$ the integral may not be absolutely convergent.  For example, Coquereaux and Zuber \cite{CZ1} observed that this occurs when $G = \SU(2) \times \SU(2)$ and $H$ is the diagonal subgroup.  In such cases, the formula (\ref{eqn:integral-formula-gen}) still holds, provided we interpret the integral as a Cauchy principal value.
\end{remark}

Before proving Theorem \ref{thm:integral-formula-gen}, we recall a well-known integral formula due to Harish-Chandra \cite{HC}:
\begin{equation} \label{eqn:hc-integral}
\int_G e^{\langle \mathrm{Ad}_g \lambda, \xi \rangle} dg = \frac{\Delta_\gog(\rho_\gog)}{\Delta_\gog(\lambda) \Delta_\gog(\xi)} \sum_{w \in W_\gog} \epsilon(w) \, e^{\langle w(\lambda), \xi \rangle}, \qquad \lambda, \xi \in \tilde \tot_\C,
\end{equation}
where $dg$ is the normalized Haar measure on $G$ and $\tilde \tot_\C$ is the complexification of $\tilde \tot$.  Similarly to the integrand in (\ref{eqn:integral-formula-gen}), although the right-hand side of (\ref{eqn:hc-integral}) is ostensibly singular as $\Delta_\gog(\lambda)$ or $\Delta_\gog(\xi)$ tends to zero, at all such values of $\lambda$ and $\xi$ the sum over $W_\gog$ vanishes to the same order as the denominator.  Accordingly these ostensible singularities are removable, and (\ref{eqn:hc-integral}) in fact defines a holomorphic function of $(\lambda, \xi) \in \tilde \tot_\C^2$.  When $G$ is the group $\U(N)$ of $N$-by-$N$ unitary matrices, (\ref{eqn:hc-integral}) gives the famous Harish-Chandra--Itzykson--Zuber (HCIZ) formula \cite{IZ}:
\begin{equation} \label{eqn:hciz} \int_{\U(N)} e^{\Tr (AUBU^\dagger)} dU = \left( \prod_{l=1}^{N-1}l! \right) \frac{\det \big[ e^{a_j b_k} \big]_{j,k = 1}^N}{\Delta_N(a) \Delta_N(b)}, \end{equation}
where $A$ and $B$ are two $N$-by-$N$ complex matrices with eigenvalues $a_1, \hdots, a_N$ and $b_1, \hdots, b_N$ respectively, $\Delta_N(a) = \prod_{j < k} (a_j - a_k)$ is the Vandermonde determinant, and $dU$ is the normalized Haar measure on $\U(N)$.  See \cite{McS-HC} for a detailed expository treatment of the formulae (\ref{eqn:hc-integral}) and (\ref{eqn:hciz}).

In what follows, we will write
\begin{equation} \label{eqn:integral-notation}
\mathcal{H}_G(\lambda, \xi) = \int_G e^{\langle \mathrm{Ad}_g \lambda, \xi \rangle} dg, \qquad \mathcal{H}_H(\lambda, \xi) = \int_H e^{\langle \mathrm{Ad}_h \lambda, \xi \rangle} dh, \qquad \lambda, \xi \in \tilde \tot_\C,
\end{equation}
where $dg$ and $dh$ are the respective normalized Haar measures on $G$ and $H$.  We use the probabilist's sign and normalization conventions for the Fourier transform of a measure $\mu$ on a Euclidean space $V$:
$$\mathscr{F}_V[\mu](\xi) = \int_V e^{i \langle x, \xi \rangle} \mu(dx).$$
When the space $V$ under consideration is clear, we omit the subscript.

\begin{proof}[Proof of Theorem \ref{thm:integral-formula-gen}]
The $G$-invariance of $\beta_\lambda$ implies that $\phi_* \beta_\lambda$ is $H$-invariant.  Accordingly, $p_\lambda$ is equal to the restriction of $q_\lambda$ to $\tot_+$ multiplied by the Jacobian of the map $\psi$.  By the Weyl integration formula, this Jacobian is equal to
\begin{equation} \label{eqn:psi-jacobian}
\kappa_\goh \Delta_\goh(x)^2, \qquad x \in \tot_+,
\end{equation}
where the normalizing constant $\kappa_\goh = (2\pi)^{|\Phi^+_\goh|} / \Delta_\goh(\rho_\goh)$ can be determined as described in \cite[\textsection1.1]{CMZ}.  Thus it suffices to compute $q_\lambda$, which we do by taking the Fourier transform of $\phi_*\beta_\lambda$ over $\goh$ and then the inverse Fourier transform: $q_\lambda(x) = \mathscr{F}^{-1}_\goh [\mathscr{F}_\goh [\phi_* \beta_\lambda]](x).$  First we compute
$$\mathscr{F}_\goh [\phi_* \beta_\lambda](\xi) = \int_\goh e^{i \langle x, \xi \rangle} \, \phi_* \beta_\lambda(dx) = \int_{\mathcal{O_\lambda}} e^{i \langle X, \xi \rangle} \, \beta_\lambda(dX) = \int_G e^{i \langle \mathrm{Ad}_g \lambda, \xi \rangle} dg.$$
Then taking the inverse Fourier transform, we find
\begin{align*}
q_\lambda(x) &= \frac{1}{(2\pi)^{\dim \goh}} \int_\goh e^{-i \langle x, \xi \rangle} \int_G e^{i \langle \mathrm{Ad}_g \lambda, \xi \rangle} dg \, d\xi \\
&= \frac{1}{(2\pi)^{\dim \goh}} \int_{\tot_+} \kappa_\goh \Delta_\goh(\xi)^2 \int_H e^{-i \langle x, \mathrm{Ad}_h \xi \rangle} \int_G e^{i \langle \mathrm{Ad}_g \lambda, \mathrm{Ad}_h \xi \rangle} dg \, dh \, d\xi \\
&= \frac{\kappa_\goh}{(2\pi)^{\dim \goh}} \int_{\tot_+} \Delta_\goh(\xi)^2 \int_H e^{-i \langle \mathrm{Ad}_h x, \xi \rangle} dh \int_G e^{i \langle \mathrm{Ad}_g \lambda, \xi \rangle} dg \, d\xi \\
&= \frac{\kappa_\goh}{(2\pi)^{\dim \goh} \, |W_\goh|} \int_{\tot} \Delta_\goh(\xi)^2 \, \mathcal{H}_G(\lambda, i \xi) \, \mathcal{H}_H(x, -i \xi) \, d\xi,
\end{align*}
where in the second line we have split the integral over $\goh$ into a radial integral over $\tot_+$ and an angular integral over $H$, and used the formula (\ref{eqn:psi-jacobian}) for the Jacobian of the map $\psi$, which is just the projection onto the radial coordinates.  In the third line, we have used the fact that the inner integrals over $G$ and $H$ are $H$-invariant in $x$, $\lambda$ and $\xi$, and in the fourth line we have used the fact that the entire integrand is $W_\goh$-invariant in $\xi$.

Next, restricting to $x \in \tot_+$ and multiplying by $\kappa_\goh \Delta_\goh(x)^2$, we find
\begin{equation} \label{eqn:p-orbital-integrals}
p_\lambda(x) = \frac{1}{(2\pi)^{r} \, |W_\goh|} \frac{\Delta_\goh(x)^2}{\Delta_\goh(\rho_\goh)^2} \int_{\tot} \Delta_\goh(\xi)^2 \, \mathcal{H}_H(x, -i \xi) \, \mathcal{H}_G(\lambda, i \xi) \, d\xi, \qquad x \in \tot_+,
\end{equation}
which is an interesting identity in its own right and essentially expresses $p_\lambda$ as an inverse Dunkl transform\footnote{See \cite{AnkerDunklNotes} for details on Dunkl transforms.} of $\mathcal{H}_G$. Then applying the Harish-Chandra formula (\ref{eqn:hc-integral}) to the integral over $H$, we obtain
\begin{multline*}
p_\lambda(x) =  \frac{1}{(2\pi)^{r} \, |W_\goh|} \frac{\Delta_\goh(x)^2}{\Delta_\goh(\rho_\goh)^2} \int_{\tot} \Delta_\goh(\xi)^2 \\
\times \bigg( \frac{\Delta_\goh(\rho_\goh)}{\Delta_\goh(x) \Delta_\goh(-i \xi)} \sum_{w \in W_\goh} \epsilon(w) \, e^{-i \langle w(x), \xi \rangle} \bigg) \mathcal{H}_G(\lambda, i \xi) \, d\xi, \qquad x \in \tot_+.
\end{multline*}
After cancelling redundant factors of $\Delta_\goh$ and using the fact that $\Delta_\goh$ is homogeneous of degree $|\Phi_\goh^+|$ to pull out factors of $i$, this becomes
\begin{align*}
p_\lambda(x) &=  \frac{i^{|\Phi_\goh^+|}}{(2\pi)^{r} \, |W_\goh|} \frac{\Delta_\goh(x)}{\Delta_\goh(\rho_\goh)} \int_{\tot} \Delta_\goh(\xi) \bigg( \sum_{w \in W_\goh} \epsilon(w) \, e^{-i \langle w(x), \xi \rangle} \bigg) \mathcal{H}_G(\lambda, i \xi) \, d\xi \\
&= \frac{i^{|\Phi_\goh^+|}}{(2\pi)^{r} \, |W_\goh|} \frac{\Delta_\goh(x)}{\Delta_\goh(\rho_\goh)} \sum_{w \in W_\goh} \epsilon(w) \int_{\tot} e^{-i \langle x, w(\xi) \rangle} \, \Delta_\goh(\xi) \, \mathcal{H}_G(\lambda, i \xi) \, d\xi.
\end{align*}
We next eliminate the sum over $W_\goh$ by changing integration variables from $\xi$ to $w(\xi)$ and using the following facts: the Lebesgue measure $d\xi$ is $W_\goh$-invariant, the orbital integral $\mathcal{H}_G(\lambda, i \xi)$ is $G$-invariant (and therefore $W_\goh$-invariant) in both arguments, the polynomial $\Delta_\goh$ is skew with respect to the action of $W$ (that is, $\Delta_\goh(w(\xi)) = \epsilon(w) \Delta_\goh(\xi)$), and $\epsilon(w^{-1}) = \epsilon(w)$.  Thus we obtain
\begin{align}
\label{eqn:p-almost-FT} p_\lambda(x) &= \frac{i^{|\Phi_\goh^+|}}{(2\pi)^{r} \, |W_\goh|} \frac{\Delta_\goh(x)}{\Delta_\goh(\rho_\goh)} \sum_{w \in W_\goh} \epsilon(w) \int_{\tot} e^{-i \langle x, \xi \rangle} \, \Delta_\goh(w^{-1}(\xi)) \, \mathcal{H}_G(\lambda, i w^{-1}(\xi)) \, d \xi \\
\nonumber &= \frac{i^{|\Phi_\goh^+|}}{(2\pi)^{r}} \frac{\Delta_\goh(x)}{\Delta_\goh(\rho_\goh)} \int_{\tot} e^{-i \langle x, \xi \rangle} \, \Delta_\goh(\xi) \, \mathcal{H}_G(\lambda, i \xi) \, d\xi.
\end{align}
Another application of (\ref{eqn:hc-integral}) to the integral over $G$ then yields (\ref{eqn:integral-formula-gen}).
\end{proof}

\begin{remark} \label{rem:unfaithful}
So far we have assumed that $\Phi^+_{\gog/\goh}$ spans $\tot$, that $\Delta_\gog(\lambda) \not = 0$, and that $\Delta_\gog$ is not identically equal to zero on $\tot$.  As we now explain, these assumptions guarantee that the density $p_\lambda$ actually exists and that the expression (\ref{eqn:integral-formula-gen}) is well defined.  All three assumptions are easily removed, at the cost of somewhat complicating the statement of Theorem \ref{thm:integral-formula-gen}.

First, suppose that $\psi_* \phi_* \beta_\lambda$ does indeed have a density on $\tot_+$ but that $\Delta_\gog(\lambda) = 0$ or $\Delta_\gog |_\tot \equiv 0$.  In this case the formula (\ref{eqn:p-almost-FT}) still holds, however we cannot apply (\ref{eqn:hc-integral}) directly to obtain (\ref{eqn:integral-formula-gen}) because the resulting expression is indeterminate.  Instead we must take a limit in (\ref{eqn:hc-integral}) and apply L'H\^opital's rule.  For example, we can write $\lambda_s = \lambda + s \rho_\gog$, $\xi_t = \xi + t \rho_\gog$; then $\Delta_\gog(\lambda_s) \ne 0$ and $\Delta_\gog(\xi_t) \ne 0$ for small $s, t > 0$.  Write $\partial_{\rho_\gog} \Delta_\gog(x) = \frac{d}{dt} \Delta_\gog(x + t \rho_\gog) |_{t = 0}.$ Then for some $j, k \ge 0$ we have
\begin{align}
\label{eqn:hc-hopital}
\mathcal{H}_G(\lambda, i\xi) &= \lim_{s,t \to 0} \frac{(-i)^{|\Phi^+_\gog|} \Delta_\gog(\rho_\gog)}{\Delta_\gog(\lambda_s) \Delta_\gog(\xi_t)} \sum_{w \in W_\gog} \epsilon(w) \, e^{i \langle w(\lambda_s), \xi_t \rangle} \\
\nonumber &= \frac{(-i)^{|\Phi^+_\gog|-j-k} \Delta_\gog(\rho_\gog)}{\partial_{\rho_\gog}^j \Delta_\gog(\lambda) \, \partial_{\rho_\gog}^k \Delta_\gog(\xi)} \sum_{w \in W_\gog} \epsilon(w) \, \langle w(\rho_\gog), \xi \rangle^j \, \langle w(\lambda), \rho_\gog \rangle^k \, e^{\langle w(\lambda), \xi \rangle},
\end{align}
where we have applied L'H\^opital's rule $j$ times in the limit as $s \to 0$ and $k$ times in the limit as $t \to 0$.  In place of (\ref{eqn:integral-formula-gen}), we then find
\begin{equation} \label{eqn:integral-formula-hopital}
p_\lambda(x) = C_{\gog / \goh} \frac{i^{j+k} \Delta_\goh(x)}{\partial_{\rho_\gog}^j \Delta_\gog(\lambda)} \int_\tot \frac{\Delta_\goh(\xi)}{\partial_{\rho_\gog}^k \Delta_\gog(\xi)}  \sum_{w \in W_\gog} \epsilon(w) \, \langle w(\rho_\gog), \xi \rangle^j \, \langle w(\lambda), \rho_\gog \rangle^k \, e^{i \langle \xi, w(\lambda) - x \rangle} \, d\xi.
\end{equation}

Second, we consider the existence of the density $p_\lambda$.  It follows from well-known properties of moment maps in symplectic geometry (see e.g.~\cite[ch.~1]{VG-comb}) that $\psi_* \phi_* \beta_\lambda$ has a density on $\tot_+$ if and only if $H$ acts locally freely at some point of $\mathcal{O}_\lambda$.  However, for our purposes a more practical criterion is that the density $p_\lambda$ exists as long as the integral in (\ref{eqn:p-almost-FT}) converges for all $x \in \tot_+$, possibly after applying a principal value prescription.  Under the other assumptions of Theorem \ref{thm:integral-formula-gen}, the requirement that $\Phi^+_{\gog / \goh}$ span $\tot$ ensures that the factor $\Delta_\goh(\xi) / \Delta_\gog(\xi) = \Delta_{\gog / \goh}(\xi)^{-1}$ in (\ref{eqn:integral-formula-gen}) decays quickly enough at infinity to guarantee this convergence.  (See Theorem \ref{thm:density-regularity} below for a more detailed statement relating the decay of $\Delta_{\gog / \goh}^{-1}$ and the regularity of $p_\lambda$.)

If the assumptions of Theorem \ref{thm:integral-formula-gen} do not hold, then in general $\psi_* \phi_* \beta_\lambda$ may be concentrated on a subset of positive codimension in $\tot$.  However, it will still have a density with respect to Lebesgue measure on the affine span of its support,\footnote{Concretely, $\psi_* \phi_* \beta_\lambda$ is concentrated on an affine subspace orthogonal to the Lie algebra of the subgroup of the maximal torus $\exp(\tot) \subset H$ that acts trivially on the subset $\phi^{-1}(\tot_+) \subset \mathcal{O}_\lambda$.} and in such cases a distributional analogue of Theorem \ref{thm:integral-formula-gen} still holds (see Corollary \ref{cor:pushfwd-FT} below).  Moreover, one can often use (\ref{eqn:integral-formula-gen}) or (\ref{eqn:integral-formula-hopital}) to derive a concrete integral formula for the density on this affine span.

We give two contrasting examples to illustrate how this may be done.  If we take $H = G$, then $\Delta_{\gog / \goh}^{-1} \equiv 1$.  The integral in (\ref{eqn:integral-formula-gen}) diverges when $x = \lambda$ and vanishes everywhere else on $\tot_+$.  Nevertheless, if we interpret (\ref{eqn:integral-formula-gen}) as a distributional identity for the measure $\psi_* \phi_* \beta_\lambda$, we correctly find
$$\psi_* \phi_* \beta_\lambda(dx) = \frac{1}{(2\pi)^{r}} \frac{\Delta_\gog(x)}{\Delta_\gog(\lambda)} \int_\tot \sum_{w \in W_\gog} \epsilon(w) \, e^{i \langle \xi, w(\lambda) - x \rangle} \, d\xi \bigg |_{x \in \tot_+} = \delta(x - \lambda),$$
a point mass concentrated at $\lambda$.  As a second example, suppose that there is a nontrivial subspace $V \subset \tot$ that is invariant under the action of $W_\gog$, as occurs in the quantum marginal problem studied in Section \ref{sec:formula-dist} below.  Then $V$ is orthogonal to the span of $\Phi^+_{\gog / \goh}$, and $\Delta_\gog$ and $\Delta_\goh$ are both invariant under translations by elements of $V$. Let $V^\perp$ be the orthogonal complement of $V$ in $\tot$. Suppose $\Phi^+_{\gog / \goh}$ spans $V^\perp$, and for $\xi \in \tot$, write $\xi = \xi_V + \xi_\perp$, with $\xi_V \in V$ and $\xi_\perp \in V^\perp$.  Then in place of (\ref{eqn:integral-formula-gen}), we have
\begin{align*}
\psi_* \phi_* \beta_\lambda(dx) &= C_{\gog / \goh} \frac{\Delta_\goh(x)}{\Delta_\gog(\lambda)} \int_{V^\perp} \int_V \frac{\Delta_\goh(\xi_V + \xi_\perp)}{\Delta_\gog(\xi_V + \xi_\perp)}  \sum_{w \in W_\gog} \epsilon(w) \, e^{i \langle \xi_V + \xi_\perp, w(\lambda) - x \rangle} \, d\xi_V \, d \xi_\perp \bigg |_{x \in \tot_+} \\
&= C_{\gog / \goh} \frac{\Delta_\goh(x)}{\Delta_\gog(\lambda)}  \int_V e^{i \langle \xi_V, \lambda - x \rangle} d\xi_V \int_{V^\perp}  \frac{\Delta_\goh(\xi_\perp)}{\Delta_\gog(\xi_\perp)}  \sum_{w \in W_\gog} \epsilon(w) \, e^{i \langle \xi_\perp, w(\lambda) - x \rangle} \, d \xi_\perp \bigg |_{x \in \tot_+},
\end{align*}
where the integral over $V$ in the second line is just a formal expression for the distribution $(2\pi)^{\dim V} \delta(x_V - \lambda_V)$. Thus we find that $\psi_* \phi_* \beta_\lambda$ has the density
$$ (2\pi)^{\dim V} C_{\gog / \goh} \frac{\Delta_\goh(x)}{\Delta_\gog(\lambda)} \int_{V^\perp}  \frac{\Delta_\goh(\xi_\perp)}{\Delta_\gog(\xi_\perp)}  \sum_{w \in W_\gog} \epsilon(w) \, e^{i \langle \xi_\perp, w(\lambda) - x \rangle} \, d \xi_\perp $$
with respect to Lebesgue measure on $(\lambda_V + V^\perp) \cap \tot_+$.

In this way, by applying L'H\^opital's rule and passing to a lower-dimensional affine subspace if necessary, we can relax the assumptions of Theorem \ref{thm:integral-formula-gen}.
\end{remark}

If a function $f : \tot \to \C$ is locally integrable and has at most polynomial growth at infinity, then its inverse Fourier transform
$$\mathscr{F}_\tot^{-1}[f](x) = \frac{1}{(2\pi)^{r}} \int_\tot f(\xi) \, e^{-i \langle x, \xi \rangle} d\xi$$
exists as a tempered distribution.  To end this subsection, we note that (\ref{eqn:p-almost-FT}) can be reformulated as the following compact expression for $\psi_* \phi_* \beta_\lambda$, which holds in a distributional sense without any assumptions on $\Phi^+_{\gog / \goh}$, $\Delta_\gog$ or $\lambda$.  

\begin{cor} \label{cor:pushfwd-FT}
\begin{equation} \label{eqn:pushfwd-FT}
\psi_* \phi_* \beta_\lambda(dx) = \frac{\Delta_\goh(x)}{\Delta_\goh(\rho_\goh)} \mathscr{F}_\tot^{-1}\big[ \Delta_\goh( i \bullet ) \, \mathcal{H}_G(\lambda, i \bullet ) \big] \Big |_{\tot_+}.
\end{equation}
\end{cor}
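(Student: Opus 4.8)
The plan is to observe that (\ref{eqn:pushfwd-FT}) is simply equation (\ref{eqn:p-almost-FT}) rewritten, together with the assertion that (\ref{eqn:p-almost-FT}) survives --- now as an identity of distributions on (the interior of) the chamber $\tot_+$ --- when the three genuineness hypotheses of Theorem~\ref{thm:integral-formula-gen} are dropped. Indeed, since $\Delta_\goh$ is homogeneous of degree $|\Phi_\goh^+|$ we have $i^{|\Phi_\goh^+|}\,\Delta_\goh(\xi) = \Delta_\goh(i\xi)$, so the right-hand side of (\ref{eqn:p-almost-FT}) is precisely $\tfrac{\Delta_\goh(x)}{\Delta_\goh(\rho_\goh)}\,\mathscr{F}^{-1}_\tot[\Delta_\goh(i\bullet)\,\mathcal{H}_G(\lambda,i\bullet)](x)$. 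The function $\Delta_\goh(i\bullet)\,\mathcal{H}_G(\lambda,i\bullet)$ is the product of a polynomial with $\mathcal{H}_G(\lambda,i\bullet) = \mathscr{F}_\goh[\phi_*\beta_\lambda]$, which is continuous and bounded by $1$; hence it is locally integrable with polynomial growth, and by the remark preceding the corollary its inverse Fourier transform is a well-defined tempered distribution on $\tot$, so the right-hand side of (\ref{eqn:pushfwd-FT}) makes sense with no hypotheses at all.

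It then remains to re-derive (\ref{eqn:p-almost-FT}) distributionally, which I would do by pairing both sides of (\ref{eqn:pushfwd-FT}) against an arbitrary $f \in C_c^\infty(\mathrm{int}\,\tot_+)$ and retracing the steps of the proof of Theorem~\ref{thm:integral-formula-gen}. On the left, $\langle \psi_*\phi_*\beta_\lambda, f\rangle = \int_G f\big(\psi(\phi(\mathrm{Ad}_g\lambda))\big)\,dg = \langle \phi_*\beta_\lambda,\, f\circ\psi\rangle$, where $f\circ\psi \in C_c^\infty(\goh)$: it is $H$-invariant, its support $\psi^{-1}(\mathrm{supp}\,f)$ is a compact union of regular $H$-orbits, and $\psi$ is smooth on the regular set $\psi^{-1}(\mathrm{int}\,\tot_+)$, off which $f\circ\psi$ vanishes identically. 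Using the unconditional identity $\phi_*\beta_\lambda = \mathscr{F}^{-1}_\goh[\mathcal{H}_G(\lambda,i\bullet)]$ (Fourier inversion for the finite measure $\phi_*\beta_\lambda$, whose Fourier transform was computed at the start of the proof of Theorem~\ref{thm:integral-formula-gen}) and the fact that $\mathscr{F}^{-1}_\goh[f\circ\psi]$ is Schwartz, the pairing becomes the absolutely convergent integral $\int_\goh \mathcal{H}_G(\lambda,i\xi)\,\mathscr{F}^{-1}_\goh[f\circ\psi](\xi)\,d\xi$. Both factors are $H$-invariant, so I can split this over $\tot$ by the Weyl integration formula, evaluate $\mathscr{F}^{-1}_\goh[f\circ\psi]$ on $\tot$ via the Harish-Chandra formula (\ref{eqn:hc-integral}) applied to the \emph{fixed} subgroup $H$ --- legitimate with no genuineness assumption, since $\Delta_\goh > 0$ on $\mathrm{supp}\,f$ --- and collapse the resulting sum over $W_\goh$ by the change of variables used in the theorem's proof. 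This is the same chain of manipulations that produced (\ref{eqn:p-almost-FT}); at no point does it use that $\Phi^+_{\gog/\goh}$ spans $\tot$, that $\Delta_\gog(\lambda)\neq 0$, or that $\Delta_\gog|_\tot\not\equiv 0$ --- those hypotheses are invoked only afterward, to apply (\ref{eqn:hc-integral}) over $G$ and pass to the explicit integrand of (\ref{eqn:integral-formula-gen}). The outcome is $\langle\psi_*\phi_*\beta_\lambda,f\rangle = \tfrac{1}{\Delta_\goh(\rho_\goh)}\big\langle \mathscr{F}^{-1}_\tot[\Delta_\goh(i\bullet)\mathcal{H}_G(\lambda,i\bullet)],\, \Delta_\goh\, f\big\rangle$, which is exactly the pairing of the right-hand side of (\ref{eqn:pushfwd-FT}) with $f$; since $f$ is arbitrary, the corollary follows.

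The main obstacle is making the radial/angular reduction rigorous in this distributional setting. In the proof of Theorem~\ref{thm:integral-formula-gen} that step is performed under the standing assumption that $q_\lambda = \phi_*\beta_\lambda$ has a locally integrable density and hence so does $p_\lambda$, whereas here $\phi_*\beta_\lambda$ may be supported on a positive-codimension subset of $\goh$ and $\psi_*\phi_*\beta_\lambda$ need not have a density anywhere on $\tot_+$. The remedy is to never form $q_\lambda$ or $p_\lambda$ as functions: one works throughout with the tempered distribution $\phi_*\beta_\lambda$, tests against $f\circ\psi \in C_c^\infty(\goh)$, and uses that $\mathscr{F}^{-1}_\goh[f\circ\psi]$ is Schwartz to justify every interchange of integration and to absorb the polynomial weights $\Delta_\goh$, $\Delta_\goh^2$; since $\mathcal{H}_G(\lambda,i\bullet)$ is only bounded, it is essential that the test function supply all the decay. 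A minor residual point is the meaning of the restriction $|_{\tot_+}$ when $\psi_*\phi_*\beta_\lambda$ places mass on the walls $\partial\tot_+$ (as in the point-mass example of Remark~\ref{rem:unfaithful} with $\lambda$ singular): this is handled most cleanly by reading (\ref{eqn:pushfwd-FT}) on $\mathrm{int}\,\tot_+$ and recovering the full measure as $|W_\goh|^{-1}$ times the pushforward from $\tot$ of the $W_\goh$-invariant distribution $\tfrac{\Delta_\goh}{\Delta_\goh(\rho_\goh)}\mathscr{F}^{-1}_\tot[\Delta_\goh(i\bullet)\mathcal{H}_G(\lambda,i\bullet)]$.
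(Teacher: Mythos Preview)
Your approach is correct and matches the paper's: the corollary is presented there simply as a restatement of (\ref{eqn:p-almost-FT}), with the remark that it holds distributionally without the three hypotheses of Theorem~\ref{thm:integral-formula-gen}. The paper gives no further proof, so your careful distributional justification---pairing with $f\in C_c^\infty(\mathrm{int}\,\tot_+)$, using that $f\circ\psi$ is Schwartz, and noting that the offending hypotheses enter only when applying (\ref{eqn:hc-integral}) to $G$ after (\ref{eqn:p-almost-FT})---supplies exactly the details the paper omits.
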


Applying a standard Fourier transform identity to (\ref{eqn:pushfwd-FT}), we find that the argument above gives an alternate proof of the following ``derivative principle'':
\begin{equation} \label{eqn:deriv-princ}
\psi_* \phi_* \beta_\lambda(dx) = \frac{\Delta_\goh(x)}{\Delta_\goh(\rho_\goh)}  \Delta_\goh( -\partial ) \mathscr{F}_\tot^{-1}\big[ \mathcal{H}_G(\lambda, i \bullet ) \big] \Big |_{\tot_+},
\end{equation}
where
$$\Delta_\goh( -\partial ) = \prod_{\alpha \in \Phi^+_\goh}(-\partial_\alpha)$$
and $\partial_\alpha$ is the derivative in the direction of $\alpha$, i.e. $\partial_\alpha f(x) = \frac{d}{dt} f(x + t \alpha)|_{t=0}$. The distributional identity (\ref{eqn:deriv-princ}) is essentially equivalent to \cite[Lemma 6.1(c)]{Heck} and was also proved in \cite[Corollary 3.3]{CDKW}.  Beyond the context of Lie algebras, analogous derivative principles for certain other symmetric spaces have been studied in \cite{KZ-deriv}.

\subsection{Distinguishable particles} \label{sec:formula-dist}

We now apply the above ideas to the quantum marginal problem for distinguishable particles, starting with the two-body case described in the introduction.  Although the main result of this section, Corollary \ref{cor:density-formula} below, can be deduced from Theorem \ref{thm:integral-formula-gen}, here we derive the formula directly without using any Lie theory, in order to provide a more concrete illustration of the abstract considerations in Section \ref{sec:projections-general}.  In Remark \ref{rem:distinguishable-lietheory} we explain how the setup below relates to the setup of Theorem \ref{thm:integral-formula-gen}.

Let $m, n \ge 2$.\footnote{Note that when $m=1$ or $n=1$, the two-body quantum marginal problem is trivial: when $m=1$, for example, we have $\pi_1(M) = \Tr M$, $\pi_2(M) = M$.}  Given $\lambda = (\lambda_{11}, \hdots, \lambda_{mn}) \in \R^{mn}$, with the double indices on the coordinates $\lambda_{jk}$ ordered lexicographically, let $\Lambda = \mathrm{diag}(\lambda)$ and write $$\mathcal{O}_\lambda = \{ U \Lambda U^\dagger \ | \ U \in \U(mn) \}$$ for the orbit of $\Lambda$ under conjugation by elements of $\U(mn)$.  The orbit $\mathcal{O}_\lambda$ is a compact submanifold of the space $\mathrm{Her}(mn)$ of $mn$-by-$mn$ Hermitian matrices, and it carries a unique $\U(mn)$-invariant probability measure, the orbital measure $\beta_\lambda$.  Let $\phi: \mathrm{Her}(mn) \to \mathrm{Her}(m) \oplus \mathrm{Her}(n)$ be the map $M \mapsto (\pi_1(M), \pi_2(M))$, with $\pi_1$ and $\pi_2$ as defined in (\ref{eqn:marginal-def}), and let $\psi: \mathrm{Her}(m) \oplus \mathrm{Her}(n) \to \R^{m+n}$ be the map that sends a pair of matrices $(M,N)$ to the pair $(\mu, \nu)$, where $\mu$ and $\nu$ are respectively the vectors of eigenvalues of $M$ and $N$, sorted in non-increasing order.  We endow the spaces $\mathrm{Her}(m)$, $\mathrm{Her}(n)$ and $\mathrm{Her}(mn)$ with the Frobenius inner product $\langle A, B \rangle = \Tr(AB)$.

The pushforward measure $\psi_* \phi_* \beta_\lambda$ is the joint distribution of the eigenvalues of $\pi_1(M)$ and $\pi_2(M)$, where $M$ is uniformly (i.e.~invariantly) distributed on $\mathcal{O}_\lambda$.  Since $\Tr \pi_1(M) = \Tr \pi_2(M) = \Tr \Lambda,$ the support of $\psi_* \phi_* \beta_\lambda$ is contained in the codimension-2 affine subspace
$$ \Big \{ (\mu, \nu) \in \R^{m + n} \ \Big | \  \sum_{j=1}^m \mu_j = \sum_{k=1}^n \nu_k = \sum_{jk=11}^{mn} \lambda_{jk} \Big \}.$$
In fact, we can assume without loss of generality that $\Tr \Lambda = 0$, since, as explained below in Remark \ref{rem:trace-zero}, changing the trace of $\Lambda$ merely translates the support of $\psi_* \phi_* \beta_\lambda$. When all $\lambda_{jk}$ are distinct, $\psi_* \phi_* \beta_\lambda$ then has a density $p_\lambda^{\mathrm{dst}}(\mu, \nu)$ with respect to Lebesgue measure on the $(m + n - 2)$-dimensional subspace 
$$ \Sigma_0 = \Big \{ (\mu, \nu) \in \R^{m + n} \ \Big | \  \sum_{j=1}^m \mu_j = \sum_{k=1}^n \nu_k = 0 \Big \}.$$

Here we give an integral formula for $p^{\mathrm{dst}}_\lambda(\mu, \nu)$.  (The superscript ``dst'' stands for ``distinguishable.'') For $\lambda \in \R^{mn}$, write $\Delta_{mn}(\lambda) = \prod_{ij < kl}(\lambda_{ij} - \lambda_{kl}),$ where $ij < kl$ means that the double index $ij$ precedes the double index $kl$ in lexicographic order.  Define
$$\Sigma_0^\downarrow = \big \{ (\mu, \nu) \in \Sigma_0 \ \big | \ \mu_1 \ge \hdots \ge \mu_m, \ \nu_1 \ge \hdots \ge \nu_n \big \},$$
and write $\mathbbm{1}_{\Sigma_0^\downarrow}(\mu, \nu)$ for the indicator function of $\Sigma_0^\downarrow$.

\begin{cor} \label{cor:density-formula}
Take $\lambda \in \R^{mn}$ with $\lambda_{11} > \hdots > \lambda_{mn}$ and $\sum_{jk=11}^{mn} \lambda_{jk} = 0$. Then
\begin{multline} \label{eqn:density-formula}
p^{\mathrm{dst}}_\lambda(\mu, \nu) = C^\mathrm{dst}_{m,n} \mathbbm{1}_{\Sigma_0^\downarrow}(\mu, \nu) \frac{\Delta_m(\mu) \Delta_n(\nu)}{\Delta_{mn}(\lambda)} \int_{\Sigma_0} \\
\frac{ \Delta_m(\xi) \Delta_n(\zeta)}{ \Delta_{mn}([\xi_p + \zeta_q]_{pq=11}^{mn})} \, e^{-i (\sum_{j=1}^m \mu_j \xi_j + \sum_{k=1}^n \nu_k \zeta_k )} \, \det \big[ e^{i \lambda_{jk} (\xi_p + \zeta_q)} \big]_{jk,pq = 11}^{mn} \\
dL(\xi_1, \hdots, \xi_m, \zeta_1, \hdots, \zeta_n),
\end{multline}
where $dL$ is Lebesgue measure on $\Sigma_0$ and
\begin{equation} \label{eqn:kappa-def}
C^\mathrm{dst}_{m,n} = \frac{(-i)^{\frac{1}{2}(mn(mn-1) - m(m-1) - n(n-1))}}{(2\pi)^{m+n-2}} \frac{\prod_{l=1}^{mn-1}l!}{\left( \prod_{l=1}^{m-1} l! \right) \left( \prod_{l=1}^{n-1} l! \right)}.
\end{equation}
\end{cor}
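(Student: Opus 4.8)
The plan is to reproduce the Fourier-analytic argument of the proof of Theorem~\ref{thm:integral-formula-gen}, but carried out concretely in the language of Hermitian matrices and the HCIZ formula~(\ref{eqn:hciz}), with no reference to roots or Weyl groups. (One could instead deduce Corollary~\ref{cor:density-formula} from Theorem~\ref{thm:integral-formula-gen} via the identifications of Remark~\ref{rem:distinguishable-lietheory}, but the direct route is more transparent and self-contained.) By Remark~\ref{rem:trace-zero} we may assume $\Tr\Lambda = 0$, so that $\phi_*\beta_\lambda$ is supported on the vector space $V = \{(\Xi,Z)\in\mathrm{Her}(m)\oplus\mathrm{Her}(n) : \Tr\Xi = \Tr Z = 0\}$, of dimension $m^2+n^2-2$. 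I would compute the density $q^{\mathrm{dst}}_\lambda$ of $\phi_*\beta_\lambda$ on $V$, from which $p^{\mathrm{dst}}_\lambda(\mu,\nu) = \kappa_m\Delta_m(\mu)^2\,\kappa_n\Delta_n(\nu)^2\,q^{\mathrm{dst}}_\lambda(\mathrm{diag}\,\mu,\mathrm{diag}\,\nu)$ for $(\mu,\nu)\in\Sigma_0^\downarrow$ by the Weyl integration formula (with $\kappa_m,\kappa_n$ determined as in \cite[\textsection1.1]{CMZ}), while $p^{\mathrm{dst}}_\lambda = 0$ off $\Sigma_0^\downarrow$; the latter is what produces the indicator $\mathbbm{1}_{\Sigma_0^\downarrow}$.

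The structural input is the adjointness $\Tr(\pi_1(M)\,\Xi) = \Tr\big(M(\Xi\otimes\mathbbm{1}_n)\big)$ and $\Tr(\pi_2(M)\,Z) = \Tr\big(M(\mathbbm{1}_m\otimes Z)\big)$, immediate from~(\ref{eqn:marginal-def}), which gives $\mathscr{F}_V[\phi_*\beta_\lambda](\Xi,Z) = \int_{\U(mn)} e^{\,i\Tr(U\Lambda U^\dagger(\Xi\otimes\mathbbm{1}_n+\mathbbm{1}_m\otimes Z))}\,dU$. Since $\Xi\otimes\mathbbm{1}_n+\mathbbm{1}_m\otimes Z$ has eigenvalues $\xi_p+\zeta_q$, formula~(\ref{eqn:hciz}) evaluates this, up to a power of $i$ from homogeneity of the Vandermonde, as $\big(\prod_{l=1}^{mn-1}l!\big)$ times $\det\big[e^{i\lambda_{jk}(\xi_p+\zeta_q)}\big]$ divided by $\Delta_{mn}(\lambda)\,\Delta_{mn}([\xi_p+\zeta_q])$. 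Then $q^{\mathrm{dst}}_\lambda = \mathscr{F}_V^{-1}\mathscr{F}_V[\phi_*\beta_\lambda]$; splitting the inverse transform over $V$ into eigenvalue and flag directions (Weyl integration on the two traceless Hermitian factors) and using that $\phi_*\beta_\lambda$ is $\U(m)\times\U(n)$-conjugation invariant — because the partial traces are $\U(m)\times\U(n)$-equivariant — the angular integral collapses to $\mathcal{H}_{\U(m)}(\Xi,-i\,\mathrm{diag}\,\xi)\,\mathcal{H}_{\U(n)}(Z,-i\,\mathrm{diag}\,\zeta)$, and the full integrand is $W_m\times W_n$-invariant in $(\xi,\zeta)$, so the eigenvalue integral may be taken over all of $\Sigma_0$ at the cost of a factor $1/(m!\,n!)$ — exactly as in the passage to~(\ref{eqn:p-orbital-integrals}).

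It then remains to apply~(\ref{eqn:hciz}) twice more, to $\mathcal{H}_{\U(m)}$ and $\mathcal{H}_{\U(n)}$, and to collapse the resulting Weyl sums as in the derivation of~(\ref{eqn:p-almost-FT}). This produces the superfactorials $\Delta_m(\rho_m) = \prod_{l=1}^{m-1}l!$ and $\Delta_n(\rho_n) = \prod_{l=1}^{n-1}l!$, denominators $\Delta_m(\mu)\Delta_n(\nu)$ and $\Delta_m(\mathrm{diag}\,\xi)\Delta_n(\mathrm{diag}\,\zeta)$ that cancel against the Vandermonde factors from the two Weyl integration formulae, further powers of $i$, and two sums $\sum_{w\in W_m}\epsilon(w)e^{-i\langle w\mu,\xi\rangle}$ and $\sum_{v\in W_n}\epsilon(v)e^{-i\langle v\nu,\zeta\rangle}$. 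These are collapsed by the substitutions $\xi\mapsto w^{-1}\xi$, $\zeta\mapsto v^{-1}\zeta$: the Vandermondes in $\xi,\zeta$ are skew, the residual factor $\det[e^{i\lambda_{jk}(\xi_p+\zeta_q)}]/\Delta_{mn}([\xi_p+\zeta_q])$ is $W_m\times W_n$-invariant because numerator and denominator each pick up the sign of the permutation of the $mn$ eigenvalues induced by a block permutation, and Lebesgue measure on $\Sigma_0$ is $W_m\times W_n$-invariant; each sum therefore contributes a factor $m!$, resp.\ $n!$, cancelling the $1/(m!\,n!)$ and leaving the integral over $\Sigma_0$ in~(\ref{eqn:density-formula}). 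Assembling $\kappa_m,\kappa_n$, the three superfactorials, and all powers of $2\pi$ (which total $-(m+n-2)$, using $2|\Phi^+_\goh| = m(m-1)+n(n-1)$) and of $i$ (which total $\binom{mn}{2}-\binom{m}{2}-\binom{n}{2}$) then yields exactly the constant $C^{\mathrm{dst}}_{m,n}$ of~(\ref{eqn:kappa-def}).

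I expect the main obstacle to be not any single step but the combination of (i) carrying the radial--angular decomposition on the correct ambient space $V$ rather than on $\R^{m+n}$ — so that the $\Tr\Lambda = 0$ reduction and the affine-span discussion of Remark~\ref{rem:unfaithful} are genuinely in play — and (ii) the sign and normalization bookkeeping through three applications of~(\ref{eqn:hciz}) and two Weyl-sum collapses, which is exactly where an error is most likely to arise. A secondary point is convergence: as in Remark~\ref{rem:integral-convergence}, for some $m,n$ the oscillatory integral in~(\ref{eqn:density-formula}) is only conditionally convergent and must be read as a principal value, and the justification that it nonetheless represents the density is the one in Remark~\ref{rem:unfaithful}, the factor $\Delta_m(\xi)\Delta_n(\zeta)/\Delta_{mn}([\xi_p+\zeta_q]) = \Delta_{\gog/\goh}(\xi,\zeta)^{-1}$ decaying at infinity because $m,n\ge 2$.
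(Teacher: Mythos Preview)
Your proposal is correct and follows essentially the same route as the paper's direct proof: compute $\mathscr{F}_V[\phi_*\beta_\lambda]$ via the adjointness of the partial traces and one application of HCIZ, invert by splitting the integral over $V$ into radial ($\Sigma_0$) and angular ($\U(m)\times\U(n)$-orbit) parts, apply HCIZ twice more to the angular integrals, collapse the resulting $W_m\times W_n$ sum by the substitution $(\xi,\zeta)\mapsto(w^{-1}\xi,v^{-1}\zeta)$ using the skew/invariance properties you name, and finally multiply by the Jacobian of $\psi$. The bookkeeping and convergence caveats you flag are exactly the ones the paper addresses.
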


\begin{remark} \label{rem:distinguishable-lietheory}
Before giving the direct proof of Corollary \ref{cor:density-formula}, we briefly describe the statement in the terms of Theorem \ref{thm:integral-formula-gen}.  The Lie algebra of the group $G = \U(mn)$ is $\mathfrak{u}(mn)$, the space of $mn$-by-$mn$ {\it anti}-Hermitian matrices.  If $B \in \mathfrak{u}(mn)$ then $i B \in \mathrm{Her}(mn)$, giving a natural identification between $\mathfrak{u}(mn)$ and $\mathrm{Her}(mn)$.  Under this identification, the conjugation action of $\U(mn)$ on $\mathrm{Her}(mn)$ coincides with the adjoint representation of $\U(mn)$ on $\mathfrak{u}(mn)$, and $\mathcal{O}_\lambda$ is an adjoint orbit.  (Equivalently, we could identify $\mathrm{Her}(mn)$ with the dual space $\mathfrak{u}(mn)^*$ via the pairing $(A, B) = \Tr(iAB)$ for $A \in \mathrm{Her}(mn)$, $B \in \mathfrak{u}(mn)$.)

Let $\goh$ be the Lie algebra of the subgroup $H = \U(m) \otimes \U(n) \subset \U(mn)$.  Let $\mathbbm{1}_m$ and $\mathbbm{1}_n$ denote the $m$-by-$m$ and $n$-by-$n$ identity matrices.  Given $A \in \mathrm{Her}(m)$ and $B \in \mathrm{Her}(n)$, write $A \otimes B \in \mathrm{Her}(mn)$ for their Kronecker product. Then
$$\goh \cong \big \{ \xi \otimes \mathbbm{1}_n + \mathbbm{1}_m \otimes \zeta \ | \ \xi \in \mathrm{Her}(m), \, \zeta \in \mathrm{Her}(n) \big \} \subset \mathrm{Her}(mn).$$
The orthogonal projection from $\mathrm{Her}(mn)$ to $\goh$ is given by
\begin{equation} \label{eqn:mm-formula}
M \mapsto \frac{1}{n} \pi_1\bigg( M - \frac{\Tr M}{mn} \mathbbm{1}_{mn} \bigg) \otimes \mathbbm{1}_n + \frac{1}{m} \mathbbm{1}_m \otimes \pi_2\bigg( M - \frac{\Tr M}{mn} \mathbbm{1}_{mn} \bigg) + \frac{\Tr M}{mn} \mathbbm{1}_{mn}.
\end{equation}
We take $\tilde \tot \subset \mathrm{Her}(mn)$ to be the space of $mn$-by-$mn$ real diagonal matrices, and $\tot \cong \R^{m+n-1}$ to be the space of matrices $\xi \otimes \mathbbm{1}_n + \mathbbm{1}_m \otimes \zeta \in \goh$ with $\xi, \zeta$ both diagonal.  The positive chamber $\tot_+$ consists of the matrices $\xi \otimes \mathbbm{1}_n + \mathbbm{1}_m \otimes \zeta \in \tot$ such that the entries of $\xi$ and $\zeta$ are non-increasing down their respective diagonals, and we have an obvious embedding $\Sigma_0 \hookrightarrow \tot$ such that $\Sigma_0^\downarrow = \Sigma_0 \cap \tot_+$.  The Weyl group of $\U(mn)$ is the symmetric group $S_{mn}$ and acts on $\tilde \tot$ by permuting the diagonal entries.  The subspace of $\tot$ spanned by $\mathbbm{1}_{mn}$ is invariant under this action; as discussed in Remark \ref{rem:unfaithful}, $\psi_* \phi_* \beta_\lambda$ is therefore supported on an affine subspace orthogonal to $\mathbbm{1}_{mn}$.  This corresponds to the linear constraint $\Tr \pi_1(M) = \Tr \pi_2(M) = \Tr \Lambda$.  We thus restrict our attention to the subspaces $\Sigma_0$ and
$\psi^{-1}(\Sigma_0) = \{ (X, Y) \ | \ \Tr X = \Tr Y = 0 \} \subset \mathrm{Her}(m) \oplus \mathrm{Her}(n).$
Observing that
$$\pi_1(\xi \otimes \mathbbm{1}_n + \mathbbm{1}_m \otimes \zeta) = n \xi + (\Tr \zeta) \mathbbm{1}_m,$$
$$\pi_2(\xi \otimes \mathbbm{1}_n + \mathbbm{1}_m \otimes \zeta) = m \zeta + (\Tr \xi) \mathbbm{1}_n,$$
we find that there is a linear bijection
\begin{equation} \label{eqn:correspondence-k}
\xi \otimes \mathbbm{1}_n + \mathbbm{1}_m \otimes \zeta \ \longleftrightarrow \ (n \xi, m \zeta)
\end{equation}
between $\psi^{-1}(\Sigma_0)$ and the subspace
$$\big \{ \xi \otimes \mathbbm{1}_n + \mathbbm{1}_m \otimes \zeta \ | \ \Tr \xi = \Tr \zeta = 0 \big \} \subset \goh.$$
If we use this bijection to identify these two subspaces, and if we assume $\sum \lambda_{jk} = 0$, then the map $\phi : M \mapsto (\pi_1(M), \pi_2(M))$ coincides with the orthogonal projection (\ref{eqn:mm-formula}) on $\mathcal{O}_\lambda$, so that the measure $\psi_* \phi_* \beta_\lambda$ is indeed the same as in Section \ref{sec:projections-general} in the case $G = \U(mn)$, $H = \U(m) \otimes \U(n)$.

Finally, note that the orbit of a Hermitian matrix under conjugation by unitary matrices is the same as its orbit under conjugation by unitary matrices with determinant 1.  Therefore, although the quantum marginal problem is typically framed in terms of unitary orbits of Hermitian matrices, we could equivalently take $G = \SU(mn)$ and $H = \SU(m) \otimes \SU(n)$, and consider their actions on the space of traceless Hermitian matrices.  In later sections we will take this approach, as it lets us avoid explicitly restricting to $\Sigma_0$.
\end{remark}

\begin{proof}[Proof of Corollary \ref{cor:density-formula}]
We first note that the measure $\phi_* \beta_\lambda$ is supported on the preimage
\begin{equation} \label{eqn:sigma0-preimage}
\psi^{-1}(\Sigma_0) = \big \{ (X, Y) \ | \ \Tr X = \Tr Y = 0 \big \} \subset \mathrm{Her}(m) \oplus \mathrm{Her}(n).
\end{equation}
If we identify $\psi^{-1}(\Sigma_0)$ with its dual using the Frobenius inner product, then the Fourier transform of $\phi_* \beta_\lambda$ on this subspace is also a function on $\psi^{-1}(\Sigma_0)$ and is given by:
\begin{align*}
\mathscr{F}[ \phi_* \beta_\lambda ](\xi, \zeta) &= \int_{\psi^{-1}(\Sigma_0)} \exp \big[i \Tr(\xi A) + i \Tr(\zeta B) \big] \, \phi_* \beta_\lambda(dA\, dB) \\
& = \int_{\mathcal{O}_\lambda} \exp\big[ i \Tr(\xi \pi_1(X)) + i \Tr(\zeta \pi_2(X))\big] \, \beta_\lambda(dX) \\
&= \int_{\mathcal{O}_\lambda} \exp\big[i \Tr( (\xi \otimes \mathbbm{1}_n + \mathbbm{1}_m \otimes \zeta) X) \big] \, \beta_\lambda(dX) \\
&= \int_{\U(mn)} \exp\big[i \Tr( (\xi \otimes \mathbbm{1}_n + \mathbbm{1}_m \otimes \zeta) U \Lambda U^\dagger) \big] \, dU,
\end{align*}
where $dU$ is the normalized Haar measure on $\U(mn)$.  Write $\xi_1 \ge \hdots \ge \xi_m$ and $\zeta_1 \ge \hdots \ge \zeta_n$ for the eigenvalues of $\xi$ and $\zeta$ respectively.  Then the eigenvalues of $\xi \otimes \mathbbm{1}_n + \mathbbm{1}_m \otimes \zeta$ are $\xi_p + \zeta_q$, $1 \le p \le m$, $1 \le q \le n$, and the HCIZ formula (\ref{eqn:hciz}) gives:
\begin{equation} \label{eqn:FT-initial}
\mathscr{F}[ \phi_* \beta_\lambda ](\xi, \zeta) = \left( \prod_{l=1}^{mn-1}l! \right) \frac{\det \big[ e^{i \lambda_{jk} (\xi_p + \zeta_q)} \big]_{jk,pq = 11}^{mn}}{\Delta_{mn}(i \lambda) \Delta_{mn}([\xi_p + \zeta_q]_{pq=11}^{mn})}.
\end{equation}
Taking the inverse Fourier transform of the above expression and using the facts that $\dim \psi^{-1}(\Sigma_0) = m^2 + n^2 - 2$ and that $\Delta_{mn}$ is homogeneous of degree $mn(mn-1)/2$, we find that the density of $\phi_* \beta_\lambda$ is given by
\begin{multline*}
\mathscr{F}^{-1}[\mathscr{F}[ \phi_* \beta_\lambda ]](X,Y) = \\ \frac{(-i)^{mn(mn-1)/2} \prod_{l=1}^{mn-1}l!}{ (2\pi)^{m^2 + n^2 - 2}}  \int_{\psi^{-1}(\Sigma_0)} \frac{\det \big[ e^{i \lambda_{jk} (\xi_p + \zeta_q)} \big]_{jk,pq = 11}^{mn} }{\Delta_{mn}(\lambda) \Delta_{mn}([\xi_p + \zeta_q]_{pq=11}^{mn})} \, e^{-i \Tr(X \xi) - i \Tr(Y \zeta)} \, d\xi \, d\zeta,
\end{multline*}
where $d\xi \, d\zeta$ indicates the Lebesgue measure induced by the Frobenius inner product on $\psi^{-1}(\Sigma_0)$.

Now we change to generalized polar coordinates.  That is, we split the above integral over $\psi^{-1}(\Sigma_0)$ into multiple iterated integrals: an outer radial integral over the eigenvalues $(\xi_1, \hdots, \xi_m, \zeta_1, \hdots, \zeta_n) \in \Sigma_0$, and two inner angular integrals over the $\U(m)$-orbit of $\xi$ and the $\U(n)$-orbit of $\zeta$.  Writing $\mathcal{O}_\xi$, $\mathcal{O}_\zeta$ respectively for these two orbits, we have
\begin{multline} \label{eqn:FT-intermediate1}
\mathscr{F}^{-1}[\mathscr{F}[ \phi_* \beta_\lambda ]](X,Y) = \frac{(-i)^{mn(mn-1)/2} \prod_{l=1}^{mn-1}l!}{ (2\pi)^{m^2 + n^2 - 2} m! n!}  \int_{\Sigma_0} \\ \frac{\det \big[ e^{i \lambda_{jk} (\xi_p + \zeta_q)} \big]_{jk,pq = 11}^{mn} }{\Delta_{mn}(\lambda) \Delta_{mn}([\xi_p + \zeta_q]_{pq=11}^{mn})} \int_{\mathcal{O_\xi}} e^{-i \Tr(X A)} dA \int_{\mathcal{O}_\zeta} e^{- i \Tr(Y B)} dB \\ dL(\xi_1, \hdots, \xi_m, \zeta_1, \hdots, \zeta_n),
\end{multline}
where $dA$ and $dB$ are the volume forms induced on $\mathcal{O}_\xi$ and $\mathcal{O}_\zeta$ as submanifolds of $\mathrm{Her}(m)$ and $\mathrm{Her}(n)$ respectively.  The factor of $m! n!$ in the denominator of the leading constant in (\ref{eqn:FT-intermediate1}) is included to account for the fact that we are integrating over the entire subspace $\Sigma_0$ and not merely over $\Sigma_0^\downarrow$.

Away from a set of measure zero in $\Sigma_0$, the eigenvalues $\xi_j$ and $\zeta_k$ all take distinct values.  Then the volumes of the orbits $\mathcal{O}_\xi$ and $\mathcal{O}_\nu$ are given by:\footnote{These volumes are special cases of (\ref{eqn:psi-jacobian}) and are just the standard Jacobian factors used to write an invariant probability density on $\mathrm{Her}(m)$ or $\mathrm{Her}(n)$ in terms of the eigenvalues.  See e.g.~\cite[Appendix A]{Z}.}
\begin{equation} \label{eqn:orbit-volumes}
\mathrm{Vol}(\mathcal{O}_\xi) = \frac{(2\pi)^{m(m-1)/2}}{\prod_{l=1}^{m-1} l!} \Delta_m(\xi)^2, \qquad \mathrm{Vol}(\mathcal{O}_\zeta) = \frac{(2\pi)^{n(n-1)/2}}{\prod_{l=1}^{n-1} l!} \Delta_n(\zeta)^2,
\end{equation}
where $\Delta_m(\xi) = \prod_{1 \le i < j \le m} (\xi_i - \xi_j)$ and $\Delta_n(\zeta) = \prod_{1 \le i < j \le n} (\zeta_i - \zeta_j)$. Using (\ref{eqn:orbit-volumes}), we find
$$\int_{\mathcal{O_\xi}} e^{-i \Tr(X A)} dA = \frac{(2\pi)^{m(m-1)/2}}{\prod_{l=1}^{m-1} l!} \Delta_m(\xi)^2 \int_{\U(m)} e^{-i \Tr(X U\xi U^\dagger)} dU,$$
where $dU$ is the normalized Haar measure on $\U(m)$, and a similar expression for the integral over $\mathcal{O}_\zeta$ in (\ref{eqn:FT-intermediate1}).  Thus we can rewrite (\ref{eqn:FT-intermediate1}) as
\begin{multline*}
\mathscr{F}^{-1}[\mathscr{F}[ \phi_* \beta_\lambda ]](X,Y) = \frac{(-i)^{mn(mn-1)/2} \prod_{l=1}^{mn-1}l!}{ (2\pi)^{\frac{1}{2}(m(m+1)+n(n+1) - 4)} m!n! \left( \prod_{l=1}^{m-1} l! \right) \left( \prod_{l=1}^{n-1} l! \right)}  \int_{\Sigma_0} \\ \frac{\det \big[ e^{i \lambda_{jk} (\xi_p + \zeta_q)} \big]_{jk,pq = 11}^{mn} \Delta_m(\xi)^2 \Delta_n(\zeta)^2}{\Delta_{mn}(\lambda) \Delta_{mn}([\xi_p + \zeta_q]_{pq=11}^{mn})} \int_{\U(m)} e^{-i \Tr(X U\xi U^\dagger)} dU \int_{\U(n)} e^{- i \Tr(Y V \zeta V^\dagger)} dV \\ dL(\xi_1, \hdots, \xi_m, \zeta_1, \hdots, \zeta_n).
\end{multline*}
Writing $\mu_1 \ge \hdots \ge \mu_m$ for the eigenvalues of $X$ and $\nu_1 \ge \hdots \ge \nu_n$ for the eigenvalues of $Y$, two more applications of the HCIZ formula then yield
\begin{multline*}
\mathscr{F}^{-1}[\mathscr{F}[ \phi_* \beta_\lambda ]](X,Y) = \frac{(-i)^{\frac{1}{2}(mn(mn-1) - m(m-1) - n(n-1))} \prod_{l=1}^{mn-1}l!}{(2\pi)^{\frac{1}{2}(m(m+1)+n(n+1) - 4)} \Delta_{mn}(\lambda) \Delta_m(\mu) \Delta_n(\nu) m! n!}  \int_{\Sigma_0} \\ \frac{\Delta_m(\xi) \Delta_n(\zeta)}{ \Delta_{mn}([\xi_p + \zeta_q]_{pq=11}^{mn}) } \det \big[ e^{i \lambda_{jk} (\xi_p + \zeta_q)} \big]_{jk,pq = 11}^{mn} \det \big[ e^{-i \mu_j \xi_k} \big]_{j,k = 1}^m \det \big[ e^{-i \nu_p \zeta_q} \big]_{p,q = 1}^n \\ dL(\xi_1, \hdots, \xi_m, \zeta_1, \hdots, \zeta_n).
\end{multline*}
The integrand above contains three determinants.  If we expand the latter two determinants as sums of permutations, we find
\begin{multline*}
\mathscr{F}^{-1}[\mathscr{F}[ \phi_* \beta_\lambda ]](X,Y) = \frac{(-i)^{\frac{1}{2}(mn(mn-1) - m(m-1) - n(n-1))} \prod_{l=1}^{mn-1}l!}{(2\pi)^{\frac{1}{2}(m(m+1)+n(n+1) - 4)} \Delta_{mn}(\lambda) \Delta_m(\mu) \Delta_n(\nu) m! n!}  \int_{\Sigma_0} \\ \bigg( \frac{\Delta_m(\xi) \Delta_n(\zeta)}{ \Delta_{mn}([\xi_p + \zeta_q]_{pq=11}^{mn}) } \det \big[ e^{i \lambda_{jk} (\xi_p + \zeta_q)} \big]_{jk,pq = 11}^{mn} \bigg) \sum_{\substack{\sigma \in S_m \\ \tau \in S_n}} e^{-i(\sum_{j=1}^m \mu_j \xi_{\sigma(j)} + \sum_{k=1}^n \nu_k \zeta_{\tau(k)})} \\ dL(\xi_1, \hdots, \xi_m, \zeta_1, \hdots, \zeta_n).
\end{multline*}
Pulling the sum out of the integral and using the fact that the quantity in parentheses in the second line is skew with respect to the actions of $S_m$ on $\xi$ and $S_n$ on $\zeta$, we finally arrive at
\begin{multline} \label{eqn:phipushfwd-density}
\mathscr{F}^{-1}[\mathscr{F}[ \phi_* \beta_\lambda ]](X,Y) = \frac{(-i)^{\frac{1}{2}(mn(mn-1) - m(m-1) - n(n-1))} \prod_{l=1}^{mn-1}l!}{(2\pi)^{\frac{1}{2}(m(m+1)+n(n+1) - 4)} \Delta_{mn}(\lambda) \Delta_m(\mu) \Delta_n(\nu)}  \int_{\Sigma_0} \\ \frac{\Delta_m(\xi) \Delta_n(\zeta)}{ \Delta_{mn}([\xi_p + \zeta_q]_{pq=11}^{mn}) } \det \big[ e^{i \lambda_{jk} (\xi_p + \zeta_q)} \big]_{jk,pq = 11}^{mn} e^{-i(\sum_{j=1}^m \mu_j \xi_j + \sum_{k=1}^n \nu_k \zeta_k)} \\ dL(\xi_1, \hdots, \xi_m, \zeta_1, \hdots, \zeta_n).
\end{multline}

The above expression gives the density of $\phi_* \beta_\lambda$ and is therefore a function on $\psi^{-1}(\Sigma_0) \subset \mathrm{Her}(m) \oplus \mathrm{Her}(n)$.  However, we have now written it in terms of the eigenvalues $(\mu,\nu) \in \Sigma_0^\downarrow$.  To obtain the density of $\psi_*\phi_* \beta_\lambda$, it remains only to multiply by the Jacobian of the map $\psi$.  This Jacobian factor is equal to $\mathbbm{1}_{\Sigma_0^\downarrow}(\mu,\nu) \mathrm{Vol}(\mathcal{O}_\mu) \mathrm{Vol}(\mathcal{O}_\nu)$.  Again using (\ref{eqn:orbit-volumes}), we find
\begin{equation*}
\mathrm{Vol}(\mathcal{O}_\mu) \mathrm{Vol}(\mathcal{O}_\nu) = \frac{(2\pi)^{(m(m-1)+n(n-1))/2}}{\left( \prod_{l=1}^{m-1} l! \right) \left( \prod_{l=1}^{n-1} l! \right) } \Delta_m(\mu)^2 \Delta_n(\nu)^2,
\end{equation*}
which gives the desired formula (\ref{eqn:density-formula}).
\end{proof}

\begin{remark} \label{rem:trace-zero}
The assumption that $\sum \lambda_{jk} = 0$ is justified by the following observation.  Take $t \in \R$ and let $\lambda_{jk}' = \lambda_{jk} + t/(mn)$ for $jk = 11, \hdots, mn$.  Then $\psi_* \phi_* \beta_{\lambda'}$ is supported on the affine subspace
$$ \Sigma_t = \Big \{ (\mu', \nu') \in \R^{m + n} \ \Big | \  \sum_{j=1}^m \mu'_j = \sum_{k=1}^n \nu'_k = t \Big \}, $$
but we can identify $\Sigma_t \cong \Sigma_0$ via the change of variables $\mu_j = \mu'_j - nt/m$, $\nu_k = \nu'_k - mt/n$, for $j = 1, \hdots, m$ and $k=1, \hdots, n$. By translating $\Sigma_t$ in this way, we can still regard the density of $\psi_* \phi_* \beta_{\lambda'}$ as a function of $(\mu, \nu) \in \Sigma_0$ rather than $(\mu', \nu') \in \Sigma_t$.  After making this change of variables, (\ref{eqn:FT-initial}) gives
$$ \mathscr{F}[ \phi_* \beta_{\lambda'} ](\xi, \zeta) = e^{i t \sum_{p,q}(\xi_p + \zeta_q)} \mathscr{F}[ \phi_* \beta_\lambda ](\xi, \zeta) = \mathscr{F}[ \phi_* \beta_\lambda ], $$
since $\sum_{p,q}(\xi_p + \zeta_q) = 0$.  Therefore $\psi_* \phi_* \beta_{\lambda'}$ and $\psi_* \phi_* \beta_{\lambda}$ differ only by the translation $(\mu, \nu) \mapsto (\mu', \nu')$, and accordingly we lose no generality by considering only $\lambda$ with $\sum \lambda_{jk} = 0$.
\end{remark}

\begin{remark} \label{rem:purestates}
Although Corollary \ref{cor:density-formula} holds for generic $\lambda$, an important situation in which the formula (\ref{eqn:density-formula}) does {\it not} apply as written, even after translating to the traceless subspace, is the case of so-called pure states, corresponding to the choice $\lambda = (1,0,\hdots,0).$  For pure states, the orbit $\mathcal{O}_\lambda$ is isomorphic to the complex projective space $\mathbb{P}^{mn}$, and --- assuming without loss of generality that $m \le n$ --- one can show by an elementary calculation that the measure $\psi_* \phi_* \beta_\lambda$ is supported on the affine subspace consisting of $(\mu, \nu)$ satisfying $\mu_j = \nu_j$ for $1 \le j \le m$ and $\nu_j = 0$ for $m < j \le n$.  In this case, combining Corollary \ref{cor:pushfwd-FT} with the explicit form of the HCIZ integral for projective spaces shown in \cite[Proposition 7.4]{LeakeVishnoi} yields the following distributional identity:
\begin{equation} \label{eqn:purestates-dist}
\psi_* \phi_* \beta_\lambda(d\mu \, d\nu) = C_{m,n}^\mathrm{pure} \, \Delta_m(\mu) \, \Delta_n(\nu) \, \mathscr{F}_{\Sigma_0}^{-1}\bigg[ \frac{\Delta_m(\xi) \Delta_n(\zeta) \det M(\xi, \zeta)}{\Delta_{mn}([\xi_p + \zeta_q]_{pq=11}^{mn})} \bigg] \Bigg |_{\Sigma_0^\downarrow},
\end{equation}
where $(\xi, \zeta)$ are the integration variables under the Fourier transform,
\begin{equation} \label{eqn:purestates-const}
C_{m,n}^\mathrm{pure} = \frac{i^{\frac{1}{2}(m(m-1) + n(n-1))}(mn-1)!}{\left( \prod_{l=1}^{m-1} l! \right) \left( \prod_{l=1}^{n-1} l! \right)},
\end{equation}
and $M(\xi, \zeta)$ is the $mn$-by-$mn$ matrix with entries
\begin{equation} \label{eqn:purestates-Mdef}
M(\xi, \zeta)_{ij,kl} = \begin{cases}
(\xi_i + \zeta_j)^{(m-1)k + l - 1}, & kl \ne mn, \\
e^{\xi_i + \zeta_j}, & kl = mn.
\end{cases}
\end{equation}
In the case of pure states, we leave open the question of upgrading (\ref{eqn:purestates-dist}) to an integral formula for the pointwise values of the density of $\psi_* \phi_* \beta_\lambda$ on the affine span of its support.
\end{remark}

The preceding discussion easily generalizes to the quantum marginal problem for arbitrarily many distinguishable particles, which we now describe.  Let $M \in \mathrm{Her}(N)$ for $N = \prod_{j=1}^m n_j$.  We can regard $M$ as an operator on a tensor product of vector spaces $\bigotimes_{j=1}^m V_j \cong \C^N$, where $V_j \cong \C^{n_j}$.  We can then define the marginals $\pi_j(M)$, $j = 1, \hdots, m$ analogously to (\ref{eqn:marginal-def}) by taking the partial trace over all factors in the tensor product except $V_j$.  Write $\mu^{(j)}_1 \ge \hdots \ge \mu^{(j)}_{n_j}$ for the eigenvalues of $\pi_j(M)$.  We label the standard coordinates in $\R^N$ by multi-indices $k_1 \hdots k_m$ with $ 1 \le k_j \le n_j$, which we order lexicographically.  We then can ask, given $\lambda \in \R^N$ with $\lambda_{1 \hdots 1} \ge \hdots \ge \lambda_{n_1 \hdots n_m}$ and $M$ uniformly distributed on the $\U(N)$-orbit $\mathcal{O}_\lambda$, what is the joint distribution of $(\mu^{(1)}, \hdots, \mu^{(m)}) \in \R^{\sum_j n_j}$?  As in the two-body case, if all $\lambda_{k_1 \hdots k_m}$ are distinct and $\sum \lambda_{k_1 \hdots k_m} = 0$, then this distribution has a density $p_\lambda^\mathrm{dst}(\mu^{(1)}, \hdots, \mu^{(m)})$ with respect to Lebesgue measure on the subspace
$$\Sigma_0 = \Big \{ (\mu^{(1)}, \hdots, \mu^{(m)}) \in \R^{\sum_j n_j} \ \Big | \ \sum_{k=1}^{n_j} \mu^{(j)}_k = 0, \ j=1,\hdots,m \Big\}.$$
Similarly to before, we write $$\Sigma_0^\downarrow = \big \{ (\mu^{(1)}, \hdots, \mu^{(m)}) \in \Sigma_0 \ \big | \ \mu^{(j)}_1 \ge \hdots \ge \mu^{(j)}_{n_j}, \ j=1,\hdots,m \big \}.$$
A precisely analogous argument to the proof of Corollary \ref{cor:density-formula} then yields the following integral formula for $p_\lambda^\mathrm{dst}(\mu^{(1)}, \hdots, \mu^{(m)})$.

\begin{cor} \label{cor:density-multibody}
Take $\lambda \in \R^{N}$ with $\lambda_{1\hdots1} > \hdots > \lambda_{n_1 \hdots n_m}$ and $\sum \lambda_{j_1 \hdots j_m} = 0$. Then
\begin{multline} \label{eqn:density-multibody}
p_\lambda^\mathrm{dst}(\mu^{(1)}, \hdots, \mu^{(m)}) = C^\mathrm{dst}_{n_1, \hdots, n_m} \mathbbm{1}_{\Sigma_0^\downarrow}(\mu^{(1)}, \hdots, \mu^{(m)}) \frac{\prod_{j=1}^m \Delta_{n_j}(\mu^{(j)})}{\Delta_{N}(\lambda)} \int_{\Sigma_0} \\
\frac{ \prod_{j=1}^m \Delta_{n_j}(\xi^{(j)})}{ \Delta_{N}\big( [\sum_{j=1}^m \xi^{(j)}_{p_j}]_{p_1 \hdots p_m =1 \hdots 1}^{n_1 \hdots n_m} \big)} \, e^{-i \sum_{j=1}^m \sum_{k=1}^{n_j} \mu^{(j)}_k \xi^{(j)}_k } \det \big[ e^{i \lambda_{k_{1} \hdots k_{m}} \sum_{j=1}^m \xi^{(j)}_{p_j}} \big]_{k_1 \hdots k_m, p_1 \hdots p_m =1 \hdots 1}^{n_1 \hdots n_m} \\
dL(\xi^{(1)}, \hdots, \xi^{(m)}),
\end{multline}
where $dL$ is Lebesgue measure on $\Sigma_0$ and
\begin{equation} \label{eqn:kappa-def}
C^\mathrm{dst}_{n_1, \hdots, n_m} = \frac{(-i)^{\frac{1}{2}(N(N-1) - \sum_{j=1}^m n_j(n_j-1))}}{(2\pi)^{\sum_{j=1}^m n_j - m}} \frac{\prod_{l=1}^{N-1}l!}{ \prod_{j=1}^{m} \prod_{l=1}^{n_j-1} l!}.
\end{equation}
\end{cor}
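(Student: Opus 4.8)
The plan is to repeat the proof of Corollary~\ref{cor:density-formula} with the two-fold tensor product replaced by an $m$-fold one; as there, no Lie theory is needed. Writing $N = \prod_{j=1}^m n_j$, one first observes that $\phi_* \beta_\lambda$ is supported on $\psi^{-1}(\Sigma_0) = \{(X^{(1)}, \dots, X^{(m)}) \in \bigoplus_{j=1}^m \mathrm{Her}(n_j) : \Tr X^{(j)} = 0,\ j = 1, \dots, m\}$, since $\Tr \pi_j(M) = \Tr \Lambda = 0$. Identifying this subspace with its dual via the Frobenius inner product and computing as in the derivation of (\ref{eqn:FT-initial}), the Fourier transform $\mathscr{F}[\phi_* \beta_\lambda](\xi^{(1)}, \dots, \xi^{(m)})$ equals the integral over $\U(N)$ of $\exp[i \Tr((\sum_{j=1}^m \mathbbm{1}_{n_1} \otimes \cdots \otimes \xi^{(j)} \otimes \cdots \otimes \mathbbm{1}_{n_m}) U \Lambda U^\dagger)]$ against normalized Haar measure. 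Writing $\xi^{(j)}_1 \ge \cdots \ge \xi^{(j)}_{n_j}$ for the eigenvalues of the Hermitian variable $X^{(j)}$, the eigenvalues of $\sum_{j=1}^m \mathbbm{1}_{n_1} \otimes \cdots \otimes \xi^{(j)} \otimes \cdots \otimes \mathbbm{1}_{n_m}$ are exactly the sums $\sum_{j=1}^m \xi^{(j)}_{p_j}$ with $1 \le p_j \le n_j$, so the HCIZ formula~(\ref{eqn:hciz}) turns this into $(\prod_{l=1}^{N-1} l!)$ times the ratio of $\det[e^{i \lambda_{k_1 \cdots k_m} \sum_j \xi^{(j)}_{p_j}}]$ over $\Delta_N(i\lambda)\, \Delta_N([\sum_j \xi^{(j)}_{p_j}])$, the exact analogue of (\ref{eqn:FT-initial}).

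Next I would take the inverse Fourier transform over $\psi^{-1}(\Sigma_0)$ (which has dimension $\sum_j n_j^2 - m$), then pass to generalized polar coordinates: write the integral as an outer radial integral over eigenvalue vectors $(\xi^{(1)}, \dots, \xi^{(m)}) \in \Sigma_0$ together with $m$ inner angular integrals over the $\U(n_j)$-orbits $\mathcal{O}_{\xi^{(j)}}$, introducing the combinatorial factor $\prod_j n_j!$ to account for integrating over all of $\Sigma_0$ rather than only $\Sigma_0^\downarrow$; substitute the orbit-volume identity $\mathrm{Vol}(\mathcal{O}_{\xi^{(j)}}) = (2\pi)^{n_j(n_j-1)/2} (\prod_{l=1}^{n_j-1} l!)^{-1} \Delta_{n_j}(\xi^{(j)})^2$, the obvious multi-factor analogue of (\ref{eqn:orbit-volumes}); apply the HCIZ formula once more to each of the $m$ angular integrals, producing determinants $\det[e^{-i \mu^{(j)}_k \xi^{(j)}_l}]_{k,l=1}^{n_j}$; expand each of these $m$ determinants as a sum over $S_{n_j}$ and use that the rest of the integrand is skew under each such action to collapse all $m$ permutation sums at once, which cancels the factor $\prod_j n_j!$; and finally multiply by the Jacobian of $\psi$, namely $\mathbbm{1}_{\Sigma_0^\downarrow}(\mu^{(1)}, \dots, \mu^{(m)}) \prod_j \mathrm{Vol}(\mathcal{O}_{\mu^{(j)}})$, evaluated once more with the orbit-volume formula. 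Collecting all powers of $2\pi$, $-i$, and the factorials then produces the constant $C^{\mathrm{dst}}_{n_1, \dots, n_m}$ of (\ref{eqn:kappa-def}) and the stated density (\ref{eqn:density-multibody}).

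The main obstacle is not conceptual but the bookkeeping of the scalar prefactor: one must correctly track the powers of $2\pi$, the signs $(-i)^{\bullet}$ arising from the homogeneity degrees $N(N-1)/2$ and $n_j(n_j-1)/2$ of the various Vandermonde polynomials, and the towers of factorials coming from the four applications of HCIZ and the orbit volumes, and verify that these telescope into (\ref{eqn:kappa-def}). There are also two minor technical points, handled exactly as in the two-body case: one restricts to the full-measure locus where all $\lambda_{k_1 \cdots k_m}$, all $\xi^{(j)}_k$, and all sums $\sum_j \xi^{(j)}_{p_j}$ take distinct values, so that HCIZ applies and the orbit-volume formulas are valid; and the $\xi$-integral is to be read, where necessary, as a Cauchy principal value, as in Remark~\ref{rem:integral-convergence}. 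Alternatively the corollary could be obtained by specializing Theorem~\ref{thm:integral-formula-gen} to $G = \SU(N)$ and $H = \SU(n_1) \otimes \cdots \otimes \SU(n_m)$ and computing $\Delta_{\gog / \goh}$ explicitly as in Remark~\ref{rem:distinguishable-lietheory}, but the direct Fourier-analytic derivation above is the cleaner route and the one I would present.
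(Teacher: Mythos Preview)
Your proposal is correct and follows exactly the approach the paper indicates: the paper itself does not give a separate proof of this corollary but merely states that ``a precisely analogous argument to the proof of Corollary~\ref{cor:density-formula}'' yields the formula, and you have spelled out that analogous argument in detail. Your bookkeeping outline and the two technical caveats (full-measure locus for HCIZ, principal-value interpretation) are all in line with the two-body derivation.
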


\subsection{Bosons} \label{sec:formula-bos}
In the case of $k > 1$ indistinguishable bosons, we consider the space of symmetric tensors
$$ \mathcal{V}^\mathrm{bos}_{n | k} = \mathrm{Sym}^k \C^n, $$
which we model as degree-$k$ homogeneous polynomials in $n$ complex variables $x_1, \hdots, x_n$.  For two such polynomials $p,q$, we set
\begin{equation} \label{eqn:boson-inner-product}
(p,q) = p(\partial)q(x) \big |_{x=0}.
\end{equation}
The bilinear form $( \cdot, \cdot )$ defines a Hermitian inner product that makes $\mathcal{V}^\mathrm{bos}_{n | k}$ into a Hilbert space.  If $\alpha$ is an $n$-component multi-index, we write $|\alpha| = \sum_{i=1}^n \alpha_i$ and $\alpha! = \prod_{i=1}^n \alpha_i!$. Then the monomials
$$\frac{1}{\sqrt{\alpha!}}\prod_{i=1}^n x_i^{\alpha_i}$$
are an orthonormal basis of $\mathcal{V}^\mathrm{bos}_{n | k}$ as $\alpha$ runs over $n$-component multi-indices with $|\alpha| = k$.  As there are $N= {{n+k-1}\choose{k}}$ such multi-indices, we obtain isomorphisms
\begin{equation} \label{eqn:bos-isos}
\mathcal{V}^\mathrm{bos}_{n | k} \cong \C^{N}, \qquad \SU(\mathcal{V}^\mathrm{bos}_{n | k}) \cong \SU(N).
\end{equation}
Accordingly we also use multi-indices to label the standard coordinates in $\R^N$ or $\C^N$ and write, for example, $\lambda = ( \lambda_\alpha )_{|\alpha| = k} \in \R^{N}.$ We consider the multi-indices to be ordered lexicographically.

The usual action of $\SU(n)$ on the vector $(x_1, \hdots, x_n) \in \C^n$ gives the standard unitary representation of $\SU(n)$ on $\mathrm{Sym}^k \C^n$.  This representation extends to a representation on $(\C^n)^{\otimes k}$, which is obtained via the diagonal embedding
\begin{equation}
\label{eqn:bos-embed} \SU(n) \hookrightarrow \SU\big((\C^n)^{\otimes k}\big) \cong \SU(n^k).
\end{equation}
The corresponding embedding at the level of the Lie algebras is the map $\mathscr{S} : \mathfrak{su}(n) \hookrightarrow \mathfrak{su}(n^k)$ defined by
\begin{equation}
\label{eqn:sym-map-def}
\mathscr{S}(X) = \sum_{j = 1}^k \mathbbm{1}_{n^{j-1}} \otimes X \otimes \mathbbm{1}_{n^{k-j}}, \qquad X \in \mathfrak{su}(n).
\end{equation}

The quantum marginal problem for $k$ indistinguishable bosons is the special case of the problem studied in Section \ref{sec:projections-general} where $G = \SU(N)$ and $H$ is the image of $\SU(n)$ in $\SU(N)$ under this representation.

Concretely, using (\ref{eqn:bos-isos}) and (\ref{eqn:sym-map-def}) we may identify both $\mathfrak{su}(N)$ and $\mathfrak{su}(n)$ with subspaces of $\mathfrak{su}(n^k)$.  Multiplying by $i$, we regard elements of these Lie algebras as traceless {\it Hermitian} matrices.  This setup allows us to define the orthogonal projection $\phi: \mathrm{Her}(N)_0 \to \mathrm{Her}(n)_0$, where $\mathrm{Her}(m)_0$ indicates the space of $m$-by-$m$ Hermitian matrices with trace zero.  We take $\tot$ to be the space of $n$-by-$n$ traceless real diagonal matrices, which we identify with the space of vectors in $\R^n$ whose coordinates sum to zero.  Let $\tot_+ \subset \tot$ be the space of diagonal matrices with non-increasing entries down the diagonal and $\psi : \mathrm{Her}(n)_0 \to \tot_+$ be the diagonalization map.  Let $\mathcal{O}_\lambda \subset \mathrm{Her}(N)_0$ be an $\SU(N)$-orbit with orbital measure $\beta_\lambda$.  We then want to describe the pushforward measure $\psi_* \phi_* \beta_\lambda$.  To do so, we apply a direct specialization of Theorem \ref{thm:integral-formula-gen}.

If a matrix $M \in \mathrm{Her}(n)_0$ has eigenvectors $v_1, \hdots, v_n$ with respective eigenvalues $\mu_1 \ge \hdots \ge \mu_n$, then its image $\mathscr{S}(X)$, regarded as an operator on $\mathrm{Sym}^k \C^n$, has eigenvectors $\prod_{j = 1}^n v_j^{\alpha_j}$ with respective eigenvalues $\alpha \cdot \mu = \sum_{j = 1}^n \alpha_j \mu_j$, as $\alpha$ ranges over $n$-component multi-indices with $|\alpha| = k$.  Thus we find that the restriction of $\Delta_{\mathfrak{su}(N)}$ to $\tot$ is equal to
$$\Delta_{N}([\alpha \cdot \xi]_{|\alpha| = k}) \ = \prod_{\substack{\alpha < \beta \\ |\alpha| = |\beta| = k}} \big( \alpha \cdot \xi - \beta \cdot \xi \big), \qquad \xi \in \tot.$$
It is easy to check directly that the above expression does not vanish uniformly on $\tot$, and that the multiset of vectors $\Phi^+_{\gog/\goh}$ spans $\tot$. With these observations, Theorem \ref{thm:integral-formula-gen} gives the following:

\begin{cor} \label{cor:density-formula-bos}
Take $\lambda \in \R^{N}$ with all $\lambda_\alpha$ distinct. Then the pushforward measure $\psi_* \phi_* \beta_\lambda$ has a density with respect to Lebesgue measure on $\tot_+$, which is given by
\begin{equation} \label{eqn:density-formula-bos}
p^{\mathrm{bos}}_\lambda(\mu) = C^\mathrm{bos}_{n | k} \, \frac{\Delta_n(\mu)}{\Delta_{N}(\lambda)} \int_{\tot}
\frac{ \Delta_n(\xi)}{ \Delta_{N}([\alpha \cdot \xi]_{|\alpha| = k})} \, e^{-i \sum \mu_j \xi_j} \det \big[ e^{i \lambda_{\alpha} (\beta \cdot \xi) } \big]_{|\alpha| , |\beta| = k} \, d\xi,
\end{equation}
where $d\xi$ is Lebesgue measure on $\tot \cong \{ x \in \R^n \ | \ \sum x_j = 0 \}$ and
\begin{equation} \label{eqn:kappa-def-bos}
C^\mathrm{bos}_{n | k} = \frac{(-i)^{\frac{1}{2}(N(N-1) - n(n-1))}}{(2\pi)^{n-1}} \prod_{l=n}^{N-1}l!.
\end{equation}
\end{cor}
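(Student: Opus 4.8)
The plan is to obtain the statement directly from Theorem~\ref{thm:integral-formula-gen}, applied with $\gog=\mathfrak{su}(N)$, $G=\SU(N)$, $\goh=\mathscr{S}(\mathfrak{su}(n))$ and $H$ the corresponding connected subgroup. For the Cartan data I would take $\tilde\tot$ to be the space of traceless real diagonal $N$-by-$N$ matrices, and $\tot\subset\tilde\tot$ to be the image under $\mathscr{S}$ of the diagonal Cartan subalgebra of $\mathfrak{su}(n)$; then $\tot$ is identified with $\{\xi\in\R^n:\sum_j\xi_j=0\}$ via $\xi\mapsto\mathscr{S}(\diag(\xi))$, and $\tot_+$ with $\{\mu:\mu_1\ge\cdots\ge\mu_n,\ \sum_j\mu_j=0\}$, which is the domain of $p^{\mathrm{bos}}_\lambda$. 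Under this identification the Weyl group $W_\gog=S_N$ acts on $\tilde\tot\cong\{y\in\R^N:\sum_{|\alpha|=k}y_\alpha=0\}$ by permuting the coordinates indexed by the multi-indices $\alpha$. Because we work with $\SU(N)$ rather than $\U(N)$, there is no nonzero $W_\gog$-invariant subspace of $\tilde\tot$, so---unlike in the distinguishable-particle case---the pushforward $\psi_*\phi_*\beta_\lambda$ is not concentrated on a proper affine subspace of $\tot$, and it genuinely has a density there once $\Phi^+_{\gog/\goh}$ spans $\tot$.

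With the dictionary in place, the formula \eqref{eqn:integral-formula-gen} requires only four inputs. First, $\Delta_\goh(x)=\Delta_n(\mu)$, since the positive roots of $\mathfrak{su}(n)$ restrict to the differences $\mu_i-\mu_j$ on $\tot$. Second, as observed just above the statement, the restriction of $\Delta_\gog$ to $\tot$ is the ``evaluation Vandermonde'' $\Delta_N([\alpha\cdot\xi]_{|\alpha|=k})$, and the multiset $\Phi^+_{\gog/\goh}$ spans $\tot$. Third, the alternating sum over $W_\gog$ collapses to the determinant appearing in \eqref{eqn:density-formula-bos}: since $\langle\xi,w(\lambda)\rangle=\sum_{|\alpha|=k}(\alpha\cdot\xi)\,\lambda_{w^{-1}(\alpha)}$ for $\xi\in\tot$ and $w\in S_N$, expanding over permutations yields
\[
\sum_{w\in W_\gog}\epsilon(w)\,e^{i\langle\xi,w(\lambda)\rangle}
=\det\big[e^{i(\alpha\cdot\xi)\lambda_\beta}\big]_{|\alpha|,|\beta|=k}
=\det\big[e^{i\lambda_\alpha(\beta\cdot\xi)}\big]_{|\alpha|,|\beta|=k},
\]
which is exactly the step that turns the Harish-Chandra formula into HCIZ \eqref{eqn:hciz}. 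Fourth, for the constant \eqref{eqn:integral-constant-gen} I would use the classical evaluation $\Delta_{\mathfrak{su}(m)}(\rho_{\mathfrak{su}(m)})=\prod_{l=1}^{m-1}l!$, so that $\Delta_\gog(\rho_\gog)/\Delta_\goh(\rho_\goh)=\prod_{l=n}^{N-1}l!$; combined with $|\Phi^+_\gog|-|\Phi^+_\goh|=\tfrac12\big(N(N-1)-n(n-1)\big)$ and $r=\dim\tot=n-1$, this gives $C^{\mathrm{bos}}_{n|k}$ as in \eqref{eqn:kappa-def-bos}. The hypotheses of Theorem~\ref{thm:integral-formula-gen} are satisfied: $\Delta_\gog(\lambda)=\Delta_N(\lambda)\ne0$ because the $\lambda_\alpha$ are distinct, and the spanning of $\Phi^+_{\gog/\goh}$ together with the non-vanishing of $\Delta_\gog|_\tot$ were recorded above. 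Substituting the four inputs into \eqref{eqn:integral-formula-gen} produces \eqref{eqn:density-formula-bos}.

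The step I expect to be the main obstacle is the normalization bookkeeping, where it is easy to introduce spurious constants. The embedding $\mathscr{S}$ of \eqref{eqn:sym-map-def} is not an isometry: it rescales the invariant inner product by a factor $c=\Tr(\mathscr{S}(X)^2)/\Tr(X^2)$ (independent of $X\ne0$), so the marginal map $\phi$ that appears in the HCIZ computation---the physical partial trace, a fixed multiple of the Frobenius adjoint $\mathscr{S}^{*}$---differs by this scalar from the literal orthogonal projection of $\mathrm{Her}(N)_0$ onto the image of $\mathscr{S}$, just as the marginals $(\pi_1,\pi_2)$ differ from the orthogonal projection \eqref{eqn:mm-formula} by the rescaling \eqref{eqn:correspondence-k} in the distinguishable case. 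One must therefore track how $c$ enters the induced inner product on $\tot$, the Lebesgue measure $d\xi$, and the exponential pairings, and check that all powers of $c$ cancel, leaving the clean constant \eqref{eqn:kappa-def-bos} and the genuine spectrum $\lambda$ of $M$ in the formula. The cleanest way to handle this is to avoid Lie theory entirely and repeat the Fourier-analytic argument in the proof of Corollary~\ref{cor:density-formula} essentially verbatim: compute $\mathscr{F}[\phi_*\beta_\lambda]$ by the HCIZ formula \eqref{eqn:hciz} applied to the symmetric-power representation, whose spectrum on $\mathrm{Sym}^k\C^n$ is $\{\alpha\cdot\mu\}_{|\alpha|=k}$; take the inverse Fourier transform and pass to generalized polar coordinates on $\mathrm{Her}(n)_0$ using the orbit-volume formula from \eqref{eqn:orbit-volumes}; apply HCIZ once more over $\U(n)$; and expand the resulting determinant over $S_n$ to collapse the sum and reach \eqref{eqn:density-formula-bos}.
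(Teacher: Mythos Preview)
Your proposal is correct and follows essentially the same approach as the paper: the paper's proof is simply the setup paragraphs preceding the corollary, which identify $\Delta_\gog|_\tot=\Delta_N([\alpha\cdot\xi]_{|\alpha|=k})$, verify the spanning and non-vanishing hypotheses, and then invoke Theorem~\ref{thm:integral-formula-gen} directly. Your write-up is in fact more detailed than the paper's---you make explicit the determinant identification and the constant computation, and you flag the normalization subtlety arising from $\mathscr{S}$ not being an isometry, which the paper passes over silently.
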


\subsection{Fermions} \label{sec:formula-fer}
In the case of $k$ indistinguishable fermions, we consider the space of alternating tensors
$$ \mathcal{V}^\mathrm{fer}_{n | k} = \wedge^k \C^n, $$
which we model as $k$-vectors
$$v_1 \wedge \cdots \wedge v_k, \qquad v_1, \hdots, v_k \in  \C^n.$$
We make $\mathcal{V}^\mathrm{fer}_{n | k}$ into a Hilbert space by choosing as an orthonormal basis the $k$-vectors
$$e_{i_1} \wedge \cdots \wedge e_{i_k}, \qquad 1 \le i_1 < \hdots < i_k \le n,$$
where $e_j$ is the $j$th standard basis vector of $\C^n$.  There are $K = {{n} \choose {k}}$ such $k$-vectors labeled by $n$-component multi-indices $\alpha$ with $k$ entries equal to 1 and all other entries equal to zero.  We write $\mathcal{I}$ for the set of such multi-indices, which we again order lexicographically by their entries.  We use $\mathcal{I}$ to label the standard coordinates in $\R^K$ or $\C^K$, writing e.g.~$\lambda = ( \lambda_\alpha )_{\alpha \in \mathcal{I}} \in \R^{K}.$ As in the bosonic case, we then obtain isomorphisms
$$\mathcal{V}^\mathrm{fer}_{n | k} \cong \C^{K}, \qquad \SU(\mathcal{V}^\mathrm{fer}_{n | k}) \cong \SU(K),$$
as well as homomorphisms
\begin{equation} \label{eqn:fer-embed}
\SU(n) \to \SU(K), \qquad \mathfrak{su}(n) \to \mathfrak{su}(K),
\end{equation}
where $\SU(n)$ acts on $\wedge^k \C^n$ by acting in the usual way on each of the vectors in the wedge product $v_1 \wedge \cdots \wedge v_k$. Again, this representation extends to representation on $(\C^n)^{\otimes k}$ via the embeddings (\ref{eqn:bos-embed}), (\ref{eqn:sym-map-def}).

The quantum marginal problem for $k$ indistinguishable fermions is the special case of the problem studied in Section \ref{sec:projections-general} where $G = \SU(K)$ and $H$ is the image of $\SU(n)$ in $\SU(K)$ under (\ref{eqn:fer-embed}).  Note that if $k = 1$ or $n-1$ then $K = n$.  In these cases $G = H$, so that the quantum marginal problem is trivial (see Remark \ref{rem:unfaithful}).  Moreover, if $k = n$ then $K = 1$ and $\mathcal{V}^\mathrm{fer}_{n | n} \cong \C$. In this case the maps in (\ref{eqn:fer-embed}) are constant, and again the problem is trivial.  Finally, if $k > n$ then $\mathcal{V}^\mathrm{fer}_{n | k} = \{ 0 \}$, so that the quantum marginal problem is both trivial and physically meaningless.  Accordingly, we assume from now on that $1 < k < n-1$.

As before, we work over the spaces of traceless Hermitian matrices $\mathrm{Her}(K)_0$ and $\mathrm{Her}(n)_0$, which we regard as subspaces of $\mathrm{Her}(n^k)_0$.  We reuse the same notation, writing $\phi: \mathrm{Her}(K)_0 \to \mathrm{Her}(n)_0$ for the orthogonal projection and $\beta_\lambda$ for the invariant probability measure on an $\SU(K)$-orbit $\mathcal{O}_\lambda \subset \mathrm{Her}(K)_0$.  Exactly as in the bosonic case, we write $\tot$ for the space of $n$-by-$n$ real diagonal matrices, $\tot_+ \subset \tot$ for the space of diagonal matrices with non-increasing entries down the diagonal, and $\psi : \mathrm{Her}(n)_0 \to \tot_+$ for the diagonalization map.  Again, we will describe the pushforward measure $\psi_* \phi_* \beta_\lambda$ by a specialization of Theorem \ref{thm:integral-formula-gen}.

To each multi-index $\alpha \in \mathcal{I}$, we associate the $k$-tuple
$$i(\alpha) = \big(i(\alpha)_1, \, \hdots, \, i(\alpha)_k \big) \in \{ 1, \hdots, n \}^k$$
such that $i(\alpha)_j$ is equal to the position of the $j$th nonzero entry of $\alpha$.  If a matrix $M \in \mathrm{Her}(n)_0$ has eigenvectors $v_1, \hdots, v_n$ with respective eigenvalues $\mu_1 \ge \hdots \ge \mu_n$, then its image in $\mathrm{Her}(K)_0$ under (\ref{eqn:fer-embed}), regarded as an operator on $\wedge^k \C^n$, has eigenvectors
$v_{i(\alpha)_{1}} \wedge \cdots \wedge v_{i(\alpha)_{k}}$
with respective eigenvalues $\alpha \cdot \mu$, for $\alpha \in \mathcal{I}$.  We thus find that
$$\Delta_\gog(\xi) = \Delta_{K}([\alpha \cdot \xi]_{\alpha \in \mathcal{I}}), \qquad \xi \in \tot.$$
Again it is easy to check that the above expression does not vanish uniformly and that $\Phi^+_{\gog/\goh}$ spans $\tot$.  From Theorem \ref{thm:integral-formula-gen}, we then find:

\begin{cor} \label{cor:density-formula-fer}
Take $\lambda \in \R^{K}$ with all $\lambda_\alpha$ distinct for $\alpha \in \mathcal{I}$. Then the pushforward measure $\psi_* \phi_* \beta_\lambda$ has a density with respect to Lebesgue measure on $\tot_+$, which is given by
\begin{equation} \label{eqn:density-formula-fer}
p^{\mathrm{fer}}_\lambda(\mu) = C^\mathrm{fer}_{n | k} \, \frac{\Delta_n(\mu)}{\Delta_{K}(\lambda)} \int_{\tot}
\frac{ \Delta_n(\xi)}{ \Delta_{K}([\alpha \cdot \xi]_{\alpha \in \mathcal{I}})} \, e^{-i \sum \mu_j \xi_j} \det \big[ e^{i \lambda_{\alpha} (\beta \cdot \xi) } \big]_{\alpha , \beta \in \mathcal{I}} \, d\xi,
\end{equation}
where $d\xi$ is Lebesgue measure on $\tot \cong \{ x \in \R^n \ | \ \sum x_j = 0 \}$ and
\begin{equation} \label{eqn:kappa-def-bos}
C^\mathrm{fer}_{n | k} = \frac{(-i)^{\frac{1}{2}(K(K-1) - n(n-1))}}{(2\pi)^{n-1}} \frac{\prod_{l=1}^{K-1}l!}{\prod_{l=1}^{n-1}l!}.
\end{equation}
\end{cor}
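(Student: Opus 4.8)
The plan is to obtain Corollary \ref{cor:density-formula-fer} as a direct specialization of Theorem \ref{thm:integral-formula-gen}, running exactly parallel to the bosonic Corollary \ref{cor:density-formula-bos} with $\mathrm{Sym}^k$ replaced by $\wedge^k$ throughout; the only genuinely new input is the shape of the restriction $\Delta_\gog|_\tot$ for the exterior power. Concretely, I would take $G=\SU(K)$ and $H$ the image of $\SU(n)$ under (\ref{eqn:fer-embed}), so that the assumption $1<k<n-1$ forces $K=\binom nk>n$ and $G\neq H$. Let $\tilde\tot$ be the traceless real diagonal $K$-by-$K$ matrices and $\tot$ the image of the traceless real diagonal $n$-by-$n$ matrices; by the eigenvalue computation recalled just before the statement, a diagonal $\xi=(\xi_1,\dots,\xi_n)$ is then carried to the diagonal matrix with entries $(\alpha\cdot\xi)_{\alpha\in\mathcal I}$. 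With these choices $W_\gog=S_K$ and $W_\goh=S_n$ act by permuting coordinates, $\Delta_\goh$ restricts to $\Delta_n$, and $\Delta_\gog$ restricts to $\xi\mapsto\Delta_K\big([\alpha\cdot\xi]_{\alpha\in\mathcal I}\big)$.

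Next I would verify the two standing hypotheses of Theorem \ref{thm:integral-formula-gen} — that $\Delta_\gog$ is not identically zero on $\tot$ and that $\Phi^+_{\gog/\goh}$ spans $\tot$ — which is also what guarantees existence of the density. For the first, taking $\xi_j=2^j$ makes the $\binom nk$ sums $\alpha\cdot\xi$ pairwise distinct (distinct subsets of powers of two), so $\Delta_K([\alpha\cdot\xi])\neq 0$ there. For the second, it is enough that the restrictions $\xi\mapsto(\alpha-\beta)\cdot\xi$ to $\tot$ of the roots of $\mathfrak{su}(K)$ span $\tot^*$; choosing $\alpha,\beta\in\mathcal I$ that agree in $k-1$ of their nonzero positions — possible exactly because $1<k<n-1$ — realizes every linear form $\xi_i-\xi_j$, and these already span the traceless subspace.

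With the hypotheses in place, Theorem \ref{thm:integral-formula-gen} gives $p^{\mathrm{fer}}_\lambda$ in the form $C_{\gog/\goh}\,\frac{\Delta_n(\mu)}{\Delta_K(\lambda)}\int_\tot \frac{\Delta_n(\xi)}{\Delta_K([\alpha\cdot\xi]_{\alpha\in\mathcal I})}\sum_{w\in S_K}\epsilon(w)\,e^{i\langle\xi,\,w(\lambda)-x\rangle}\,d\xi$, where $x\in\tot_+$ is the point whose coordinates are $\mu$. Since the $\alpha$-th coordinate of $\xi\in\tot$ sitting inside $\tilde\tot$ is $\alpha\cdot\xi$, the pairing $\langle\xi,w(\lambda)\rangle$ equals $\sum_{\alpha\in\mathcal I}(\alpha\cdot\xi)\,\lambda_{w^{-1}(\alpha)}$; expanding the sum over $S_K$ as a determinant and transposing turns $\sum_{w}\epsilon(w)\,e^{i\sum_\alpha(\alpha\cdot\xi)\lambda_{w^{-1}(\alpha)}}$ into $\det\big[e^{i\lambda_\alpha(\beta\cdot\xi)}\big]_{\alpha,\beta\in\mathcal I}$, which is (\ref{eqn:density-formula-fer}). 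For the prefactor, $|\Phi^+_\gog|=\binom K2$, $|\Phi^+_\goh|=\binom n2$ and $r=\dim\tot=n-1$, while $\Delta_{\mathfrak{su}(K)}(\rho_\gog)=\prod_{i<j}(j-i)=\prod_{l=1}^{K-1}l!$ and likewise $\Delta_{\mathfrak{su}(n)}(\rho_\goh)=\prod_{l=1}^{n-1}l!$, so (\ref{eqn:integral-constant-gen}) collapses to $C^\mathrm{fer}_{n|k}$ in (\ref{eqn:kappa-def-bos}).

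The main point requiring care — and the step I expect to be the actual obstacle rather than the formal manipulations — is the normalization bookkeeping. The embedding $\mathfrak{su}(n)\hookrightarrow\mathfrak{su}(K)$ is not an isometry for the Frobenius forms (it rescales the norm on the traceless diagonal by the factor $\binom{n-2}{k-1}$), so one must be consistent about which Euclidean structure on $\tot$ is in force: this controls which Lebesgue measures appear in $p^{\mathrm{fer}}_\lambda$ and under the integral, and which identification of the $\tot_+$-coordinate with $\mu\in\R^n$ is meant, exactly as in the discussion for distinguishable particles in Remark \ref{rem:distinguishable-lietheory}. Once the conventions are pinned down so that $d\xi$ and $d\mu$ are the standard Lebesgue measures on $\{x\in\R^n:\sum_j x_j=0\}$ and the relevant pairing is $\langle\xi,\diag(\mu)\rangle=\sum_j\xi_j\mu_j$, the formula takes the stated form, and the bosonic Corollary \ref{cor:density-formula-bos} follows from the identical argument with $\mathcal I$ replaced by the set of multi-indices of weight $k$.
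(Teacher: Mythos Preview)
Your proposal is correct and follows exactly the paper's approach: the paper obtains the corollary as a direct specialization of Theorem \ref{thm:integral-formula-gen}, after observing that $\Delta_\gog|_\tot(\xi)=\Delta_K([\alpha\cdot\xi]_{\alpha\in\mathcal I})$ and asserting without detail that this does not vanish identically and that $\Phi^+_{\gog/\goh}$ spans $\tot$. Your explicit verifications of these two hypotheses and your unpacking of the constant and the determinant are more thorough than the paper's treatment, but the route is identical.
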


\section{Properties of the densities} \label{sec:properties}

In this section we establish some basic properties of the density $p_\lambda$ described in Theorem \ref{thm:integral-formula-gen}.  We start with the following fact, which is an immediate consequence of two celebrated results in symplectic geometry: the Duistermaat--Heckman theorem \cite{DH} and the Kirwan convexity theorem \cite{Kirwan-convexity}.

\begin{prop} \label{prop:piecewise-poly}
Under the assumptions of Theorem \ref{thm:integral-formula-gen}, the density $p_\lambda$ is a piecewise polynomial function supported on a convex polytope in $\tot_+$, and its non-analyticities are contained in a finite collection of hyperplanes.  The local polynomial expressions of $p_\lambda$ have degree at most $|\Phi_\gog^+| + |\Phi_\goh^+| - r$, and all of them are divisible by $\Delta_\goh^2$.
\end{prop}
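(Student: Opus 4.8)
The plan is to read all four assertions off the integral formula of Theorem~\ref{thm:integral-formula-gen}; only the convexity of the support genuinely requires symplectic geometry. Recall from the proof of that theorem, together with the derivative principle (\ref{eqn:deriv-princ}), that for $x \in \tot_+$ one may write
\begin{equation}
p_\lambda(x) = \frac{\Delta_\goh(x)}{\Delta_\goh(\rho_\goh)}\,\Delta_\goh(-\partial)\,g(x),
\end{equation}
where $g := \mathscr{F}_\tot^{-1}\big[\mathcal{H}_G(\lambda, i\bullet)\big]$ is the density of the pushforward of $\beta_\lambda$ under the orthogonal projection $\mathcal{O}_\lambda \hookrightarrow \gog \to \tot$ --- equivalently, the abelian Duistermaat--Heckman density for the maximal torus $T \subset H$ acting on $\mathcal{O}_\lambda$. (If the integral defining $g$ is only conditionally convergent, cf.\ Remark~\ref{rem:integral-convergence}, one reads $\mathscr{F}_\tot^{-1}$ as a principal value; nothing below is affected.)

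The first step is to record the properties of $g$. Since $\Delta_\gog(\lambda) \ne 0$, the orbit $\mathcal{O}_\lambda$ is the full flag manifold of $G$, of complex dimension $|\Phi_\gog^+|$, so by the Duistermaat--Heckman theorem $g$ is a compactly supported piecewise polynomial function on $\tot$ whose local polynomial expressions have degree at most $|\Phi_\gog^+| - r$ and whose non-analyticities lie on finitely many hyperplanes; in the present setting this can also be seen directly by applying the Harish-Chandra formula to $\mathcal{H}_G(\lambda, i\bullet)$ and writing $g$ as an alternating sum over $W_\gog$ of shifts of the multivariate spline $\mathscr{F}_\tot^{-1}[1/\Delta_\gog(\bullet)]$, as in the derivation of (\ref{eqn:integral-formula-gen}). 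Moreover $g$ is $W_\goh$-invariant, because conjugation by elements of $N_H(T)$ permutes the fibers of the $T$-moment map equivariantly and preserves the Liouville measure. Applying the constant-coefficient operator $\Delta_\goh(-\partial)$, of order $|\Phi_\goh^+|$, then produces on $\tot_+$ a piecewise polynomial that has the same breaking hyperplanes, has pieces of degree at most $|\Phi_\gog^+| - r$, and --- since $\Delta_\goh(w\xi) = \epsilon(w)\Delta_\goh(\xi)$ forces $\Delta_\goh(-\partial)$ to intertwine the $W_\goh$-action with the sign character --- is $W_\goh$-\emph{skew}. Multiplying by $\Delta_\goh(x)$ now gives at once that $p_\lambda$ is piecewise polynomial, that its non-analyticities lie in a finite union of hyperplanes, that each local expression has degree at most $|\Phi_\gog^+| + |\Phi_\goh^+| - r$, and that each local expression is divisible by $\Delta_\goh$. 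For the support: each local expression is supported on a bounded chamber, so $p_\lambda$ is supported on a finite union of convex polytopes; that this union is a single convex polytope is the Kirwan convexity theorem applied to the moment map $\phi$ of the $H$-action on $\mathcal{O}_\lambda$, which identifies the support with the Kirwan polytope $\psi(\phi(\mathcal{O}_\lambda))$.

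What remains --- upgrading divisibility by $\Delta_\goh$ to divisibility by $\Delta_\goh^2$ --- is the step I expect to be the main obstacle. Equivalently, one must show that the $W_\goh$-skew piecewise polynomial $h := \Delta_\goh(-\partial)g$ is divisible by $\Delta_\goh$ on each of its chambers. A globally $W_\goh$-skew polynomial is automatically divisible by $\Delta_\goh$; the difficulty is that $h$ is only \emph{piecewise} polynomial, so its individual pieces do not obviously inherit the skew-symmetry, and the extra factor of $\Delta_\goh$ must survive piece by piece throughout $\tot_+$, not merely along its walls (which lie on simple-root hyperplanes). I would attack this by tracking the pieces of $h$ across each root hyperplane $H_\alpha$ of $\goh$: where $H_\alpha$ is not itself a wall of $h$, the local expression there is $s_\alpha$-anti-invariant, hence divisible by $\langle\alpha,\cdot\rangle$; where it is a wall, the $W_\goh$-invariance of $g$ and of its wall arrangement forces the two adjacent pieces to be $s_\alpha$-images of one another, which again yields divisibility of the relevant local expression; one then propagates divisibility by each $\langle\alpha,\cdot\rangle$, $\alpha \in \Phi_\goh^+$, to the remaining chambers through chains of interior-wall crossings, using that the jump of $h$ across an interior wall is divisible by that wall's defining form. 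A second, more geometric route is to argue from the identity $p_\lambda(x) = \kappa_\goh\,\Delta_\goh(x)^2\,q_\lambda(x)$ in the proof of Theorem~\ref{thm:integral-formula-gen} by establishing that $\Delta_\goh \cdot q_\lambda|_{\tot}$ is divisible by $\Delta_\goh$. In either approach, obtaining clean control of the piecewise structure of $p_\lambda$ along the whole root-hyperplane arrangement of $\goh$, rather than just along the boundary of $\tot_+$, is where the real work lies.
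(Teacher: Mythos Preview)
Your route via the \emph{abelian} Duistermaat--Heckman density $g$ for the maximal torus and the derivative principle is genuinely different from the paper's, and it does recover the piecewise polynomiality, the hyperplane locus of non-analyticity, the degree bound, and divisibility by $\Delta_\goh$.  But you have correctly identified that it leaves the extra factor of $\Delta_\goh$ as a real gap, and your proposed wall-crossing argument for propagating $s_\alpha$-antisymmetry chamber by chamber is delicate: once the chamber decomposition of $h=\Delta_\goh(-\partial)g$ inside $\tot_+$ is finer than the Weyl chamber structure, there is no obvious reason the individual polynomial pieces far from the walls should remember anything about $W_\goh$.

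The paper sidesteps this entirely by applying the Duistermaat--Heckman theorem not to the torus $T$ but to the \emph{nonabelian} $H$-action on $\mathcal{O}_\lambda$, whose moment map is $\phi$ itself.  That version of the theorem (already in Heckman's original paper, and implicit in the reference \cite{Heck}) says that the density $q_\lambda$ of $\phi_*\beta_\lambda$ on $\goh$ is itself piecewise polynomial of degree at most $\tfrac{1}{2}(\dim\mathcal{O}_\lambda-\dim\goh-r)=|\Phi_\gog^+|-|\Phi_\goh^+|-r$.  Since the proof of Theorem~\ref{thm:integral-formula-gen} already established the pointwise relation $p_\lambda(x)=\kappa_\goh\,\Delta_\goh(x)^2\,q_\lambda(x)$ on $\tot_+$, divisibility by $\Delta_\goh^2$ and the degree bound are then automatic, with nothing to prove.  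You came within a sentence of this at the end of your proposal but then asked for ``$\Delta_\goh\cdot q_\lambda|_\tot$ divisible by $\Delta_\goh$,'' which undersells the point: once one knows $q_\lambda$ is piecewise polynomial, the factor $\Delta_\goh^2$ is visible in the formula and there is no further divisibility to check.  The trade-off is that the paper invokes the nonabelian DH theorem as a black box, whereas your approach uses only the more elementary torus case at the cost of the unresolved divisibility step.
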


\begin{proof}
We first recall some basic facts from symplectic geometry. The coadjoint orbit $\mathcal{O}_\lambda$ carries a canonical $G$-invariant symplectic form, the Kostant--Kirillov--Souriau form \cite{Kirillov-lectures}.  The Liouville measure of this symplectic form is equal to
\begin{equation} \label{eqn:liouville-volume}
\frac{\Delta_\gog(\lambda)}{\Delta_\gog(\rho_\gog)} \beta_\lambda.
\end{equation}
After identifying the Lie algebras $\gog$ and $\goh$ with their duals, the projection $\phi: \gog \to \goh$ is a moment map for the action of $H$ on $\mathcal{O}_\lambda$.  The {\it Duistermaat--Heckman measure} associated to this group action is defined as the pushforward of the Liouville measure (\ref{eqn:liouville-volume}) by the moment map $\phi$; up to a constant, this is just the measure $\phi_* \beta_\lambda$ studied in Section \ref{sec:formulae}.  In the setting of $H$ acting on $\mathcal{O}_\lambda$, the Duistermaat--Heckman theorem \cite{DH} tells us that in a neighborhood of any regular value\footnote{Recall that a {\it regular value} of $\phi$ is a point $x \in \goh$ such that at every point $y$ in the preimage $\phi^{-1}(x)$, the differential map $d_y \phi: T_y \mathcal{O}_\lambda \to T_x \goh$ is surjective.} of $\phi$, the density $q_\lambda$ of $\phi_* \beta_\lambda$ is locally equal to a polynomial of degree at most
$$\frac{1}{2}(\dim \mathcal{O}_\lambda - \dim \goh - r) = |\Phi_\gog^+| - |\Phi_\goh^+| - r.$$
Additionally, if $x \in \goh$ is {\it not} a regular value of $\phi$, then its radial component $\psi(x) \in \tot_+$ satisfies at least one of a finite collection of linear equations depending on $\lambda$.  The Kirwan convexity theorem \cite{Kirwan-convexity} tells us that the restriction of $q_\lambda$ to $\tot_+$ is supported on a convex polytope.\footnote{A theorem of Berenstein--Sjamaar \cite{BerensteinSjamaar} gives a list of inequalities that cut out the facets of this supporting polytope, generalizing the results of Klyachko \cite{Klyachko-marginals} on the quantum marginal problem.}

With the above standard results in hand, all of the desired statements follow from the facts that $p_\lambda(x) = \kappa_\goh \Delta_\goh(x)^2 q_\lambda(x)$ for $x \in \tot_+$, and that $\deg \Delta_\goh = |\Phi_\goh^+|$.
\end{proof}

Figure 1 shows an example of the joint spectral density of the marginals of two distinguishable particles, with the local polynomial regions delineated, together with a numerically generated histogram.

For the special cases considered in Sections \ref{sec:formula-dist}, \ref{sec:formula-bos} and \ref{sec:formula-fer}, we find:

\begin{cor} \label{cor:degrees-QMprobs}
In the settings of Corollaries \ref{cor:density-formula}, \ref{cor:density-formula-bos} and \ref{cor:density-formula-fer} respectively, the local polynomial expressions of the densities $p_\lambda^{\mathrm{dst}}$, $p_\lambda^{\mathrm{bos}}$ and  $p_\lambda^{\mathrm{fer}}$ have degrees at most
$$\frac{1}{2}\big( mn(mn-1) + (m-1)(m-2) + (n-1)(n-2) \big),$$
$$\frac{1}{2}\big( N(N-1) + (n-1)(n-2) \big),$$
and
$$\frac{1}{2}\big( K(K-1) + (n-1)(n-2) \big).$$
\end{cor}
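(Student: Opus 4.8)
The plan is simply to specialize Proposition~\ref{prop:piecewise-poly}. In each of the three settings the group $G$ and subgroup $H$ have already been identified --- for distinguishable particles in Remark~\ref{rem:distinguishable-lietheory}, and for bosons and fermions in Sections~\ref{sec:formula-bos} and~\ref{sec:formula-fer} --- and the hypotheses of Theorem~\ref{thm:integral-formula-gen} (that $\Phi^+_{\gog/\goh}$ spans $\tot$, that $\Delta_\gog$ is not identically zero on $\tot$, and that $\Delta_\gog(\lambda)\neq 0$) were checked there for $\lambda$ with pairwise distinct coordinates. Hence Proposition~\ref{prop:piecewise-poly} applies and bounds the degree of each local polynomial piece of $p_\lambda$ by $|\Phi^+_\gog| + |\Phi^+_\goh| - r$. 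It then remains only to evaluate this number, using that the root system of $\mathfrak{su}(d)$ (equivalently of $\mathfrak{u}(d)$) has type $A_{d-1}$, so $|\Phi^+| = \binom{d}{2} = \tfrac12 d(d-1)$ and the rank is $d-1$, together with the elementary identity $\tfrac12 d(d-1) - (d-1) = \tfrac12(d-1)(d-2)$.

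For two-body distinguishable particles I would use the identification $G=\SU(mn)$, $H = \SU(m)\otimes\SU(n)$ from Remark~\ref{rem:distinguishable-lietheory}, giving $|\Phi^+_\gog| = \binom{mn}{2}$, $|\Phi^+_\goh| = \binom{m}{2} + \binom{n}{2}$, and $r = (m-1)+(n-1)$, so that
\[
|\Phi^+_\gog| + |\Phi^+_\goh| - r = \frac{mn(mn-1)}{2} + \frac{m(m-1)}{2} + \frac{n(n-1)}{2} - (m+n-2),
\]
and the claimed bound $\tfrac12\bigl(mn(mn-1) + (m-1)(m-2) + (n-1)(n-2)\bigr)$ follows from the identity above applied with $d=m$ and $d=n$. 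For $k$ bosons, $G = \SU(N)$ with $N = \binom{n+k-1}{k}$ and $H$ the image of $\SU(n)$, so $|\Phi^+_\gog| = \binom{N}{2}$, $|\Phi^+_\goh| = \binom{n}{2}$, $r = n-1$, and the same manipulation yields $|\Phi^+_\gog| + |\Phi^+_\goh| - r = \tfrac12\bigl(N(N-1)+(n-1)(n-2)\bigr)$. The fermionic case is identical with $N$ replaced by $K = \binom{n}{k}$.

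There is no real obstacle here: the entire content is carried by Proposition~\ref{prop:piecewise-poly}, and the corollary is a matter of bookkeeping with root-system data. The one point worth stating explicitly is that the computation must be carried out in the $\SU$ picture rather than the $\U$ picture, since only there does the spanning hypothesis on $\Phi^+_{\gog/\goh}$ hold (in the $\U$ picture the density is supported on a proper affine subspace, as in Remark~\ref{rem:unfaithful}); passing to $\SU$ as above puts us squarely within the hypotheses of Theorem~\ref{thm:integral-formula-gen} and hence of Proposition~\ref{prop:piecewise-poly}.
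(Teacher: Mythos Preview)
Your proposal is correct and follows exactly the approach the paper intends: the corollary is stated immediately after Proposition~\ref{prop:piecewise-poly} with no separate proof, as a direct specialization obtained by plugging in the root-system data for $\SU(mn)$, $\SU(N)$, $\SU(K)$ and their respective subgroups. Your bookkeeping (including the identity $\tfrac12 d(d-1) - (d-1) = \tfrac12(d-1)(d-2)$ and the remark about working in the $\SU$ rather than $\U$ picture) is accurate and matches what the paper leaves implicit.
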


\begin{center}
\includegraphics[width=165px]{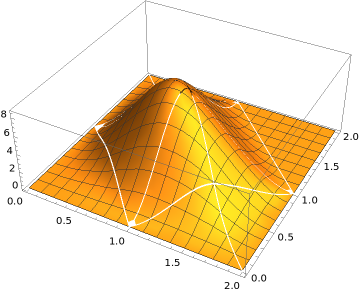}
\includegraphics[width=185px]{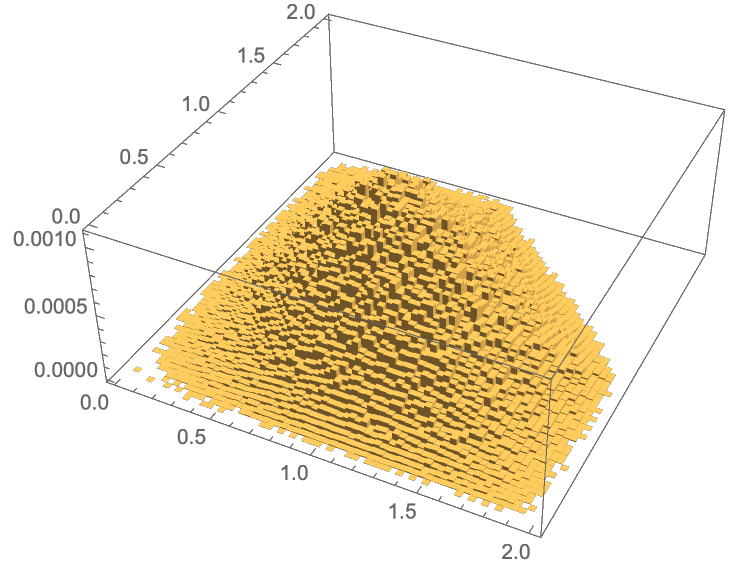}

\footnotesize
{\it Figure 1 (courtesy of Jean-Bernard Zuber \cite{Zuber-QMfigs}):} Joint spectral density of the marginals of two distinguishable particles with $m = n = 2$ and $\lambda = (\frac{3}{2}, \frac{1}{2}, -\frac{1}{2}. -\frac{3}{2})$.  The density is shown in the $(\mu_1, \nu_1)$ plane. \textbf{Left:} Plot of an analytic expression for the density derived from the formula (\ref{eqn:density-formula}), showing the 7 singular hyperplanes (in this case, lines).  The equations for the singular lines are $\mu_1 \in \{1, 2\}$, $\nu_1 \in \{1, 2\}$, and $\mu_1 \pm \nu_1 \in \{1,2,3\}$. \textbf{Right:} Histogram of spectra of the single-particle marginals of $10^6$ uniform random samples from the orbit $\mathcal{O}_\lambda$.
\normalsize
\end{center}
\medskip

Although $p_\lambda$ is not globally smooth, in general it has a minimum number of continuous derivatives even at points where it is not analytic.  Next we prove a lower bound on the degree of differentiability of $p_\lambda$ at each point of $\tot_+$ and give an explicit expression for this bound in the case of the quantum marginal problem studied in Section \ref{sec:formula-dist}.  The proof, based on a method used by Zuber to study the density of the randomized Horn's problem \cite{Z}, does not require any symplectic geometry.  We will need the following elementary fact of Fourier analysis.

\begin{lem} \label{lem:FT-decay-regularity}
Let $f \in L^1(\R^d)$.  If $Q \cdot \mathscr{F}[f] \in L^1(\R^d)$ for all polynomials $Q$ of degree $k$, then $f \in C^k(\R^d)$.
\end{lem}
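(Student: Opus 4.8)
The plan is to deduce the statement from the classical Fourier inversion theorem together with differentiation under the integral sign.

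First I would reduce the hypothesis to the more convenient assertion that $\xi^\alpha \mathscr{F}[f] \in L^1(\R^d)$ for every multi-index $\alpha$ with $|\alpha| \le k$. Taking $Q(\xi) = 2 + \xi_1^k$ and $Q(\xi) = \xi_1^k$ --- both polynomials of degree exactly $k$ --- and subtracting shows $\mathscr{F}[f] \in L^1(\R^d)$; more generally, for $|\alpha| \le k$ the polynomial $Q(\xi) = \xi^\alpha + \xi_1^k$ has degree exactly $k$, so $Q\,\mathscr{F}[f] \in L^1(\R^d)$, whence $\xi^\alpha \mathscr{F}[f] = Q\,\mathscr{F}[f] - \xi_1^k \mathscr{F}[f] \in L^1(\R^d)$. (If ``degree $k$'' is instead read as ``degree at most $k$'', this step is immediate.)

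Since $f \in L^1(\R^d)$ and $\mathscr{F}[f] \in L^1(\R^d)$, the Fourier inversion theorem applies and shows that $f$ agrees almost everywhere with the continuous function
\[
g(x) = \mathscr{F}^{-1}_{\R^d}\big[\mathscr{F}[f]\big](x) = \frac{1}{(2\pi)^d}\int_{\R^d} \mathscr{F}[f](\xi)\, e^{-i\langle x, \xi\rangle}\, d\xi .
\]
Thus $f$ already has a continuous representative, namely $g$, and it remains to show $g \in C^k(\R^d)$. For each multi-index $\alpha$ with $|\alpha| \le k$, the integrand above is smooth in $x$, and its $\alpha$-th partial derivative $(-i\xi)^\alpha \mathscr{F}[f](\xi)\, e^{-i\langle x, \xi\rangle}$ is bounded in absolute value by $|\xi^\alpha \mathscr{F}[f](\xi)|$, which lies in $L^1(\R^d)$ by the reduction above and is independent of $x$. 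The standard theorem on differentiation under the integral sign therefore gives $\partial^\alpha g(x) = \frac{(-i)^{|\alpha|}}{(2\pi)^d}\int_{\R^d} \xi^\alpha \mathscr{F}[f](\xi)\, e^{-i\langle x, \xi\rangle}\, d\xi$ for all $|\alpha| \le k$, and dominated convergence (the integrand being continuous in $x$ with an $x$-independent $L^1$ majorant) shows that each such $\partial^\alpha g$ is continuous. Hence $g \in C^k(\R^d)$, so $f$ coincides a.e.\ with a $C^k$ function, as claimed.

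I do not anticipate a genuine obstacle here; the result is a routine consequence of Fourier inversion. The only points that merit a word of care are the elementary bookkeeping in the first step --- passing from a hypothesis about polynomials of a fixed degree to one about all monomials of degree at most $k$ --- and the customary abuse of identifying the $L^1$-class $f$ with its continuous representative.
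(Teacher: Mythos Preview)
Your proof is correct and follows essentially the same route as the paper: Fourier inversion combined with differentiation under the integral sign, using the hypothesis to supply an $x$-independent $L^1$ majorant. The paper organizes the computation slightly differently---it works with directional derivatives $\partial_{v_1}\cdots\partial_{v_k}$ of order exactly $k$ rather than partial derivatives $\partial^\alpha$ with $|\alpha|\le k$, and does not spell out the reduction from ``degree $k$'' to ``degree at most $k$'' that you handle in your first paragraph---but the underlying argument is the same.
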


\begin{proof}
Suppose that $Q \cdot \mathscr{F}[f] \in L^1(\R^d)$ for all polynomials $Q$ of degree $k$.  It suffices to show that $D f \in C^0(\R^d)$ for $D$ an arbitrary homogeneous differential operator of order $k$.  For $v \in \R^d$, write $\partial_v f(x) = \frac{d}{dt} f(x + tv)  |_{t=0}$.  We can decompose $D$ as a product of first-order differential operators $D = \partial_{v_1} \cdots \partial_{v_k}$ for some $v_1, \hdots, v_k \in \R^d$.  Then since the function
$$\bigg( \prod_{j=1}^k \langle -i \xi, v_j \rangle \bigg ) \mathscr{F}[f](\xi)$$
is in $L^1(\R^d)$, its inverse Fourier transform is uniformly continuous and equals
\begin{align*}
\mathscr{F}^{-1} \left[ \bigg( \prod_{j=1}^k \langle -i \xi, v_j \rangle \bigg ) \mathscr{F}[f](\xi) \right](x) &= \frac{1}{(2 \pi)^{d/2}} \int_{\R^d} \bigg( \prod_{j=1}^k \langle -i \xi, v_j \rangle \bigg ) \mathscr{F}[f](\xi) \, e^{-i\langle \xi, x \rangle} d \xi \\
&= \frac{1}{(2 \pi)^{d/2}} \int_{\R^d} \mathscr{F}[f](\xi) \, D e^{-i\langle \xi, x \rangle} d \xi,
\end{align*}
where the operator $D$ acts in the $x$ variables.  The functions $\mathscr{F}[f](\xi) \, e^{-i\langle \xi, x \rangle}$ and $\mathscr{F}[f](\xi) \, D e^{-i\langle \xi, x \rangle}$ are both uniformly continuous in $x$ and $\xi$, since $f \in L^1(\R^d)$.  Thus we may pull $D$ out of the integral to obtain
\begin{align*}
\mathscr{F}^{-1} \left[ \bigg( \prod_{j=1}^k \langle -i \xi, v_j \rangle \bigg ) \mathscr{F}[f](\xi) \right](x) &= \frac{1}{(2 \pi)^{d/2}} D \int_{\R^d} \mathscr{F}[f](\xi) \, e^{-i\langle \xi, x \rangle} d \xi \\
&= D \mathscr{F}^{-1}[ \mathscr{F}[f]](x) = Df(x).
\end{align*}
Therefore $Df(x)$ exists and is uniformly continuous, which completes the proof.
\end{proof}

We now show a result on the differentiability of $p_\lambda$.  The essential idea is to use the lemma above and show that $Q \cdot \mathscr{F}_\tot [p_\lambda] \in L^1(\tot)$ whenever $Q$ is a polynomial of suitably low degree.  In order to show integrability, we partition $\tot$ into finitely many subsets and consider the integral over each subset separately.  Below in Remark \ref{rem:spline-regularity} we sketch an alternate proof, which is superficially shorter but relies on abstract constructions from representation theory.

\begin{thm} \label{thm:density-regularity}
In the setting of Theorem \ref{thm:integral-formula-gen}, let $\ell$ be the largest integer such that all subsets of $\Phi^+_{\gog/\goh}$ obtained by deleting $\ell+1$ vectors (counted with multiplicity) span $\tot$.  Then $p_\lambda \in C^\ell(\tot_+)$.
\end{thm}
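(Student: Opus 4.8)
\emph{Proof strategy.}
The plan is to apply Lemma~\ref{lem:FT-decay-regularity}. First I would extend $p_\lambda$ from $\tot_+$ to all of $\tot$ by the right-hand side of (\ref{eqn:integral-formula-gen}) --- equivalently by $\kappa_\goh\Delta_\goh^2\,q_\lambda$, where $q_\lambda$ is the $W_\goh$-invariant restriction to $\tot$ of the Duistermaat--Heckman density. This extension, still denoted $p_\lambda$, is $W_\goh$-invariant, nonnegative, agrees with the original density on $\tot_+$, and lies in $L^1(\tot)$ (it integrates to $|W_\goh|$). Since being $C^\ell$ on all of $\tot$ is stronger than being $C^\ell$ on the closed chamber $\tot_+$, it suffices to show the extension lies in $C^\ell(\tot)$, so by Lemma~\ref{lem:FT-decay-regularity} (with $d=r$) it is enough to prove $Q\,\mathscr{F}_\tot[p_\lambda]\in L^1(\tot)$ for every polynomial $Q$ of degree $\ell$. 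Applying $\mathscr{F}_\tot$ to (\ref{eqn:p-almost-FT}) (equivalently to Corollary~\ref{cor:pushfwd-FT}) and using $\mathscr{F}_\tot[\Delta_\goh(\bullet)\,h](\xi)=\Delta_\goh(-i\partial_\xi)\mathscr{F}_\tot[h](\xi)$, I get
\[
\mathscr{F}_\tot[p_\lambda](\xi)=\frac{1}{\Delta_\goh(\rho_\goh)}\,\Delta_\goh(-i\partial_\xi)\,\tilde G(\xi),\qquad \tilde G(\xi):=\Delta_\goh(i\xi)\,\mathcal{H}_G(\lambda,i\xi),
\]
and by the Harish-Chandra formula (\ref{eqn:hc-integral}), together with $\Delta_\gog(i\xi)=i^{|\Phi^+_\gog|}\Delta_\gog(\xi)$ and $\Delta_\goh/\Delta_\gog=\Delta_{\gog/\goh}^{-1}$ on $\tot$,
\[
\tilde G(\xi)=c_0\,\frac{N(\xi)}{\Delta_{\gog/\goh}(\xi)},\qquad N(\xi):=\sum_{w\in W_\gog}\epsilon(w)\,e^{i\langle w(\lambda),\xi\rangle},
\]
for a constant $c_0$; as in Remark~\ref{rem:integral-convergence}, the apparent poles are removable and $\tilde G$ is entire.

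The second step is a pointwise bound on $\mathscr{F}_\tot[p_\lambda]$. The function $N$ and all of its derivatives are bounded on $\tot$ (differentiation only multiplies each summand by a constant), and $N$ vanishes on every hyperplane $\langle\alpha,\cdot\rangle=0$ with $\alpha\in\Phi^+_{\gog/\goh}$: each such $\alpha$ is the restriction to $\tot$ of a root of $\gog$, and if $\langle\alpha,\xi\rangle=0$ for $\xi\in\tot$ then the reflection in that root lies in $W_\gog$, fixes $\xi$, and has sign $-1$, so it pairs up the terms of $N$ with cancellation. Since $\tilde G$ is entire, $N$ is moreover divisible by $\Delta_{\gog/\goh}$. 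Peeling off the linear factors $\langle\alpha,\cdot\rangle$ of $\Delta_{\gog/\goh}$ one at a time --- Taylor-expanding $N$ transverse to each hyperplane, using the uniform bounds on the derivatives of $N$, and controlling the derivatives produced by $\Delta_\goh(-i\partial)$ with Cauchy estimates --- I expect to obtain
\[
\big|\,Q(\xi)\,\mathscr{F}_\tot[p_\lambda](\xi)\,\big|\;\lesssim\;(1+|\xi|)^{\deg Q}\prod_{\alpha\in\Phi^+_{\gog/\goh}}\big(1+|\langle\alpha,\xi\rangle|\big)^{-1},\qquad \xi\in\tot.
\]
This is the estimate one should expect: $p_\lambda$ is, up to the constant-coefficient operator $\Delta_\goh(-\partial)$ and a finite combination of translates, a multivariate (simplex) spline with direction multiset $\Phi^+_{\gog/\goh}$.

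It then remains to prove the combinatorial integrability statement
\[
\int_\tot (1+|\xi|)^{\ell}\prod_{\alpha\in\Phi^+_{\gog/\goh}}\big(1+|\langle\alpha,\xi\rangle|\big)^{-1}\,d\xi\;<\;\infty .
\]
Two reformulations of the hypothesis on $\ell$ enter: (a) for every nonzero $v\in\tot$ the number of $\alpha\in\Phi^+_{\gog/\goh}$ (with multiplicity) for which $\langle\alpha,v\rangle\neq 0$ is at least $\ell+2$; and (b) $\ell\le|\Phi^+_{\gog/\goh}|-r-1$, since $\Phi^+_{\gog/\goh}$ spans $\tot$. Both follow by dualizing: $\Phi^+_{\gog/\goh}\setminus S$ spans $\tot$ precisely when $\bigcap_{\alpha\notin S}\{\langle\alpha,\cdot\rangle=0\}=\{0\}$, so if some nonzero $v$ had $\langle\alpha,v\rangle\neq 0$ for only $\le\ell+1$ of the $\alpha$'s, then deleting exactly those would leave a non-spanning set, contradicting the definition of $\ell$.

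The last step --- and the one I expect to be the main obstacle --- is the integrability estimate itself. I would prove it by decomposing $\tot$ into finitely many regions adapted to the arrangement $\{\langle\alpha,\cdot\rangle=0\}_{\alpha\in\Phi^+_{\gog/\goh}}$: on each region one sorts the roots into ``small'' ($|\langle\alpha,\xi\rangle|\le 1$) and ``large'', and near unbounded directions lying on a flat of the arrangement one recursively subdivides so as to split off the directions along which still more roots vanish. On each resulting region a suitable subset $\langle\alpha_1,\xi\rangle,\dots,\langle\alpha_k,\xi\rangle$ of the large roots can be completed to a linear coordinate system in which, after integrating out the variables that remain bounded, the integral splits into a product of one-dimensional integrals of the form $\int_\R (1+|s|)^{a}(1+|s|)^{-b}\,ds$; conditions (a) and (b) are exactly what forces $b\ge a+2$ in each factor, so every piece is finite. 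Tracking multiplicities and the stratification of the arrangement through this decomposition is the technical heart of the proof; the analogous $L^1$ bounds for the Fourier transforms of box and simplex splines are classical. Combining the three steps with Lemma~\ref{lem:FT-decay-regularity} gives $p_\lambda\in C^\ell(\tot)$, and hence $p_\lambda\in C^\ell(\tot_+)$.
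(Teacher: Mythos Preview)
Your proposal is correct and follows essentially the same three-step route as the paper: invoke Lemma~\ref{lem:FT-decay-regularity}, establish a pointwise bound on the Fourier side coming from the entire function $N/\Delta_{\gog/\goh}$, and then prove the integrability estimate by decomposing $\tot$ according to the hyperplane arrangement $\{\langle\alpha,\cdot\rangle=0\}_{\alpha\in\Phi^+_{\gog/\goh}}$. Two points where the paper's execution is cleaner: first, rather than applying the lemma to the extended $p_\lambda$ (whose Fourier transform involves the derivative $\Delta_\goh(-i\partial)\tilde G$ and thus forces you into the Cauchy-estimate step), the paper applies it to $\Delta_\goh^{-1}p_\lambda=\mathscr{F}_\tot^{-1}[\tilde G]$ and then multiplies by the polynomial $\Delta_\goh$ at the end---this avoids controlling derivatives altogether; second, in place of your Taylor-peeling argument for the pointwise bound, the paper simply notes that $\tilde G$ is the Fourier transform of a compactly supported $L^1$ function, hence uniformly continuous, and combines this with the trivial bound $|N|\le|W_\gog|$ to obtain $|\tilde G|\lesssim |\Delta_\sigma|^{-1}$ directly on each region $N_\sigma$.
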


\begin{remark} \label{rem:negative-ell}
Clearly $\ell \ge -1$, since $\Phi^+_{\gog/\goh}$ is assumed to span $\tot$.  In some cases $\ell = -1$ and $p_\lambda$ is discontinuous. For example, this occurs when $G = \SU(2) \times \SU(2)$ and $H$ is the diagonal subgroup, in which case $\tot$ is 1-dimensional and $\Phi^+_{\gog / \goh}$ contains only a single vector (see \cite{CZ1}).
\end{remark}

\begin{proof}[Proof of Theorem \ref{thm:density-regularity}]
We assume that $\ell \ge 0$, as otherwise there is nothing to prove.  Let $Q$ be an arbitrary polynomial on $\tot$ of degree $\ell$.  From Corollary \ref{cor:pushfwd-FT}, we have
\begin{equation} \label{eqn:p-as-FT}
p_\lambda(x) = \frac{\Delta_\goh(x)}{\Delta_\goh(\rho_\goh)} \mathscr{F}_\tot^{-1}\big[ \Delta_\goh( i \bullet ) \, \mathcal{H}_G(\lambda, i \bullet ) \big] (x), \qquad x \in \tot_+.
\end{equation}
By Lemma \ref{lem:FT-decay-regularity}, it thus suffices to show that $Q(\bullet) \Delta_\goh(\bullet) \mathcal{H}_G(\lambda, i \bullet ) \in L^1(\tot)$.  From the Harish-Chandra formula (\ref{eqn:hc-integral}), we have
\begin{align*}
\Delta_\goh(\xi) \mathcal{H}_G(\lambda, i \xi) &= \frac{\Delta_\gog(\rho_\gog) \Delta_\goh(\xi)}{\Delta_\gog(\lambda) \Delta_\gog(i \xi)} \sum_{w \in W_\gog} \epsilon(w) \, e^{i \langle w(\lambda), \xi \rangle} \\
&\propto \frac{1}{\Delta_{\gog / \goh}(\xi)} \sum_{w \in W_\gog} \epsilon(w) \, e^{i \langle w(\lambda), \xi \rangle}, \qquad \xi \in \tot.
\end{align*}
Write $F(\xi) = \Delta_{\gog / \goh}(\xi)^{-1} \sum_{w \in W_\gog} \epsilon(w) \, e^{i \langle w(\lambda), \xi \rangle}$ for the function appearing in the final line of the display above.  As discussed above in Remark \ref{rem:integral-convergence}, $F$ is ostensibly singular on the hyperplanes where $\Delta_{\gog / \goh}$ vanishes, but these singularities are removable, as the sum over the Weyl group vanishes to at least the same order at all such points.  In fact, the relation (\ref{eqn:p-as-FT}) states that $F$ is proportional to the Fourier transform of $\Delta_\goh^{-1} p_\lambda = \kappa_\goh \Delta_\goh q_\lambda$, which is a compactly supported, integrable function.  Therefore $F$ is uniformly continuous, decays to 0 at infinity on $\tot$, and extends to a holomorphic function on the complexification $\tot_\C$.

In order to show that $QF \in L^1(\tot)$, we will introduce a finite partition of $\tot$ and show that $|QF|$ is integrable on each set in the partition.  For each $\alpha \in \Phi_{\gog / \goh}^+$, let $H_{\alpha} = \{ x \in \tot \ | \ \langle \alpha, x \rangle = 0 \}$ be the hyperplane normal to $\alpha$.  Then
\begin{equation} \label{eqn:hyperplane-dist}
| \langle \alpha, x \rangle | = |\alpha| \, \mathrm{dist}(x, H_\alpha),
\end{equation}
where $\mathrm{dist}(x, H_\alpha)$ is the Euclidean distance.  The complement of the union of hyperplanes $$\bigcup_{\alpha \in \Phi^+_{\gog / \goh}} H_\alpha$$ consists of finitely many connected components, each of which is an open convex polyhedral cone.  Let $\mathcal{C}$ be an arbitrary one of these cones, $\overline{\mathcal{C}}$ its closure, and $\Sigma$ the set of faces of $\overline{ \mathcal{C} }$.  Since $\Phi^+_{\gog / \goh}$ spans $\tot$, the cone $\overline{\mathcal{C}}$ is salient (i.e.,~$\overline{\mathcal{C}} \setminus \{0\}$ is contained in some open half-space).  Each face $\sigma \in \Sigma$ is itself a salient convex polyhedral cone and is the intersection of $\overline{\mathcal{C}}$ with some number of hyperplanes $H_\alpha$.  For $\sigma \in \Sigma$, let $V(\sigma) = \{ \alpha \in \Phi^+_{\gog / \goh} \ | \ \sigma \subset H_\alpha \}$, and define
\begin{align*}
N_\sigma = \big \{ x \in \overline{\mathcal{C}} \ \big| \ &\mathrm{dist}(x, H_\alpha) \le |\alpha|^{-1} \quad \forall \alpha \in V(\sigma), \\
& \mathrm{dist}(x, H_\alpha) > |\alpha|^{-1} \quad \forall \alpha \not \in V(\sigma) \big \}.
\end{align*}

To show that $QF \in L^1(\tot)$, it clearly suffices to show that $\int_{\overline{\mathcal{C}}} |Q(x) F(x)| dx < \infty$, and  since the sets $N_\sigma$ form a finite partition of $\overline{\mathcal{C}}$, we need only show that the integral over each $N_\sigma$ is finite.  We first show some preliminary estimates on $F$.  Set
$$\Delta_\sigma(x) = \prod_{\alpha \not \in V(\sigma)} \langle \alpha, x \rangle.$$
Then by (\ref{eqn:hyperplane-dist}), $| \Delta_\sigma(x) | > 1$ on $N_\sigma$.  Write
$$ F(x) = \frac{f_\sigma(x)}{\Delta_\sigma(x)},$$
where
\begin{equation} \label{eqn:fsigma-def}
f_\sigma(x) = \frac{\sum_{w \in W_\gog} \epsilon(w) \, e^{i \langle w(\lambda), x \rangle}}{\prod_{\alpha \in V(\sigma)} \langle \alpha, x \rangle}.
\end{equation}
A direct computation with L'H\^opital's rule, differentiating the numerator and denominator of (\ref{eqn:fsigma-def}) each $|V(\sigma)|$ times, shows that the restriction of $f_\sigma$ to $\sigma$ is a trigonometric polynomial and is therefore bounded.  Thus we have the inequality
$$ |F(x)| < \frac{C}{| \Delta_\sigma(x) |}, \qquad x  \in \sigma.$$
(Here and below, we use $C$ to indicate a general positive constant whose exact value may differ between expressions.)  Since $F$ is uniformly continuous and all points of $N_\sigma$ lie within a bounded distance from $\sigma$, we then obtain the estimate
\begin{equation} \label{eqn:F-bound2}
|F(x)| < \frac{C}{| \Delta_\sigma(x) |}, \qquad x  \in N_\sigma.
\end{equation}
Since $|Q(x)| < C |x|^\ell$, it therefore suffices to show
\begin{equation} \label{eqn:estimate-simplified1}
\int_{N_\sigma} \frac{|x|^\ell}{|\Delta_\sigma(x)|} dx < \infty
\end{equation}
 for each $\sigma \in \Sigma$.

The only bounded face is $\sigma = \{0\}$, in which case $\Delta_\sigma \equiv 1$ and (\ref{eqn:estimate-simplified1}) obviously holds.  If $\sigma$ is any other (unbounded) face, we can reduce the estimate (\ref{eqn:estimate-simplified1}) to bounding an integral over $\sigma \cap N_\sigma$ rather than $N_\sigma$, as follows.  Let $P : \tot \to \mathrm{span}(\sigma)$ be the orthogonal projection onto the linear span of $\sigma$.  Then for $x \in N_\sigma$ we have $P(x) \in \sigma$, and the triangle inequality gives
$$\min_{\alpha \in \Phi^+_{\gog / \goh}} |\alpha|^{-1} < |x| \le |P(x)| + |x - P(x)| < |P(x)| + \sum_{\alpha \in V(\sigma)} |\alpha|^{-1},$$
which implies $|x|^\ell < C |P(x)|^\ell.$  Similarly,  for $\alpha \in \Phi^+_{\gog \ \goh} \setminus V(\sigma)$, $| \langle \alpha, x \rangle | > 1$ and
$$ | \langle \alpha, P(x) \rangle | \le |\langle \alpha, x \rangle| + |\langle \alpha, x - P(x) \rangle| < |\langle \alpha, x \rangle| + |\alpha| \sum_{\beta \in V(\sigma)} |\beta|^{-1},$$
which implies $|\Delta_\sigma(P(x))| < C |\Delta_\sigma(x)|.$  Therefore
$$\int_{N_\sigma} \frac{|x|^\ell}{|\Delta_\sigma(x)|} dx < C \int_{N_\sigma} \frac{|P(x)|^\ell}{|\Delta_\sigma\big(P(x) \big)|} dx < C \int_{\sigma \cap N_\sigma} \frac{|y|^\ell}{|\Delta_\sigma(y)|} dy,$$
where $dy$ is Lebesgue measure on the span of $\sigma$.

Before showing that this last integral is finite, we make yet another simplification to the integrand.  For $x \in \tot,$ define
$$[ x ]_\ell =\bigg( \sum_{\{\alpha_1, \hdots, \alpha_\ell \} \subset \Phi^+_{\gog / \goh}} \prod_{j = 1}^\ell |\langle \alpha_j, x \rangle | \bigg)^{1/\ell}$$
when $\ell \ge 1$, and $[x]_0 = 1$.  For $\ell \ge 1$ it is easy to check by direct calculation that $[ \bullet ]_\ell$ satisfies the properties of a norm, and since all norms on a finite-dimensional vector space are equivalent, we have $|x| < C[x]_\ell$.  Additionally, since $\sigma$ lies in the orthogonal complement of all $\alpha \in V(\sigma)$, for $x \in \sigma$ we have
\begin{equation} \label{eqn:lbracket-alt}
[ x ]_\ell^\ell = \sum_{\substack{J \subset \Phi^+_{\gog / \goh} \setminus V(\sigma) \\ |J| = \ell}} \bigg( \prod_{\alpha \in J} |\langle P(\alpha), x \rangle | \bigg).
\end{equation}
When $\ell = 0$, the empty set is the only subset $J \subset \Phi^+_{\gog / \goh} \setminus V(\sigma)$ with $|J| = \ell$, but (\ref{eqn:lbracket-alt}) still holds in this case by the convention that a product over the empty set is equal to 1.  Therefore, regardless of whether $\ell$ is strictly positive, we have
\begin{align*}
\int_{\sigma \cap N_\sigma} \frac{|y|^\ell}{|\Delta_\sigma(y)|} dy &< C \int_{\sigma \cap N_\sigma} \frac{[y]_\ell^\ell}{|\Delta_\sigma(y)|} dy \\ &= C \sum_{\substack{J \subset \Phi^+_{\gog / \goh} \setminus V(\sigma) \\ |J| = \ell}} \int_{\sigma \cap N_\sigma} \frac{dy}{\prod_{\alpha \not \in (V(\sigma) \cup J)} |\langle P(\alpha), y \rangle |}.
\end{align*}
Fix an arbitrary subset $J \subset \Phi^+_{\gog / \goh} \setminus V(\sigma)$ with $|J| = \ell$, and let $I = \Phi^+_{\gog / \goh} \setminus (V(\sigma) \cup J).$  (Elements of $I$ and $J$, like those of $\Phi^+_{\gog / \goh}$, are counted with multiplicity.) To complete the proof, we now only need to show
\begin{equation} \label{eqn:estimate-simplified2}
\int_{\sigma \cap N_\sigma} \frac{dy}{\prod_{\alpha \in I} |\langle P(\alpha), y \rangle |} < \infty.
\end{equation}

We first make two observations about the set $I.$  Let $d = \dim \sigma$.  The first observation is that $I$ contains at least $d + 1$ elements, and if $I' = I \setminus \{\alpha\}$ is any subset of $I$ obtained by deleting a single element $\alpha$, then
$$\mathrm{span} \{ P(\beta) \ | \ \beta \in I' \} = \mathrm{span}(\sigma).$$  If this were not so, then the set $\Phi^+_{\gog / \goh} \setminus (J \cup \{\alpha \})$ would have a nontrivial orthogonal complement in $\mathrm{span}(\sigma)$, contradicting the assumption that all subsets of $\Phi^+_{\gog / \goh}$ obtained by deleting $\ell + 1$ elements span $\tot$.

The second observation is that we can assume without loss of generality that $\langle P(\alpha), y \rangle \ge 0$ for all $y \in \sigma$, since each $\alpha \in I$ takes only a single sign on $\sigma \cap N_\sigma$ and the integrand in (\ref{eqn:estimate-simplified2}) depends only on $|\langle P(\alpha), y \rangle |$.  With this assumption, the vectors $P(\alpha)$, $\alpha \in I$ belong to the dual cone $\sigma^*$ of $\sigma$ in $\mathrm{span}(\sigma)$.

We claim that there is an open half-space containing both $\sigma \setminus \{0\}$ and $\sigma^* \setminus \{0\}$.  To see this, notice that the intersection $\sigma \cap \sigma^*$ must have nonempty relative interior in $\mathrm{span}(\sigma)$; otherwise by the hyperplane separation theorem there would be some $x \in \mathrm{span}(\sigma)$ with $\langle x, y \rangle  > 0$ for all $y$ in the relative interior of $\sigma$ and $\langle x, z \rangle  < 0$ for all $z$ in the relative interior of $\sigma^*$.  But then we would have $x \in \sigma^*$, implying $\langle x, x \rangle \le 0$, a contradiction.  If we choose some point $z$ in the relative interior of $\sigma \cap \sigma^*$, then both $\sigma \setminus \{0\}$ and $\sigma^* \setminus \{0\}$ belong to the open half-space $\{ x \in \mathrm{span}(\sigma) \ | \ \langle z, x \rangle > 0 \}$.  Thus we can find a basis $v_1, \hdots, v_d$ of $\mathrm{span}(\sigma)$ such that the vectors $P(\alpha)$ for $\alpha \in I$, along with the entire set $\sigma \cap N_\sigma$, lie on the interior of the positive cone generated by the $v_j$.  In particular, for $\alpha \in I$ we can write
$$P(\alpha) = \sum_{j = 1}^d c^\alpha_j v_j$$
where the coefficients $c^\alpha_j$ are all strictly positive.

Finally, we show the inequality (\ref{eqn:estimate-simplified2}).  Set $\varepsilon = \min_{\alpha \in \Phi^+_{\gog / \goh}} |\alpha|^{-1}.$  Then all points of $\sigma \cap N_\sigma$ lie at least a distance $\varepsilon$ from the boundary of the positive cone generated by the $v_j$, so that
\begin{align*}
\int_{\sigma \cap N_\sigma} \frac{dy}{\prod_{\alpha \in I} |\langle P(\alpha), y \rangle |} &< C \int_\varepsilon^\infty \cdots \int_\varepsilon^\infty \frac{dx_1 \cdots dx_d}{\prod_{\alpha \in I} \sum_{j=1}^d c^\alpha_j x_j} \\
&< C \int_\varepsilon^\infty \cdots \int_\varepsilon^\infty \frac{dx_1 \cdots dx_d}{\left( \prod_{j = 1}^d x_j \right) \left( \sum_{j=1}^d c_j x_j \right)}
\end{align*}
for some $c_1, \hdots, c_d > 0$.  Then, since
$$\bigg( \prod_{j = 1}^d x_j \bigg) \bigg( \sum_{j=1}^d c_j x_j \bigg) = \prod_{j = 1}^d \left[ x_j \bigg( \sum_{k=1}^d c_k x_k \bigg)^{1/d} \right] > C  \prod_{j = 1}^d x_j^{1 + \frac{1}{d}},$$
we find
$$\int_\varepsilon^\infty \cdots \int_\varepsilon^\infty \frac{dx_1 \cdots dx_d}{\left( \prod_{j = 1}^d x_j \right) \left( \sum_{j=1}^d c_j x_j \right)} < C \int_\varepsilon^\infty \cdots \int_\varepsilon^\infty \frac{dx_1 \cdots dx_d}{\prod_{j = 1}^d x_j^{1 + \frac{1}{d}}}.$$
This last integral is finite, which completes the proof.
\end{proof}

The following lemma gives an alternate characterization of the number $\ell$ in terms of the polynomial $\Delta_{\gog / \goh}$.

\begin{lem} \label{lem:Delta-growth}
Let $\ell$ be as in Theorem \ref{thm:density-regularity}.  The restriction of $\Delta_{\gog / \goh}$ to any $d$-dimensional affine subspace of $\tot$ is either uniformly zero or has degree at least $\ell + d + 1$.  Additionally, there exists an affine line on which the minimum degree of $\ell + 2$ is realized.
\end{lem}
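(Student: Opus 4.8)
The plan is to reduce the statement to a purely linear-algebraic fact about the multiset $\Phi^+_{\gog/\goh}$, reformulating $\ell$ as a cogirth and then running a short greedy argument. First I would observe that the restriction of $\Delta_{\gog/\goh}$ to a $d$-dimensional affine subspace $A = v + U$ of $\tot$ splits as the product of the constants $\langle\alpha, v\rangle$ over those $\alpha \in \Phi^+_{\gog/\goh}$ with $\alpha \perp U$, times a product of nonzero affine functions on $U$ over those $\alpha$ with $\alpha \not\perp U$. Hence $\Delta_{\gog/\goh}|_A$ either vanishes identically (exactly when some $\alpha \perp U$ has $\langle\alpha,v\rangle = 0$) or, since $\R[U]$ is an integral domain so that the top-degree terms cannot cancel, has degree exactly $\#\{\alpha \in \Phi^+_{\gog/\goh} : \alpha \notin U^\perp\}$. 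Writing $Z = U^\perp$, a subspace of codimension $d$, the first assertion thus reduces to the claim that for every subspace $Z \subsetneq \tot$ of codimension $d \ge 1$ one has $\#\{\alpha \in \Phi^+_{\gog/\goh} : \alpha \notin Z\} \ge \ell + d + 1$.

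Next I would record that $\ell$ is the cogirth of the configuration $\Phi^+_{\gog/\goh}$ minus two: the least number $c$ of vectors (counted with multiplicity) whose deletion makes $\Phi^+_{\gog/\goh}$ fail to span $\tot$ satisfies $c = \ell + 2$, and equivalently $c = \min_{W \subsetneq \tot} \#\{\alpha \in \Phi^+_{\gog/\goh} : \alpha \notin W\}$ over proper subspaces $W$. The bound $c \ge \ell + 2$ is immediate from the definition of $\ell$ in Theorem \ref{thm:density-regularity}; for $c \le \ell + 2$, the maximality of $\ell$ yields a sub-multiset of size $|\Phi^+_{\gog/\goh}| - (\ell+2)$ contained in some proper subspace, and enlarging that subspace to a hyperplane $H$ gives $\#\{\alpha \notin H\} \le \ell + 2$. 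Combined with $\#\{\alpha \notin H\} \ge c$, this also exhibits a hyperplane $H_0$ with exactly $\ell + 2$ vectors outside it, which the sharpness claim will use.

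The heart of the proof is the codimension-$d$ bound, which I would establish by a greedy flag construction. Given $Z$ of codimension $d \ge 1$, successively choose $\alpha_1, \dots, \alpha_{d-1} \in \Phi^+_{\gog/\goh}$ with $\alpha_i \notin \mathrm{span}(Z \cup \{\alpha_1, \dots, \alpha_{i-1}\})$ --- always possible because this span is still proper and $\Phi^+_{\gog/\goh}$ spans $\tot$ --- so that $H := \mathrm{span}(Z \cup \{\alpha_1, \dots, \alpha_{d-1}\})$ is a hyperplane. The $\alpha_i$ are $d-1$ distinct members of the multiset, all lying in $H$ but outside $Z$, so $\{\alpha \notin Z\}$ contains them together with all of $\{\alpha \notin H\}$ disjointly, whence $\#\{\alpha \notin Z\} \ge (d-1) + \#\{\alpha \notin H\} \ge (d-1) + c = \ell + d + 1$. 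Taking $Z = U^\perp$ proves the first assertion. For the last claim, I would take the line $L = v + \R\eta$, where $H_0 = \eta^\perp$ is the hyperplane above and $v$ is chosen off the finitely many hyperplanes $\alpha^\perp$ with $\alpha \in \Phi^+_{\gog/\goh} \cap H_0$; then $\Delta_{\gog/\goh}|_L$ is not identically zero and has degree $\#\{\alpha \notin H_0\} = \ell + 2$, realizing the minimum allowed by the first part with $d = 1$.

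I do not expect a serious obstacle --- the argument is elementary --- so the real care goes into the bookkeeping: consistently counting elements of $\Phi^+_{\gog/\goh}$ with multiplicity, correctly matching "$\ell$ as in Theorem \ref{thm:density-regularity}" with the cogirth identity $c = \ell + 2$, and handling the edge case where $\Phi^+_{\gog/\goh} \cap H_0$ is empty (in which case the non-vanishing of $\Delta_{\gog/\goh}|_L$ in the sharpness step is automatic).
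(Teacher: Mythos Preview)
Your proposal is correct and follows the same route as the paper: both identify the degree of $\Delta_{\gog/\goh}$ on an affine subspace $v+U$ with the number of $\alpha\in\Phi^+_{\gog/\goh}$ not lying in $U^\perp$, and then bound this count from below. Your greedy flag construction is in fact a more explicit justification of the inequality $\#\{\alpha\notin U^\perp\}\ge \ell+d+1$ than the paper gives---the paper simply asserts this bound from the observation that the remaining $\alpha$'s lie in a codimension-$d$ subspace---so your write-up fills in a step the paper leaves to the reader.
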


\begin{proof}
We first consider affine lines.  Choose $y, \eta \in \tot$, and write $$\Delta_y^\eta(t) = \Delta_{\gog / \goh}(ty + \eta), \qquad t \in \R$$ for the restriction of $\Delta_{\gog / \goh}$ to the line through $\eta$ with slope $y$.  Then $\Delta_y^\eta(t)$ is a polynomial in $t$, and if it is not uniformly zero then its degree is equal to the number of $\alpha \in \Phi^+_{\gog / \goh}$, counted with multiplicity, for which $\langle \alpha, y \rangle \ne 0$.  This degree must be at least $\ell + 2$, since the set $\{ \alpha \in \Phi^+_{\gog / \goh} \ | \ \langle \alpha, y \rangle = 0 \}$ is contained in the hyperplane orthogonal to $y$ and therefore does not span $\tot$.  Conversely, the definition of $\ell$ implies that there is some subset $E \subset \Phi^+_{\gog / \goh}$ with $|E| = \ell + 2$ (counted with multiplicity) such that $\Phi^+_{\gog / \goh} \setminus E$ does not span $\tot$.  Let $y$ lie in the orthogonal complement of $\Phi^+_{\gog / \goh} \setminus E$ and let $\eta$ be any point with $\Delta_{\gog / \goh}(\eta) \ne 0$.  Then $\Delta_y^\eta$ has degree $\ell + 2$.

Similarly, if $y_1, \hdots, y_d \in \tot$ are linearly independent, we write
$$ \Delta_{y_1, \hdots, y_d}^\eta(t_1, \hdots, t_d) = \Delta_{\gog / \goh}(t_1 y_1 + \cdots + t_d y_d + \eta), \qquad t_1, \hdots, t_d \in \R$$
for the restriction of $\Delta_{\gog / \goh}$ to the $d$-dimensional affine subspace $\mathrm{span}\{y_1, \hdots, y_d \} + \eta$.  If $\Delta_{y_1, \hdots, y_d}^\eta$ is not uniformly zero, then its degree is equal to the number of $\alpha \in \Phi^+_{\gog / \goh}$, counted with multiplicity, satisfying $\langle \alpha, y_j \rangle \ne 0$ for at least one $y_j$.  There must be at least $\ell + d + 1$ such $\alpha$, since all other vectors in $\Phi^+_{\gog / \goh}$ are contained in the orthogonal complement of $\mathrm{span}\{y_1, \hdots, y_d \}$.  Therefore $\deg  \Delta_{y_1, \hdots, y_d}^\eta \ge \ell + d + 1.$
\end{proof}

As an example application of Theorem \ref{thm:density-regularity}, we explicitly compute the degree of differentiability in the case of the quantum marginal problem for two distinguishable particles.

\begin{cor} \label{cor:density-regularity-dst}
In the setting of Corollary \ref{cor:density-formula}, assume without loss of generality that $m \le n$.  Then $p_\lambda^\mathrm{dst}$ has at least $(m^2-1)(n-1) - 2$ continuous derivatives.
\end{cor}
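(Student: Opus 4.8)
The plan is to combine Theorem~\ref{thm:density-regularity} and Lemma~\ref{lem:Delta-growth} with an explicit description of $\Phi^+_{\gog/\goh}$. By Remark~\ref{rem:distinguishable-lietheory} we may work with $G=\SU(mn)$ and $H=\SU(m)\otimes\SU(n)$, so $\tot\cong\Sigma_0$ has dimension $r=m+n-2$, the hypothesis $\Delta_\gog(\lambda)\ne0$ of Theorem~\ref{thm:integral-formula-gen} becomes $\Delta_{mn}(\lambda)\ne0$, and (just as for the bosonic and fermionic cases treated there) $\Phi^+_{\gog/\goh}$ spans $\tot$ while $\Delta_\gog$ is not identically zero on $\tot$; thus Theorems~\ref{thm:integral-formula-gen} and~\ref{thm:density-regularity} apply. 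First I would read off $\Phi^+_{\gog/\goh}$ from $\Delta_{\gog/\goh}(\xi,\zeta)=\Delta_{mn}\big([\xi_p+\zeta_q]_{pq}\big)/\big(\Delta_m(\xi)\Delta_n(\zeta)\big)$, as in the footnote following (\ref{eqn:delta-gh-def}): $\Phi^+_{\gog/\goh}$ consists of each positive root $\xi_p-\xi_{p'}$ of $\mathfrak{su}(m)$ with multiplicity $n-1$, each positive root $\zeta_q-\zeta_{q'}$ of $\mathfrak{su}(n)$ with multiplicity $m-1$, and the mixed forms $(\xi_p-\xi_{p'})\pm(\zeta_q-\zeta_{q'})$ for $p<p'$, $q<q'$, each with multiplicity $1$; in all, $|\Phi^+_{\gog/\goh}|=\binom{mn}{2}-\binom m2-\binom n2$.

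By Lemma~\ref{lem:Delta-growth}, $\ell+2$ equals the minimum degree of $\Delta_{\gog/\goh}$ over affine lines on which it is not uniformly zero, and along a line $t\mapsto ty+\eta$ with $0\ne y=(u,v)\in\tot$ and generic $\eta$ this degree is $\#\{\alpha\in\Phi^+_{\gog/\goh}:\langle\alpha,y\rangle\ne0\}=|\Phi^+_{\gog/\goh}|-\mathcal M(y)$, where $\mathcal M(y):=\#\{\alpha\in\Phi^+_{\gog/\goh}:\langle\alpha,y\rangle=0\}$ is counted with multiplicity. So it suffices to prove $\mathcal M(y)\le M_2$ for every $y\ne0$, where $M_2:=(n-1)\binom m2+(m-1)\binom{n-1}2+2\binom m2\binom{n-1}2$: indeed a routine binomial identity gives $\binom{mn}2-\binom m2-\binom n2-M_2=(m^2-1)(n-1)$, so $\ell+2\ge(m^2-1)(n-1)$ and Theorem~\ref{thm:density-regularity} yields the claim.

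To bound $\mathcal M(y)$, set $a:=\#\{p<p':u_p=u_{p'}\}$ and $b:=\#\{q<q':v_q=v_{q'}\}$. A mixed form $(\xi_p-\xi_{p'})+\epsilon(\zeta_q-\zeta_{q'})$ vanishes on $y$ iff $u_p-u_{p'}=-\epsilon(v_q-v_{q'})$; separating the cases $u_p=u_{p'}$ and $u_p\ne u_{p'}$ (the latter forcing $v_q\ne v_{q'}$, with exactly one admissible sign $\epsilon$ per pair of level-distinct index-pairs of equal difference-magnitude), a short computation gives
\[ \mathcal M(y)=(n-1)a+(m-1)b+2ab+\sum_{\delta>0}\alpha_\delta\beta_\delta, \]
where $\alpha_\delta:=\#\{p<p':|u_p-u_{p'}|=\delta\}$ and $\beta_\delta:=\#\{q<q':|v_q-v_{q'}|=\delta\}$, so that $\sum_{\delta>0}\alpha_\delta=\binom m2-a$ and $\sum_{\delta>0}\beta_\delta=\binom n2-b$. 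The elementary inequality $\sum_\delta\alpha_\delta\beta_\delta\le\big(\sum_\delta\alpha_\delta\big)\big(\sum_\delta\beta_\delta\big)$ then yields
\[ \mathcal M(y)\le g(a,b):=(n-1)a+(m-1)b+2ab+\big(\tbinom m2-a\big)\big(\tbinom n2-b\big). \]

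Finally I would maximize $g$ over the admissible pairs $(a,b)$. Since $a=\sum_i\binom{c_i}{2}$ for the value-multiplicities $c_i$ of $u$, the realizable values of $a$ lie in $\{0,\dots,\binom{m-1}2\}\cup\{\binom m2\}$, and $a=\binom m2$ only when $u$ is constant, in which case $v$ is non-constant (as $y\ne0$) and $b\le\binom{n-1}2$; symmetrically for $b$. As $g$ is affine in each of $a,b$ separately: if $a\le\binom{m-1}2$, the maximum over $[0,\binom{m-1}2]\times[0,\binom n2]$ is attained at a vertex, and each vertex value is $\le M_2$ (the only nontrivial comparison being $g(\binom{m-1}2,\binom n2)\le M_2$, i.e.\ $M_1\le M_2$, which follows from $m\le n$); and if $a=\binom m2$, then $g(\binom m2,b)$ is increasing in $b$, so $g\le g(\binom m2,\binom{n-1}2)=M_2$. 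Hence $\mathcal M(y)\le M_2$ for all $y\ne0$, giving $\ell\ge(m^2-1)(n-1)-2$ and $p_\lambda^{\mathrm{dst}}\in C^{(m^2-1)(n-1)-2}(\tot_+)$. (Taking $y=(0,\,e_1-\tfrac1n\mathbbm1_n)$ realizes $\mathcal M(y)=M_2$, so this exponent is in fact exact.) The main obstacle is this last step: correctly identifying the admissible range of $(a,b)$ and locating the maximum of $g$, which rests on the bound $\mathcal M(y)\le g(a,b)$ via the product inequality and on the binomial identities behind $M_1\le M_2$ and $\binom{mn}2-\binom m2-\binom n2-M_2=(m^2-1)(n-1)$.
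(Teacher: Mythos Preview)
Your argument is correct and follows the same overall route as the paper: both invoke Theorem~\ref{thm:density-regularity} via Lemma~\ref{lem:Delta-growth}, after reading off $\Phi^+_{\gog/\goh}$ from the explicit factorization
\[
\Delta_{\gog/\goh}(\xi,\zeta)=\Delta_m(\xi)^{n-1}\Delta_n(\zeta)^{m-1}\prod_{j<p}\prod_{k<q}\big((\xi_j-\xi_p)^2-(\zeta_k-\zeta_q)^2\big).
\]
The difference is in how the minimum line-degree is established. The paper simply observes that $\Delta_{\gog/\goh}$ has degree $(m^2-1)(n-1)$ in each $\zeta_q$ and $(n^2-1)(m-1)$ in each $\xi_p$, and then \emph{asserts} that the minimum over all affine lines in $\tot$ is realized by varying a single $\zeta_q$. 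You instead supply a genuine proof of this claim: you parametrize the count $\mathcal M(y)$ of roots orthogonal to a direction $y=(u,v)$ by the coincidence numbers $a,b$ and the difference-spectrum, bound it by the bilinear function $g(a,b)$ via $\sum_\delta\alpha_\delta\beta_\delta\le(\sum\alpha_\delta)(\sum\beta_\delta)$, and then maximize $g$ over the admissible region (using that $a\in\{0,\dots,\binom{m-1}{2}\}\cup\{\binom m2\}$ and similarly for $b$, with $(a,b)\ne(\binom m2,\binom n2)$). This optimization, together with the identity $|\Phi^+_{\gog/\goh}|-M_2=(m^2-1)(n-1)$ and the comparison $M_1\le M_2$ (equivalent to $m\le n$), closes the argument rigorously.

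In short: same strategy, but your version actually justifies the step the paper leaves implicit, and your final remark that $y=(0,\,e_1-\tfrac1n\mathbbm 1_n)$ attains $\mathcal M(y)=M_2$ confirms the bound is sharp.
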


\begin{proof}
Recall that in the setting of Corollary \ref{cor:density-formula}, we have
$$ \tot \cong \Sigma_0 = \bigg \{ (\xi, \zeta) \in \R^{m+n} \ \bigg | \ \sum_{p=1}^m \xi_p = \sum_{q=1}^m \zeta_q = 0 \bigg \}.$$
Then $\Delta_\gog |_\tot (\xi, \zeta) = \Delta_{mn}([\xi_p + \zeta_q]_{pq=11}^{mn})$, and $\Delta_\goh(\xi, \zeta) = \Delta_m(\xi) \Delta_n(\zeta)$.

First we rewrite $\Delta_\gog |_\tot.$ Writing $jk < pq$ to indicate that the double index $jk$ precedes the double index $pq$ in lexicographic order, we have:
\begin{align*}
\Delta_{mn}([\xi_p &+ \zeta_q]_{pq=11}^{mn}) = \prod_{jk < pq} (\xi_j + \zeta_k - \xi_p - \zeta_q) \\
&= \left[ \prod_{1 \le j < p \le m} \prod_{k,q=1}^n (\xi_j + \zeta_k - \xi_p - \zeta_q) \right] \left[ \prod_{j=1}^m \prod_{1 \le k<q \le n} (\xi_j + \zeta_k - \xi_j - \zeta_q) \right] \\
&= \Delta_n(\zeta)^m \prod_{j < p}^m \prod_{k,q=1}^n (\xi_j + \zeta_k - \xi_p - \zeta_q) \\
&= \Delta_m(\xi)^n \Delta_n(\zeta)^m \prod_{j < p}^m \prod_{k<q}^n \big( \xi_j - \xi_p + (\zeta_k - \zeta_q) \big ) \big( \xi_j - \xi_p - (\zeta_k - \zeta_q) \big) \\
&= \Delta_m(\xi)^n \Delta_n(\zeta)^m \prod_{j < p}^m \prod_{k<q}^n \big((\xi_j - \xi_p)^2 - (\zeta_k - \zeta_q)^2 \big).
\end{align*}
Accordingly,
\begin{align} 
\begin{split} \label{eqn:delta-rewrite}
\Delta_{\gog / \goh}(\xi, \zeta) &= \frac{\Delta_{mn}([\xi_p + \zeta_q]_{pq=11}^{mn})}{\Delta_m(\xi) \Delta_n(\zeta)} \\
 &= \Delta_m(\xi)^{n-1} \Delta_n(\zeta)^{m-1} \prod_{j < p}^m \prod_{k<q}^n \big((\xi_j - \xi_p)^2 - (\zeta_k - \zeta_q)^2 \big).
 \end{split}
\end{align} 
From (\ref{eqn:delta-rewrite}), we observe that $ \Delta_{\gog / \goh}$ is of degree $(n^2-1)(m-1)$ in each $\xi_p$ and degree $(m^2-1)(n-1)$ in each $\zeta_q$.  Our assumption that $m \le n$ implies that $(m^2-1)(n-1) \le (n^2-1)(m-1)$.  By Lemma \ref{lem:Delta-growth}, $\ell+2$ is equal to the minimum degree of $\Delta_{\gog / \goh}$ restricted to any affine line on which $\Delta_{\gog / \goh}$ does not uniformly vanish.  Again from (\ref{eqn:delta-rewrite}), we find that this minimum degree is attained by varying a single $\zeta_q$ while keeping all other coordinates fixed, giving $\ell = (m^2-1)(n-1) -2$.
\end{proof}

To conclude this section, we state a conjecture on the polynomials $\Delta_{\gog / \goh}$ that appear in the quantum marginal problems for bosons and fermions.  Recall that for $k$ indistinguishable bosons we have $G = \SU(N)$ with $N = {{n + k - 1} \choose {k}}$, while for $k$ indistinguishable fermions we have $G = \SU(K)$ with $K = {{n} \choose {k}}$.  In both of these cases we have $H \cong \SU(n)$, with
$$ \tot \cong \bigg \{ x \in \R^n \ \bigg | \ \sum_{j=1}^n x_j = 0 \bigg \}. $$
For bosons we then have
\begin{equation}
\Delta_{\gog / \goh}(x) = \frac{1}{\Delta_n(x)}  \prod_{\substack{\alpha < \beta \\ |\alpha| = |\beta| = k}} \big( \alpha \cdot x - \beta \cdot x \big) = P_{\mathrm{bos}}(x_1, \hdots, x_n),
\end{equation}
while for fermions we have
\begin{equation}
\Delta_{\gog / \goh}(x) = \frac{1}{\Delta_n(x)}  \prod_{\substack{\alpha < \beta \\ \alpha, \beta \in \mathcal{I}}} \big( \alpha \cdot x - \beta \cdot x \big) = P_{\mathrm{fer}}(x_1, \hdots, x_n),
\end{equation}
where $\mathcal{I}$ is the set of $n$-component multi-indices of length $k$ all of whose entries are equal to 0 or 1.  Here $P_{\mathrm{bos}}$ and $P_{\mathrm{fer}}$ are homogeneous polynomials in the restriction to $\tot$ of the standard coordinates $(x_1, \hdots, x_n)$ on $\R^n$.

\begin{conj} \label{conj:bosfer-degree}
In the setting of Corollary \ref{cor:density-formula-bos} (resp.~Corollary \ref{cor:density-formula-fer}), the minimum degree $\ell + 2$ of $\Delta_{\gog / \goh}$ restricted to any affine line in $\tot$ on which it does not uniformly vanish is equal to the degree of $P_{\mathrm{bos}}$ (resp.~$P_{\mathrm{fer}}$) in any one of the individual variables $x_1, \hdots, x_n$.
\end{conj}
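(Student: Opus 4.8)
The plan is to reduce the conjecture to a combinatorial extremal problem about linear functionals on a set of multi-indices, and then to isolate the one estimate that constitutes the real obstacle. Throughout I treat $P_{\mathrm{bos}}$ (resp.\ $P_{\mathrm{fer}}$) as a polynomial in independent variables $x_1,\dots,x_n$ via the displayed formula; by the $S_n$-symmetry of the construction its degree in $x_1$ agrees with its degree in every $x_j$, so ``the degree in any one of the variables'' is well defined. For a finite multiset $X$ of vectors spanning $\tot$, write $m(X)=|X|-\max_H|X\cap H|$, the maximum over linear hyperplanes $H\subset\tot$; by the proof of Lemma~\ref{lem:Delta-growth} (the degree of the restriction of $\Delta_{\gog/\goh}$ to a line in direction $y$ equals $\#\{\gamma\in\Phi^+_{\gog/\goh}:\langle\gamma,y\rangle\neq 0\}$, and $\ell+2$ is the minimum of this over $y\neq 0$), one has $\ell+2=m(\Phi^+_{\gog/\goh})$. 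This is precisely the invariant governing the smoothness class of box splines in \cite{DPV}.

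First I would make $\Phi^+_{\gog/\goh}$ explicit. In the bosonic case the numerator $\prod_{\alpha<\beta}(\alpha\cdot x-\beta\cdot x)$ is a product of the linear forms $(\alpha-\beta)\cdot x$, each with $\alpha-\beta$ of coordinate sum zero, and dividing by $\Delta_n=\prod_{i<j}(x_i-x_j)$ simply deletes one copy of each form $x_i-x_j$. Hence, identified with vectors in $\tot$ via the inner product, $\Phi^+_{\gog/\goh}$ is the multiset of difference vectors $\alpha-\beta$ over unordered pairs of distinct degree-$k$ multi-indices, with one copy of each $e_i-e_j$ removed; for fermions the same holds with the $0$--$1$ multi-index set $\mathcal I$ replacing $\mathcal M_{n,k}$. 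Since $\alpha-\beta$ lies in $y^\perp$ exactly when $\alpha,\beta$ lie in the same fibre of $\alpha\mapsto\langle\alpha,y\rangle$, this yields
\[
\#\{\gamma\in\Phi^+_{\gog/\goh}:\gamma\perp y\}=\sum_{t}\binom{f_y(t)}{2}-\#\{i<j:y_i=y_j\},
\qquad f_y(t)=\#\{\alpha\in\mathcal M_{n,k}:\langle\alpha,y\rangle=t\},
\]
so that $m(\Phi^+_{\gog/\goh})$ is governed by how concentrated the fibre partition of a linear functional on $\mathcal M_{n,k}$ (resp.\ $\mathcal I$) can be made.

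The easy half of the conjecture is the inequality $\ell+2\le\deg_{x_1}P$. Testing the direction $\bar e_1:=e_1-\tfrac1n(1,\dots,1)$, the orthogonal projection of $e_1$ onto $\tot$, one gets $\langle\alpha-\beta,\bar e_1\rangle=\alpha_1-\beta_1$, so the number of $\gamma\in\Phi^+_{\gog/\goh}$ not orthogonal to $\bar e_1$ is exactly the number of factors of $P_{\mathrm{bos}}$ (resp.\ $P_{\mathrm{fer}}$) whose coefficient of $x_1$ is nonzero, i.e.\ $\deg_{x_1}P$; since $\ell+2$ is the minimum of this count over all directions, the inequality follows. Equivalently, the hyperplane $\{x_1=0\}\cap\tot$ is a candidate maximiser for $|\Phi^+_{\gog/\goh}\cap H|$.

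The hard direction is the converse: that $\bar e_1$ (equivalently $\{x_1=0\}\cap\tot$) actually maximises $\#\{\gamma\in\Phi^+_{\gog/\goh}:\gamma\perp y\}$ over $y\in\tot\setminus\{0\}$. Writing this count as $g(y)-h(y)$, where $h(y)=\#\{i<j:y_i=y_j\}$ counts the removed length-one vectors $\pm(e_i-e_j)$ lying in $y^\perp$ and $g(y)=\sum_{0<s\le k}N_s\cdot\#\{[v]:v\perp y,\ \tfrac12\|v\|_1=s\}$ weights each remaining class of difference vectors by the number $N_s=\binom{n+k-s-1}{n-1}$ (decreasing in $s$) of ordered pairs realising it, one must show $g(y)-h(y)\le g(\bar e_1)-\binom{n-1}{2}$. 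Now $h$ is already maximised uniquely, up to scaling, at the $\bar e_i$, so on the stratum of shortest difference vectors $\bar e_1$ is strictly optimal; the real work is to show that the decay of the weights $N_s$ is fast enough that no competing rational hyperplane can overtake $\bar e_1$ by capturing more of the longer difference vectors --- in effect a majorisation statement saying the fibre-size multiset of a coordinate functional on $\mathcal M_{n,k}$ (resp.\ $\mathcal I$) dominates that of every other linear functional, after the $e_i-e_j$ correction. I expect this weight-decay/majorisation estimate to be the crux; not having a general proof of it is the reason we state the result only as a conjecture. As supporting evidence, the analogous phenomenon --- the extremal direction being a single coordinate direction --- is exactly what occurs, and can be verified explicitly, in the distinguishable two-body case treated in Corollary~\ref{cor:density-regularity-dst}.
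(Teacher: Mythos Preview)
The statement under consideration is a \emph{conjecture} in the paper, not a theorem: the paper offers no proof and explicitly treats it as open, deriving Corollaries~\ref{cor:density-regularity-bos} and~\ref{cor:density-regularity-fer} only conditionally on its truth. So there is no ``paper's own proof'' to compare against.

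Your write-up is consistent with this: you do not claim a complete proof, and you explicitly flag the hard direction as the reason the statement remains conjectural. Your analysis goes somewhat beyond what the paper records. The easy inequality $\ell+2\le\deg_{x_1}P$ that you establish by testing the direction $\bar e_1$ is correct and is not stated in the paper; it follows cleanly from Lemma~\ref{lem:Delta-growth} together with the observation that $\langle\alpha-\beta,\bar e_1\rangle=\alpha_1-\beta_1$ since $|\alpha|=|\beta|$. Your combinatorial reformulation of the hard direction --- as an extremal problem about the fibre profile of linear functionals on $\mathcal M_{n,k}$ or $\mathcal I$, with the $e_i-e_j$ correction --- is a reasonable framing of what would need to be shown, though the paper does not develop this.

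In short: there is no gap to point to, because you correctly identify the statement as open and isolate the missing estimate. Your partial progress (the upper bound on $\ell+2$) is sound and is strictly more than the paper itself proves about this conjecture.
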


If the above conjecture is true, then we obtain the following explicit formulae for the degrees of differentiability of $p_\lambda^\mathrm{bos}$ and $p_\lambda^\mathrm{fer}$.

\begin{cor} \label{cor:density-regularity-bos}
For $k$ indistinguishable bosons as in Corollary \ref{cor:density-formula-bos}, if Conjecture \ref{conj:bosfer-degree} holds, then $p_\lambda^\mathrm{bos}$ has at least
$$\sum_{i = 0}^{k-1} {{n+k - i - 2} \choose {k-i}} {{n+k - i - 2} \choose {k-i-1}} - n -1$$
continuous derivatives.
\end{cor}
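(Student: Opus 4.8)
The plan is to reduce the statement to a combinatorial degree computation using results already in place. By Theorem~\ref{thm:density-regularity}, $p_\lambda^{\mathrm{bos}} \in C^\ell(\tot_+)$ for the integer $\ell$ defined there, and by Lemma~\ref{lem:Delta-growth} the quantity $\ell + 2$ is the minimum degree of $\Delta_{\gog / \goh}$ restricted to an affine line in $\tot$ along which it does not vanish identically. Granting Conjecture~\ref{conj:bosfer-degree}, this minimum degree equals $\deg_{x_n} P_{\mathrm{bos}}$, the degree of the polynomial $P_{\mathrm{bos}}$ in a single one of the ambient coordinates (the value being the same for each $x_j$ by the symmetry of $P_{\mathrm{bos}}$ under permuting $x_1, \hdots, x_n$). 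Thus it suffices to prove
$$\deg_{x_n} P_{\mathrm{bos}} \;=\; \sum_{i=0}^{k-1} \binom{n+k-i-2}{k-i}\,\frac{(n+k-i-2)!}{(n-1)!\,(k-i-1)!} \;-\; (n-1),$$
since then $\ell = \deg_{x_n} P_{\mathrm{bos}} - 2$ is exactly the asserted lower bound on the number of continuous derivatives.

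To compute the degree I would start from $P_{\mathrm{bos}}(x) = \Delta_n(x)^{-1}\prod_{\alpha<\beta}\big((\alpha-\beta)\cdot x\big)$, where the product runs over unordered pairs of distinct $n$-component multi-indices with $|\alpha| = |\beta| = k$. Since $\Delta_n$ divides the numerator (as noted in the excerpt) and the degree in a fixed variable is additive under multiplication of nonzero polynomials, $\deg_{x_n} P_{\mathrm{bos}} = \deg_{x_n}\!\big(\prod_{\alpha<\beta}(\alpha-\beta)\cdot x\big) - \deg_{x_n}\Delta_n$, and $\deg_{x_n}\Delta_n = n-1$. Each factor $(\alpha-\beta)\cdot x$ is a linear form whose coefficient of $x_n$ is $\alpha_n - \beta_n$, so it contributes to the $x_n$-degree precisely when $\alpha_n \ne \beta_n$; as a product of linear forms cannot lose degree, $\deg_{x_n}$ of the numerator equals the number of unordered pairs $\{\alpha,\beta\}$ with $\alpha_n \ne \beta_n$. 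Sorting the $\binom{n+k-1}{k}$ multi-indices by the value $i \in \{0,\hdots,k\}$ of the last coordinate, the number with $\alpha_n = i$ is $M_i := \binom{n+k-i-2}{k-i}$, so the count of pairs with differing last coordinate is $\sum_{0\le i<j\le k} M_i M_j = \sum_{i=0}^{k-1} M_i\big(\sum_{j=i+1}^{k} M_j\big)$.

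It remains to evaluate $\sum_{j=i+1}^{k} M_j = \binom{n+k-i-2}{k-i-1} = \frac{(n+k-i-2)!}{(n-1)!\,(k-i-1)!}$, which is the hockey-stick identity $\sum_{m=0}^{M}\binom{r+m}{m} = \binom{r+M+1}{M}$ with $r = n-2$ and $M = k-i-1$, after the substitution $m = k-j$. Feeding this back into the expression for $\deg_{x_n} P_{\mathrm{bos}}$ turns each summand $M_i \binom{n+k-i-2}{k-i-1}$ into the summand appearing in the displayed formula, which gives the claimed identity and hence the corollary. Everything here is elementary; the only points requiring care are verifying that no cancellation occurs in the degree count — guaranteed since the numerator is a genuine product of linear forms over the integral domain of polynomials — and handling the boundary index $i = k$ correctly, where $M_k = 1$ contributes to no pair count and the outer sum accordingly terminates at $k-1$.
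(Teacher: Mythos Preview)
Your proof is correct and follows essentially the same approach as the paper's: both reduce via Theorem~\ref{thm:density-regularity} and Conjecture~\ref{conj:bosfer-degree} to computing the degree of $P_{\mathrm{bos}}$ in a single coordinate, split that degree as the count of unordered pairs $\{\alpha,\beta\}$ with differing chosen coordinate minus $n-1$, group the count by the value of that coordinate, and collapse the inner sum with the hockey-stick identity. The only cosmetic differences are that you use $x_n$ rather than $x_1$ and name the hockey-stick identity explicitly.
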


\begin{proof}
The degree of $P_{\mathrm{bos}}(x)$ in $x_1$ is
$$\deg_{x_1} P_{\mathrm{bos}}(x) = \deg_{x_1} \Delta_N([\alpha \cdot x]_{|\alpha| = k}) - \deg_{x_1} \Delta_n(x).$$
We have $\deg_{x_1} \Delta_n(x) = n-1$, and
\begin{align*}
\deg_{x_1} \Delta_N([\alpha \cdot x]_{|\alpha| = k}) &= \# \big \{ (\alpha, \beta) \ \big | \ |\alpha| = |\beta| = k, \ \alpha < \beta, \ \alpha_1 - \beta_1 \ne 0 \big \} \\
&= \sum_{|\alpha| = k} \# \big \{ \beta \ \big | \ |\beta| = k, \ \beta_1 > \alpha_1 \big \} \\
&= \sum_{|\alpha| = k} \sum_{j = \alpha_1 + 1}^k \# \big\{ \beta \ \big | \ |\beta| = k, \ \beta_1 = j \big\} \\
&= \sum_{|\alpha| = k} \sum_{j = \alpha_1 + 1}^k {{n+k-j-2} \choose {k-j}} \\
&= \sum_{i = 0}^{k-1} \# \big \{ \alpha \ \big | \ |\alpha| = k, \ \alpha_1 = i \big \} \sum_{j = i + 1}^k {{n+k-j-2} \choose {k-j}} \\
&= \sum_{i = 0}^{k-1} {{n+k - i - 2} \choose {k-i}} \sum_{j = i + 1}^k {{n+k-j-2} \choose {k-j}} \\
&=  \sum_{i = 0}^{k-1} {{n+k - i - 2} \choose {k-i}} {{n+k - i - 2} \choose {k-i-1}}.
\end{align*}
Assuming Conjecture \ref{conj:bosfer-degree} holds, we have $\ell = \deg_{x_1} P_{\mathrm{bos}}(x) - 2$, and the statement then follows from Theorem \ref{thm:density-regularity}.
\end{proof}

\begin{cor} \label{cor:density-regularity-fer}
For $k$ indistinguishable fermions as in Corollary \ref{cor:density-formula-fer}, if Conjecture \ref{conj:bosfer-degree} holds, then $p_\lambda^\mathrm{fer}$ has at least
$${{n-1} \choose {k}} {{n-1} \choose {k-1}} - n - 1$$
continuous derivatives.
\end{cor}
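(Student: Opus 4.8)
The plan is to follow exactly the template of the proof of Corollary~\ref{cor:density-regularity-bos}, reducing everything to a single combinatorial count. By Lemma~\ref{lem:Delta-growth}, the integer $\ell$ appearing in Theorem~\ref{thm:density-regularity} satisfies $\ell + 2 = \deg\big(\Delta_{\gog/\goh}|_L\big)$ for a suitable affine line $L \subset \tot$ realizing the minimum degree of $\Delta_{\gog/\goh}$ among all lines on which it does not vanish identically. Granting Conjecture~\ref{conj:bosfer-degree}, this minimum equals $\deg_{x_1} P_{\mathrm{fer}}$, the degree of $P_{\mathrm{fer}}(x_1,\dots,x_n)$ in the single coordinate $x_1$. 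So I would first reduce the corollary to evaluating $\deg_{x_1} P_{\mathrm{fer}}$, after which $\ell = \deg_{x_1} P_{\mathrm{fer}} - 2$ and the claim follows from Theorem~\ref{thm:density-regularity}.

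To compute $\deg_{x_1} P_{\mathrm{fer}}$ I would use the factorization $\Delta_K([\alpha\cdot x]_{\alpha\in\mathcal I}) = \Delta_n(x)\,P_{\mathrm{fer}}(x)$, which gives $\deg_{x_1} P_{\mathrm{fer}}(x) = \deg_{x_1}\Delta_K([\alpha\cdot x]_{\alpha\in\mathcal I}) - \deg_{x_1}\Delta_n(x)$, with $\deg_{x_1}\Delta_n(x) = n - 1$ immediate. Since the linear form $\alpha\cdot x - \beta\cdot x$ has $x_1$-degree $1$ precisely when $\alpha_1 \ne \beta_1$ and $x_1$-degree $0$ otherwise, the term $\deg_{x_1}\Delta_K([\alpha\cdot x]_{\alpha\in\mathcal I})$ equals the number of pairs $(\alpha,\beta)$ with $\alpha,\beta\in\mathcal I$, $\alpha < \beta$ in lexicographic order, and $\alpha_1 \ne \beta_1$.

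The key step is then to carry out this count. For $\alpha\in\mathcal I$ every entry lies in $\{0,1\}$, so $\alpha_1 \ne \beta_1$ forces $\{\alpha_1,\beta_1\} = \{0,1\}$; and since the indices already differ at the first coordinate, the condition $\alpha < \beta$ is then equivalent to $\alpha_1 = 0$, $\beta_1 = 1$, with no constraint on the remaining coordinates. Hence the count factors as $\#\{\alpha\in\mathcal I : \alpha_1 = 0\}\cdot\#\{\beta\in\mathcal I : \beta_1 = 1\}$, and deleting the first coordinate identifies the first factor with the number of $k$-subsets of $\{2,\dots,n\}$ and the second with the number of $(k-1)$-subsets, giving $\binom{n-1}{k}\binom{n-1}{k-1}$. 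Therefore $\deg_{x_1} P_{\mathrm{fer}} = \binom{n-1}{k}\binom{n-1}{k-1} - (n-1)$, so that $\ell = \binom{n-1}{k}\binom{n-1}{k-1} - (n-1) - 2 = \binom{n-1}{k}\binom{n-1}{k-1} - n - 1$, which is the stated bound.

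I do not anticipate any real obstacle within this corollary: conditional on Conjecture~\ref{conj:bosfer-degree} the argument is pure bookkeeping, and in fact it is simpler than the bosonic case in Corollary~\ref{cor:density-regularity-bos} because the restriction $\alpha_1 \in \{0,1\}$ collapses the nested sums appearing there into a single product of binomial coefficients. The genuine difficulty lies entirely upstream, in establishing Conjecture~\ref{conj:bosfer-degree} itself --- that is, in showing that the minimum degree of $\Delta_{\gog/\goh}$ along affine lines in $\tot$ is attained in a coordinate direction --- which I would leave open here.
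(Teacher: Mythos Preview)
Your proposal is correct and follows essentially the same approach as the paper's own proof: both compute $\deg_{x_1} P_{\mathrm{fer}}$ via $\deg_{x_1}\Delta_K([\alpha\cdot x]_{\alpha\in\mathcal I}) - (n-1)$, reduce the first term to counting pairs $(\alpha,\beta)\in\mathcal I^2$ with $\alpha<\beta$ and $\alpha_1\ne\beta_1$, observe that lexicographic order forces $\alpha_1=0,\ \beta_1=1$, and factor the count as $\binom{n-1}{k}\binom{n-1}{k-1}$. Your write-up is slightly more expository, but the argument is the same.
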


\begin{proof}
The degree of $P_{\mathrm{fer}}(x)$ in $x_1$ is
\begin{align*}
\deg_{x_1} P_{\mathrm{fer}}(x) &= \deg_{x_1} \Delta_K([\alpha \cdot x]_{\alpha \in \mathcal{I}}) - \deg_{x_1} \Delta_n(x) \\
&= \# \big \{ (\alpha, \beta) \in \mathcal{I}^2 \ | \ \alpha < \beta, \ \alpha_1 - \beta_1 \ne 0 \big \} - (n-1) \\
&= \# \big \{ (\alpha, \beta) \in \mathcal{I}^2 \ | \ \alpha_1 = 0, \ \beta_1 = 1 \big \} - (n-1) \\
&= {{n-1} \choose {k}} {{n-1} \choose {k-1}} - n + 1.
\end{align*}
Again assuming Conjecture \ref{conj:bosfer-degree} holds, we have $\ell = \deg_{x_1} P_{\mathrm{fer}}(x) - 2$, and the statement then follows from Theorem \ref{thm:density-regularity}.
\end{proof}

\section{Restriction multiplicities and the semiclassical limit} \label{sec:multiplicities}

We now turn our attention to a combinatorial problem in representation theory that is closely related to projections of orbital measures, namely the problem of computing restriction multiplicities, which we define below.  It is well known in representation theory and symplectic geometry that projections of orbital measures provide a kind of asymptotic approximation of restriction multiplicities.  Here we will study an exact, rather than asymptotic, relation between these two objects, from which we will deduce the asymptotic approximation as a corollary.  Using this exact relation, we prove a new integral formula for the restriction multiplicities and show that the measure $\psi_* \phi_* \beta_\lambda$ studied in the preceding sections arises as a limit of point processes that generalize the Littlewood--Richardson and $\U(n)$ branching processes introduced by Biane \cite{Biane95}.  The discussion below also generalizes several results on the randomized Horn's problem that were shown in \cite{CZ1, CMZ, McS-splines}.  In the process, we review a number of ideas that have previously appeared in the literature, starting with the seminal work of Heckman \cite{Heck} and culminating in more recent papers of Vergne and several collaborators \cite{DPV, DV, PV}; see also \cite[\textsection6]{CDKW} and references therein.

The finite-dimensional irreducible representations of $G$ and $H$ are parametrized by their respective cones of dominant integral weights $P_G^+ \subset \tilde \tot_+$ and $P_H^+ \subset \tot_+$.  Given $\lambda \in P_G^+$, we can decompose the corresponding irreducible representation $V^G_\lambda$ of $G$ as a direct sum of irreducible representations of $H$:
$$V^G_\lambda = \bigoplus_{\mu \in P_H^+} (V^H_\mu)^{m^\lambda_\mu}.$$
That is, the representation $V^H_\mu$ occurs in $V^G_\lambda$ with multiplicity $m^\lambda_\mu$.  This decomposition can be equivalently stated as an identity of characters:
$$\chi^G_\lambda \Big |_H = \sum_{\mu \in P_H^+} m^\lambda_\mu \, \chi^H_\mu,$$
where $\chi^G_\lambda |_H$ is the restriction of the character of $V^G_\lambda$ to $H \subset G$.  Accordingly, the multiplicities $m^\lambda_\mu$ are called {\it restriction multiplicities}.

Many quantities of combinatorial interest are special cases of restriction multiplicities.  For example, when $H$ is a maximal torus of $G$, the restriction multiplicities are the weight multiplicities of irreducible representations of $G$.  For $G = \SU(n)$ these coincide with the Kostka numbers, which count the number of semistandard Young tableaux of a given shape and weight.  When $G = K \times K$ for some compact connected Lie group $K$, and $H \cong K$ is the diagonal subgroup, the restriction multiplicities describe the decomposition of tensor products of irreducible representations of $K$.  For $K = \SU(n)$ these coincide with the structure constants of Schur polynomials, known as Littlewood--Richardson coefficients.  For $G = \U(n^2)$ and $H = \U(n)^{\otimes 2}$, as in the two-particle quantum marginal problem studied in Section \ref{sec:formula-dist} with $m = n$, the restriction multiplicities are known to coincide with the tensor product multiplicities for the symmetric group $S_n$, known as Kronecker coefficients; see \cite[\textsection5]{Klyachko-marginals} and \cite[\textsection7]{CDKW}.  Algorithms for computing restriction multiplicities are known, but in general the problem is computationally intractable; see e.g.~\cite{HN}.

A celebrated idea in symplectic geometry due to Heckman \cite{Heck} states that projections of orbital measures describe the asymptotic behavior of restriction multiplicities.  To state this relationship precisely, we define the function
\begin{equation} \label{eqn:J-def}
\mathcal{J}_\lambda(x) = \frac{\Delta_\goh(\rho_\goh)}{\Delta_\gog(\rho_\gog)} \frac{\Delta_\gog(\lambda)}{\Delta_\goh(x)} p_{\lambda}(x), \qquad x \in \tot_+,
\end{equation}
for $\lambda \in \tilde \tot$ with $\Delta_\gog(\lambda) \ne 0$, where $p_{\lambda}$ is the probability density described in Section \ref{sec:projections-general}.  Throughout this section we assume for the sake of simplicity that sufficient conditions are met for the density $p_\lambda$ to exist, but if this is not the case then we can define $\mathcal{J}_\lambda$ as a function on a lower-dimensional affine subspace following the discussion in Remark \ref{rem:unfaithful}, and analogues of the statements below still hold.  If $\Delta_\gog(\lambda) = 0$ then we set $\mathcal{J}_\lambda \equiv 0$.  We extend $\mathcal{J}_\lambda$ antisymmetrically to the other Weyl chambers in $\tot$ by setting $\mathcal{J}_\lambda(w(x)) = \epsilon(w) \mathcal{J}_\lambda(x)$ for $x \in \tot_+$.

The function $\mathcal{J}_\lambda$ thus defined is essentially the same as the asymptotic multiplicity function defined in \cite{Heck} and generalizes the volume function studied in \cite{CZ1, CMZ, McS-splines}.  It is important to note that, since $p_\lambda$ is the density of a measure, in general the definition (\ref{eqn:J-def}) only determines $\mathcal{J}_\lambda$ almost everywhere with respect to Lebesgue measure.  However, in most cases of interest $p_\lambda$ is actually continuous, and we then can take (\ref{eqn:J-def}) to be a pointwise equality.

Let $Q_G \subset \tilde \tot$ be the lattice spanned by $\Phi^+_\gog$, and write $\Pi$ for the orthogonal projection from $\tilde \tot$ onto $\tot$.  It is well known that $m^\lambda_\mu = 0$ unless $\mu \in \Pi(Q_G + \lambda) \cap P_H^+$; this follows for example from the multiplicity formula \cite[eqn.~(3.5)]{Heck}. Set $d = |\Phi_\gog^+| - |\Phi_\goh^+| - r$, where again we write $r = \dim \tot$ for the rank of $\goh$, and suppose that $\mathcal{J}_\lambda$ is continuous. Then, from \cite[Theorem 3.9]{Heck}, we have the following asymptotic relation between $\mathcal{J}_\lambda$ and the multiplicities $m^\lambda_\mu$:
\begin{equation} \label{eqn:mult-semiclassical}
\mathcal{J}_\lambda(\mu) = \lim_{n \to \infty} n^{-d} \, m^{n \lambda}_ {n \mu}, \qquad \lambda \in P_G^+, \quad \mu \in \Pi(Q_G + \lambda) \cap P_H^+.
\end{equation}
The assumption of continuity is not hard to remove, but doing so requires some care in defining the values of $\mathcal{J}_\lambda$ at points where it is discontinuous.  In the remainder of this section we will not need this assumption.

In the theory of geometric quantization as developed by Guillemin--Sternberg \cite{GS}, the multiplicities $m^\lambda_\mu$ are interpreted as physical quantities associated to a quantum mechanical system,\footnote{This quantum mechanical interpretation of the restriction multiplicities is distinct from, and not directly related to, the quantum marginal problems studied in Sections \ref{sec:formula-dist}--\ref{sec:formula-fer}.  Indeed, in the sense of geometric quantization, these quantum marginal problems are {\it classical} versions of corresponding multiplicity problems. In the case of Section \ref{sec:formula-dist}, when $m=n$ the relevant multiplicities are Kronecker coefficients.} and the limit in (\ref{eqn:mult-semiclassical}) corresponds to a semiclassical approximation in which $\hbar$ tends to 0.  Accordingly, one says that $\mathcal{J}_\lambda$ is the {\it semiclassical limit} of the multiplicities.  Here we will develop an alternate perspective on (\ref{eqn:mult-semiclassical}) based on a convolution identity due, in various forms, to Duflo--Vergne \cite{DV}, Paradan--Vergne \cite{PV} and De Concini--Procesi--Vergne \cite{DPV}.  The idea is that for $\lambda \in P_G^+$, the function $\mathcal{J}_{\lambda + \rho_\gog}$ can be expressed as the convolution of a certain spline function with a discretely supported measure that encodes the multiplicities $m^\lambda_\mu$.

The spline function in question is the {\it box spline} $\mathcal{B}_{\gog / \goh}$, which we now define.  We will need some further notation.  Define the periodic functions
$$\widehat{\Delta}_\gog(x) = \prod_{\alpha \in \Phi_\gog^+} (e^{i \langle \alpha, x \rangle/2} - e^{- i\langle \alpha,x \rangle/2}), \qquad x \in \tilde \tot,$$
$$\widehat{\Delta}_\goh(x) = \prod_{\alpha \in \Phi_\goh^+} (e^{i \langle \alpha, x \rangle/2} - e^{- i\langle \alpha,x \rangle/2}), \qquad x \in \tot,$$
and $\widehat{\Delta}_{\gog/\goh}(x) =\widehat{\Delta}_{\gog}(x) / \widehat{\Delta}_{\goh}(x)$, $x \in \tot$.  One way to define $\mathcal{B}_{\gog / \goh}$ is as the inverse Fourier transform\footnote{Here, as above, in cases where the integral is not absolutely convergent we interpret it as a principal value.  This allows us to assign pointwise values to $\mathcal{B}_{\gog / \goh}$ even when it is not continuous.} of $\widehat{\Delta}_{\gog/\goh}(x) / \Delta_{\gog/\goh}(ix)$:
\begin{equation} \label{eqn:B-def}
\mathcal{B}_{\gog / \goh}(x) = \frac{1}{(2\pi)^{r}} \int_\tot \frac{\widehat{\Delta}_{\gog/\goh}(\xi)}{\Delta_{\gog / \goh}(i\xi)} \, e^{-i \langle x, \xi \rangle} \, d\xi.
\end{equation}
Equivalently, the box spline is the density of a probability measure -- which, in a slight abuse of notation, we also denote $\mathcal{B}_{\gog / \goh}$ -- defined as follows in terms of its pairing with continuous functions on $\tot$:
\begin{equation} \label{eqn:B-def2}
\big( \mathcal{B}_{\gog / \goh}, f \big) = \int_{-1/2}^{1/2} \cdots \int_{-1/2}^{1/2} f \Big( \sum_{\alpha \in \Phi^+_{\gog / \goh}} t_\alpha \alpha \Big) \prod_{\alpha \in \Phi^+_{\gog / \goh}} dt_\alpha, \qquad f \in C^0(\tot).
\end{equation}
In words, $\mathcal{B}_{\gog / \goh}$ is a convolution of uniform probability measures on the line segments $[ - \alpha / 2, \ \alpha / 2]$ for $\alpha \in \Phi^+_{\gog / \goh}$.  It is apparent that the resulting density is a piecewise polynomial function supported on the zonotope
$$ \bigg\{ \sum_{\alpha \in \Phi^+_{\gog / \goh}} t_\alpha \alpha \ \ \bigg | \ \  t_\alpha \in \big[ -1/2, \ 1/2 \big] \bigg \} \subset \tot. $$
The first definition (\ref{eqn:B-def}) can be obtained directly from (\ref{eqn:B-def2}) by taking $f(x) = e^{i \langle x, \xi \rangle}$.

For $\lambda \in P_G^+$, the following proposition expresses $\mathcal{J}_{\lambda + \rho_\gog}$ as a convolution of $\mathcal{B}_{\gog / \goh}$ and a finitely supported, $W_\goh$-skew measure that encodes the multiplicities $m^\lambda_\mu$.  The statement below is a special case of a much more general result in index theory shown in \cite[Proposition 5.14]{DPV}; here we give a direct proof adapted to the case at hand, based on an argument sketched in \cite[Lemma 5.2]{PV}.

\begin{prop} \label{prop:spline-conv}
For $\lambda \in P_G^+$, define a measure $M_\lambda$ on $\tot$ by
\begin{equation} \label{eqn:Mlambda-def}
M_\lambda = \sum_{\mu \in P_H^+} m^\lambda_\mu \sum_{w \in W_\goh} \epsilon(w) \, \delta_{w(\mu + \rho_\goh)}.
\end{equation}
Then we have the convolution identity
\begin{equation} \label{eqn:spline-conv}
\mathcal{J}_{\lambda + \rho_\gog}(x) = \big( \mathcal{B}_{\gog / \goh} * M_\lambda \big)(x).
\end{equation}
\end{prop}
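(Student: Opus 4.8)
The plan is to prove (\ref{eqn:spline-conv}) by computing the Fourier transforms over $\tot$ of both sides and checking that they coincide. Both $\mathcal{J}_{\lambda+\rho_\gog}$ and $\mathcal{B}_{\gog/\goh}*M_\lambda$ are compactly supported, piecewise polynomial functions on $\tot$ that are bounded off a Lebesgue-null set: the first because, by Proposition \ref{prop:piecewise-poly} and the definition (\ref{eqn:J-def}), it equals $\Delta_\goh$ times a bounded piecewise polynomial function (the density $p_\lambda$ being divisible by $\Delta_\goh^2$), antisymmetrically extended over the Weyl chambers; the second because $\mathcal{B}_{\gog/\goh}$ is a compactly supported bounded probability density and $M_\lambda$ is a finite signed sum of point masses, so $\mathcal{B}_{\gog/\goh}*M_\lambda$ is a finite combination of translates of $\mathcal{B}_{\gog/\goh}$. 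In particular both sides lie in $L^1(\tot)$, so once their Fourier transforms agree, injectivity of $\mathscr{F}_\tot$ on $L^1(\tot)$ gives equality almost everywhere, hence everywhere by piecewise polynomiality.

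First I would compute the Fourier transform of the left-hand side. Substituting the Harish-Chandra formula (\ref{eqn:hc-integral}) for $\mathcal{H}_G(\nu,i\bullet)$ into Corollary \ref{cor:pushfwd-FT} and multiplying by the prefactor appearing in (\ref{eqn:J-def}), all of the normalizing factors $\Delta_\gog(\rho_\gog)$, $\Delta_\goh(\rho_\goh)$ and $\Delta_\gog(\nu)$ cancel, and one obtains
\[
\mathscr{F}_\tot[\mathcal{J}_\nu](\xi) \;=\; \frac{1}{\Delta_{\gog/\goh}(i\xi)} \sum_{w \in W_\gog} \epsilon(w)\, e^{i\langle w(\nu),\,\xi\rangle}, \qquad \xi \in \tot,
\]
for any $\nu \in \tilde\tot$ with $\Delta_\gog(\nu)\neq 0$; the ostensible poles of $\Delta_{\gog/\goh}(i\xi)^{-1}$ cancel against zeros of the $W_\gog$-sum exactly as in Remark \ref{rem:integral-convergence}. (The right-hand side is $W_\goh$-skew in $\xi$, since $\Delta_{\gog/\goh}$ is $W_\goh$-invariant on $\tot$ while the $W_\gog$-sum is $W_\goh$-skew; this is consistent with the antisymmetric extension of $\mathcal{J}_\nu$ to all of $\tot$, so taking the inverse transform of this expression recovers $\mathcal{J}_\nu$ and not merely its restriction to $\tot_+$.)

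Next I would take $\nu = \lambda + \rho_\gog$ with $\lambda \in P_G^+$ and rewrite the numerator. By the Weyl denominator identity $\widehat{\Delta}_\gog(\xi) = \sum_{w\in W_\gog}\epsilon(w)e^{i\langle w(\rho_\gog),\xi\rangle}$ and the Weyl character formula, $\widehat{\Delta}_\gog(\xi)^{-1}\sum_{w\in W_\gog}\epsilon(w)e^{i\langle w(\lambda+\rho_\gog),\xi\rangle}$ is the character of $V^G_\lambda$; restricting $\xi$ to $\tot$, this equals $\chi^G_\lambda|_H = \sum_{\mu\in P_H^+} m^\lambda_\mu\,\chi^H_\mu$, and applying the Weyl character formula for $H$ to each $\chi^H_\mu$ gives
\[
\sum_{w\in W_\gog}\epsilon(w)\,e^{i\langle w(\lambda+\rho_\gog),\xi\rangle} \;=\; \frac{\widehat{\Delta}_\gog(\xi)}{\widehat{\Delta}_\goh(\xi)}\sum_{\mu\in P_H^+} m^\lambda_\mu \sum_{u\in W_\goh}\epsilon(u)\,e^{i\langle u(\mu+\rho_\goh),\xi\rangle} \;=\; \widehat{\Delta}_{\gog/\goh}(\xi)\,\mathscr{F}_\tot[M_\lambda](\xi),
\]
the last equality being the definition (\ref{eqn:Mlambda-def}) of $M_\lambda$. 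Combining this with the previous display and recognizing $\widehat{\Delta}_{\gog/\goh}(\xi)/\Delta_{\gog/\goh}(i\xi)$ as $\mathscr{F}_\tot[\mathcal{B}_{\gog/\goh}](\xi)$ by the definition (\ref{eqn:B-def}) of the box spline gives
\[
\mathscr{F}_\tot[\mathcal{J}_{\lambda+\rho_\gog}](\xi) \;=\; \frac{\widehat{\Delta}_{\gog/\goh}(\xi)}{\Delta_{\gog/\goh}(i\xi)}\,\mathscr{F}_\tot[M_\lambda](\xi) \;=\; \mathscr{F}_\tot[\mathcal{B}_{\gog/\goh}](\xi)\,\mathscr{F}_\tot[M_\lambda](\xi) \;=\; \mathscr{F}_\tot[\mathcal{B}_{\gog/\goh}*M_\lambda](\xi),
\]
which is the desired identity of Fourier transforms.

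I expect the obstacles to be bookkeeping rather than conceptual: keeping track of the powers of $i$ (using homogeneity of $\Delta_{\gog/\goh}$) and of the Weyl-vector normalizations so that all constants cancel cleanly, and justifying the division by $\Delta_{\gog/\goh}(i\xi)$ together with the convolution theorem at the level of tempered distributions. The latter points cause no real trouble because every function in sight is compactly supported and in $L^1(\tot)$ and $M_\lambda$ is finitely supported; the removable-singularity phenomenon is exactly the one already discussed in Remark \ref{rem:integral-convergence}.
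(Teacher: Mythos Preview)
Your proposal is correct and follows essentially the same route as the paper: both arguments compute the Fourier transform over $\tot$ of each side, using Corollary~\ref{cor:pushfwd-FT} and the Harish-Chandra formula for the left-hand side, the Weyl character formula together with the decomposition $\chi^G_\lambda|_H = \sum_\mu m^\lambda_\mu \chi^H_\mu$ for the right-hand side, and the definition (\ref{eqn:B-def}) of $\mathcal{B}_{\gog/\goh}$ to match them. The only cosmetic difference is that the paper packages the combination of Harish-Chandra and Weyl as the Kirillov character formula (\ref{eqn:KCF}) and runs the computation as an identity of functions before taking a single inverse Fourier transform at the end, whereas you phrase everything directly in terms of $\mathscr{F}_\tot$ from the start.
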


Explicitly, (\ref{eqn:spline-conv}) means that for $\lambda \in P_G^+$,
\begin{equation} \label{eqn:spline-conv2}
\mathcal{J}_{\lambda + \rho_\gog}(x) = \sum_{\mu \in P_H^+} m^\lambda_\mu \sum_{w \in W_\goh} \epsilon(w) \, \mathcal{B}_{\gog / \goh} \big(x - w(\mu + \rho_\goh) \big)
\end{equation}
almost everywhere with respect to Lebesgue measure on $\tot$.  But in fact, we can always take (\ref{eqn:spline-conv2}) to be a pointwise equality if we like: when $\mathcal{J}_{\lambda + \rho_\gog}$ (or $\mathcal{B}_{\gog / \goh}$) is continuous, (\ref{eqn:spline-conv2}) holds pointwise, and if $\mathcal{J}_{\lambda + \rho_\gog}$ is not continuous, we can {\it define} its pointwise values via (\ref{eqn:spline-conv2}) and (\ref{eqn:B-def}).

\begin{proof}
We first recall the Kirillov character formula for compact Lie groups, which expresses the character $\chi^G_\lambda$ in terms of a Harish-Chandra integral of the form (\ref{eqn:hc-integral}):
\begin{equation} \label{eqn:KCF}
\frac{\widehat{\Delta}_\gog(x)}{\Delta_\gog(ix)} \chi^G_\lambda(e^x) = \frac{\Delta_\gog(\lambda + \rho_\gog)}{\Delta_\gog(\rho_\gog)} \mathcal{H}_G(\lambda + \rho_\gog, ix ), \qquad x \in \tilde \tot,
\end{equation}
where the exponential notation $e^x$ merely indicates that we regard the character as a function on the group rather than the algebra.  See \cite[ch.~5]{Kirillov-lectures} for a detailed discussion of the Kirillov character formula.

Restricting (\ref{eqn:KCF}) to $x \in \tot$ and decomposing $\chi^G_\lambda$ into irreducible characters of $H$, we find
\begin{equation} \label{eqn:mult1}
\frac{\widehat{\Delta}_\gog(x)}{\Delta_\gog(ix)} \sum_{\mu \in P_H^+} m^\lambda_\mu \, \chi^H_\mu(e^x) = \frac{\Delta_\gog(\lambda + \rho_\gog)}{\Delta_\gog(\rho_\gog)} \mathcal{H}_G(\lambda + \rho_\gog, ix ), \qquad x \in \tot.
\end{equation}
Applying (\ref{eqn:KCF}) again on the left-hand side above, and then applying the Harish-Chandra formula (\ref{eqn:hc-integral}), we obtain
\begin{align*}
\frac{\widehat{\Delta}_\gog(x)}{\Delta_\gog(ix)} \sum_{\mu \in P_H^+} m^\lambda_\mu \, \chi^H_\mu(e^x) &= \frac{\widehat{\Delta}_{\gog/\goh}(x)}{\Delta_{\gog/\goh}(ix)} \sum_{\mu \in P_H^+} m^\lambda_\mu \, \frac{\Delta_\goh(\mu+\rho_\goh)}{\Delta_\goh(\rho_\goh)} \mathcal{H}_H(\mu + \rho_\goh, ix ) \\
&= \frac{\widehat{\Delta}_{\gog/\goh}(x)}{\Delta_{\gog/\goh}(ix)} \sum_{\mu \in P_H^+}  \frac{m^\lambda_\mu}{\Delta_\goh(ix)} \sum_{w \in W_\goh} \epsilon(w) e^{i \langle w(\mu + \rho_\goh), x \rangle},
\end{align*}
so that after multiplying through by $\Delta_\goh(ix)$, (\ref{eqn:mult1}) becomes
$$ \frac{\widehat{\Delta}_{\gog/\goh}(x)}{\Delta_{\gog/\goh}(ix)} \sum_{\mu \in P_H^+} m^\lambda_\mu \sum_{w \in W_\goh} \epsilon(w) e^{i \langle w(\mu + \rho_\goh), x \rangle} = \frac{\Delta_\gog(\lambda + \rho_\gog)}{\Delta_\gog(\rho_\gog)} \Delta_\goh(ix) \mathcal{H}_G(\lambda + \rho_\gog, ix ). $$
Finally, taking the inverse Fourier transform of both sides over $\tot$ and applying Corollary \ref{cor:pushfwd-FT} and the definition (\ref{eqn:J-def}), the equation above becomes (\ref{eqn:spline-conv}).
\end{proof}

\begin{remark} \label{rem:spline-regularity}
It is obvious from Proposition \ref{prop:spline-conv} that $\mathcal{J}_{\lambda + \rho_\gog}$ has at least as many continuous derivatives as $\mathcal{B}_{\gog / \goh}$ whenever $\lambda \in P_G^+$.  In fact, it is easy to show\footnote{One can use an approximation argument as in \cite[Corollary 4.1]{McS-splines}.} that $\mathcal{J}_\lambda$ has at least this many continuous derivatives for all $\lambda \in \tilde \tot$. By a standard fact about spline functions (see e.g.~\cite[\textsection1.5]{PBP}), if all subsets of $\Phi^+_{\gog/\goh}$ obtained by deleting $\ell+1$ vectors span $\tot$, then $\mathcal{B}_{\gog / \goh} \in C^\ell(\tot)$.  We thus obtain an alternate proof of Theorem \ref{thm:density-regularity}.
\end{remark}

Proposition \ref{prop:spline-conv} yields a distributional version of the semiclassical limit (\ref{eqn:mult-semiclassical}).  For $s > 0$, we define the rescaling operator $\mathcal{R}_s$, which acts on measures or functions on $\tot$ as follows.  If $\mathcal{M}$ is a (possibly signed) measure on $\tot$, then $\mathcal{R}_s \mathcal{M}(E) = \mathcal{M}(sE)$ for Borel sets $E \subset \tot$, where $sE$ is the dilation of $E$ by a factor of $s$.  If $f$ is a function on $\tot$, then $\mathcal{R}_s f(x) = s^{r} f(sx)$, where $r = \dim \tot$.  With this definition, if $\mathcal{M}$ has density $f$ with respect to Lebesgue measure, then $\mathcal{R}_s \mathcal{M}$ has density $\mathcal{R}_s f$.  Write $dx$ for Lebesgue measure on $\tot$ and $\implies$ for weak convergence of finite signed measures.

\begin{cor} \label{cor:sc-limit-dist}
For $\lambda \in P_G^+$, as $n \to \infty$,
\begin{equation} \label{eqn:sc-limit-dist}
n^{-|\Phi^+_{\gog / \goh}|} \mathcal{R}_n M_{n \lambda} \implies \mathcal{J}_\lambda(x) \, dx.
\end{equation}
\end{cor}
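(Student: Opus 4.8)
The plan is to obtain (\ref{eqn:sc-limit-dist}) from the exact convolution identity of Proposition~\ref{prop:spline-conv} by rescaling and then comparing Fourier transforms. Write $\nu_n = n^{-|\Phi^+_{\gog / \goh}|} \mathcal{R}_n M_{n\lambda}$ for the measure on the left of (\ref{eqn:sc-limit-dist}). Since $n\lambda \in P_G^+$, Proposition~\ref{prop:spline-conv} gives $\mathcal{J}_{n\lambda + \rho_\gog}(x)\,dx = \mathcal{B}_{\gog / \goh} * M_{n\lambda}$ as an identity of finite signed measures on $\tot$. Now apply $\mathcal{R}_n$: this operator is linear and satisfies $\mathcal{R}_n(\sigma * \tau) = \mathcal{R}_n \sigma * \mathcal{R}_n \tau$, and from (\ref{eqn:J-def}), Corollary~\ref{cor:pushfwd-FT} and the scaling $\mathcal{H}_G(s\mu, i\xi) = \mathcal{H}_G(\mu, is\xi)$ one reads off the homogeneity $\mathcal{J}_{s\mu}(sx) = s^{d}\mathcal{J}_\mu(x)$, where $d = |\Phi^+_\gog| - |\Phi^+_\goh| - r$, so that $d + r = |\Phi^+_{\gog / \goh}|$. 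Hence $\mathcal{R}_n(\mathcal{J}_{n\lambda + \rho_\gog}(x)\,dx)$ has density $n^{r+d}\mathcal{J}_{\lambda + \rho_\gog/n}(x)$, and rearranging the rescaled identity yields
\begin{equation*}
\mathcal{R}_n \mathcal{B}_{\gog / \goh} * \nu_n = \mathcal{J}_{\lambda + \rho_\gog / n}(x)\,dx,
\end{equation*}
where $\mathcal{J}_{\lambda + \rho_\gog/n}$ is well defined because $\lambda + \rho_\gog/n$ is strictly dominant.

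Next I would pass to Fourier transforms, using two facts. First, from (\ref{eqn:B-def}), $\mathscr{F}_\tot[\mathcal{B}_{\gog / \goh}](\xi) = \widehat{\Delta}_{\gog / \goh}(\xi) / \Delta_{\gog / \goh}(i\xi)$, which is continuous on $\tot$ with value $1$ at the origin; hence $\mathscr{F}_\tot[\mathcal{R}_n \mathcal{B}_{\gog / \goh}](\xi) = \mathscr{F}_\tot[\mathcal{B}_{\gog / \goh}](\xi/n) \to 1$, and for each fixed $\xi$ it is nonzero once $n$ is large. Second, combining (\ref{eqn:J-def}), Corollary~\ref{cor:pushfwd-FT} and the $W_\goh$-skewness of $\Delta_\goh(i\bullet)\,\mathcal{H}_G(\mu, i\bullet)$ shows that $\mathcal{J}_\mu = \tfrac{\Delta_\gog(\mu)}{\Delta_\gog(\rho_\gog)}\mathscr{F}_\tot^{-1}[\Delta_\goh(i\bullet)\,\mathcal{H}_G(\mu, i\bullet)]$ on all of $\tot$, so $\mathscr{F}_\tot[\mathcal{J}_\mu](\xi) = \tfrac{\Delta_\gog(\mu)}{\Delta_\gog(\rho_\gog)}\Delta_\goh(i\xi)\,\mathcal{H}_G(\mu, i\xi)$, which is continuous in $\mu$ (as $\Delta_\gog$ is polynomial and $\mathcal{H}_G(\cdot, i\xi)$ holomorphic); taking $\mu = \lambda + \rho_\gog/n \to \lambda$ gives $\mathscr{F}_\tot[\mathcal{J}_{\lambda + \rho_\gog/n}](\xi) \to \mathscr{F}_\tot[\mathcal{J}_\lambda](\xi)$ for every $\xi$. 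Applying $\mathscr{F}_\tot$ to the displayed identity and dividing by $\mathscr{F}_\tot[\mathcal{R}_n\mathcal{B}_{\gog / \goh}](\xi)$ (legitimate for $n$ large, depending on $\xi$) then gives $\mathscr{F}_\tot[\nu_n](\xi) \to \mathscr{F}_\tot[\mathcal{J}_\lambda\,dx](\xi)$ pointwise on $\tot$.

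To promote this to the weak convergence (\ref{eqn:sc-limit-dist}), observe that the atoms of $\mathcal{R}_n M_{n\lambda}$ lie at the points $w(\mu + \rho_\goh)/n$ with $w \in W_\goh$ and $\mu$ a weight of $V^G_{n\lambda}$, hence in the convex hull of $W_\gog(n\lambda)$; thus all the $\nu_n$, and also $\mathcal{J}_\lambda\,dx$, are supported in one fixed compact set $K \subset \tot$. Granting the uniform bound $\sup_n \|\nu_n\| < \infty$, weak-$*$ sequential compactness of the ball of $C(K)^*$ shows that every subsequence of $(\nu_n)$ has a further subsequence converging weak-$*$ to a signed measure $\nu$ on $K$; testing against $x \mapsto e^{i\langle x, \xi\rangle}$ gives $\mathscr{F}_\tot[\nu] = \mathscr{F}_\tot[\mathcal{J}_\lambda\,dx]$, so $\nu = \mathcal{J}_\lambda\,dx$ by injectivity of the Fourier transform, and the standard subsequence argument finishes the proof.

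The main obstacle is exactly this uniform total-variation bound $\sup_n \|\nu_n\| < \infty$, since pointwise convergence of Fourier transforms does not by itself imply weak convergence of signed measures. As $M_{n\lambda}$ is a signed sum of distinct point masses with weights $\pm m^{n\lambda}_\mu$, one has $\|\nu_n\| = n^{-|\Phi^+_{\gog / \goh}|}|W_\goh| \sum_\mu m^{n\lambda}_\mu$, so the needed estimate is $\sum_\mu m^{n\lambda}_\mu = O(n^{|\Phi^+_{\gog / \goh}|})$. This follows from the well-known fact that the restriction multiplicity function is piecewise quasi-polynomial of degree $d$ (so $m^{n\lambda}_\mu = O(n^{d})$ uniformly over the $O(n^{r})$ relevant weights $\mu$, and $d + r = |\Phi^+_{\gog / \goh}|$), in the spirit of \cite{Heck} and implicit in the discussion of (\ref{eqn:mult-semiclassical}); alternatively one could try to extract the same bound directly from the convolution identity together with the positivity of $\psi_* \phi_* \beta_{n\lambda}$. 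Given this bound (and a uniform version of (\ref{eqn:mult-semiclassical})) there is also a more direct ``Riemann sum'' proof---the restriction of $\nu_n$ to $\tot_+$ equals $\sum_\mu n^{-r}\,(n^{-d} m^{n\lambda}_\mu)\,\delta_{(\mu + \rho_\goh)/n}$, a Riemann sum for $\int_{\tot_+} \mathcal{J}_\lambda(x)\,dx$---but the Fourier route above is shorter given Proposition~\ref{prop:spline-conv}.
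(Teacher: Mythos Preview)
Your argument follows the same skeleton as the paper's: both start from the convolution identity of Proposition~\ref{prop:spline-conv}, rescale by $\mathcal{R}_n$, and use the homogeneity $\mathcal{J}_{s\mu}(sx)=s^{d}\mathcal{J}_\mu(x)$ (stated and proved as Lemma~\ref{lem:J-homog} in the paper) to obtain $\mathcal{R}_n\mathcal{B}_{\gog/\goh}*\nu_n=\mathcal{J}_{\lambda+\rho_\gog/n}(x)\,dx$. The only genuine divergence is in the last deconvolution step. The paper simply notes that $\mathcal{R}_n\mathcal{B}_{\gog/\goh}$ is an approximation to the identity and concludes that $\nu_n$ must have the same weak limit as the right-hand side; you instead pass to Fourier transforms, divide by $\mathscr{F}_\tot[\mathcal{R}_n\mathcal{B}_{\gog/\goh}]$, and finish with a weak-$*$ compactness argument on a fixed compact support.

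These two endings are close cousins, and you are right to isolate the uniform bound $\sup_n\|\nu_n\|<\infty$ as the real content. The paper's approximate-identity step silently uses the same estimate: to pass from $\nu_n(\mathcal{R}_n\mathcal{B}_{\gog/\goh}*f)\to\int f\,\mathcal{J}_\lambda$ to $\nu_n(f)\to\int f\,\mathcal{J}_\lambda$ one needs $|\nu_n(f)-\nu_n(\mathcal{R}_n\mathcal{B}_{\gog/\goh}*f)|\le\|\nu_n\|\,\|f-\mathcal{R}_n\mathcal{B}_{\gog/\goh}*f\|_\infty\to 0$, which again requires $\|\nu_n\|$ bounded. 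Your justification of the bound via $m^{n\lambda}_\mu=O(n^{d})$ uniformly over the $O(n^{r})$ relevant dominant weights is correct and is precisely what the piecewise quasi-polynomiality underlying (\ref{eqn:mult-semiclassical}) gives. So your proof is correct, takes essentially the paper's route, and is in fact more explicit than the paper on this one technical point.
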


To prove Corollary \ref{cor:sc-limit-dist}, we will use a homogeneity property of the function $\mathcal{J}_\lambda$.  The following lemma is essentially equivalent to \cite[eqn.~(3.24)]{Heck}, but here we give a concise alternate proof.

\begin{lem} \label{lem:J-homog}
As a function of the pair $(\lambda, x) \in \tilde \tot \times \tot$, $\mathcal{J}_\lambda(x)$ is homogeneous of degree $d = |\Phi_\gog^+| - |\Phi_\goh^+| - r$.  That is, for $s > 0$, $\mathcal{J}_{s \lambda}(sx) = s^d \mathcal{J}_\lambda(x).$
\end{lem}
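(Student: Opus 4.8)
The plan is to reduce $\mathcal{J}_\lambda$ to a single clean oscillatory integral and then read off the homogeneity from an elementary change of variables.

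\textbf{Step 1: a closed form for $\mathcal{J}_\lambda$.} Substituting the integral formula (\ref{eqn:integral-formula-gen}) for $p_\lambda$ into the definition (\ref{eqn:J-def}) of $\mathcal{J}_\lambda$, the factors $\Delta_\goh(x)/\Delta_\gog(\lambda)$ cancel against their reciprocals; inserting the value (\ref{eqn:integral-constant-gen}) of $C_{\gog/\goh}$ and using $|\Phi^+_\gog| - |\Phi^+_\goh| = |\Phi^+_{\gog/\goh}|$ leaves
\[
\mathcal{J}_\lambda(x) = \frac{(-i)^{|\Phi^+_{\gog/\goh}|}}{(2\pi)^{r}} \int_\tot \frac{\Delta_\goh(\xi)}{\Delta_\gog(\xi)} \sum_{w \in W_\gog} \epsilon(w)\, e^{i \langle \xi, w(\lambda) - x \rangle}\, d\xi
\]
for $\lambda \in \tilde\tot$ with $\Delta_\gog(\lambda) \ne 0$ and $x \in \tot_+$ (equivalently, this is immediate from Corollary \ref{cor:pushfwd-FT} together with (\ref{eqn:J-def})). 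If instead $\Delta_\gog(\lambda) = 0$, then $\mathcal{J}_\lambda \equiv 0$ by definition and $\Delta_\gog(s\lambda) = s^{|\Phi^+_\gog|}\Delta_\gog(\lambda) = 0$, so the asserted identity $\mathcal{J}_{s\lambda}(sx) = s^d \mathcal{J}_\lambda(x)$ holds trivially; hence it suffices to treat the non-degenerate case.

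\textbf{Step 2: the change of variables.} Fix $s > 0$. Since $w(s\lambda) - sx = s(w(\lambda) - x)$ for every $w \in W_\gog$, we have $\langle \xi, w(s\lambda) - sx \rangle = \langle s\xi, w(\lambda) - x \rangle$, so substituting $\eta = s\xi$ in the integral representing $\mathcal{J}_{s\lambda}(sx)$ produces $d\xi = s^{-r}\, d\eta$ together with $\Delta_\goh(\eta/s)/\Delta_\gog(\eta/s) = s^{|\Phi^+_{\gog/\goh}|}\,\Delta_\goh(\eta)/\Delta_\gog(\eta)$, because $\Delta_\goh/\Delta_\gog = \Delta_{\gog/\goh}^{-1}$ is homogeneous of degree $-|\Phi^+_{\gog/\goh}|$ on $\tot$. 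Collecting powers of $s$ gives $\mathcal{J}_{s\lambda}(sx) = s^{|\Phi^+_{\gog/\goh}| - r}\,\mathcal{J}_\lambda(x)$, and $|\Phi^+_{\gog/\goh}| - r = |\Phi^+_\gog| - |\Phi^+_\goh| - r = d$, which is the claim for $x \in \tot_+$.

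\textbf{Step 3: passing to all of $\tilde\tot \times \tot$.} Dilation by $s>0$ preserves each open Weyl chamber of $\tot$, so for $x \in \tot_+$ and $w \in W_\goh$ the antisymmetric extension $\mathcal{J}_\lambda(w(x)) = \epsilon(w)\mathcal{J}_\lambda(x)$ gives $\mathcal{J}_{s\lambda}(w(sx)) = \epsilon(w)s^d\mathcal{J}_\lambda(x) = s^d\mathcal{J}_\lambda(w(x))$. Since the integral in Step 1 exhibits $\mathcal{J}_\lambda$ as a constant times the inverse Fourier transform of a locally integrable, rapidly decaying function, $\mathcal{J}_\lambda$ is continuous, so the identity extends from the chamber interiors to the walls; and $\mathcal{J}_\lambda$ is $W_\gog$-invariant in $\lambda$, so no assumption $\lambda \in \tilde\tot_+$ is needed. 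There is no real obstacle here: the only point requiring care is that the clean formula in Step 1 presupposes the running hypotheses of this section guaranteeing that $p_\lambda$ is a genuine function and the integral converges (as a principal value if necessary). When those hypotheses fail one replaces (\ref{eqn:integral-formula-gen}) by the L'H\^opital form (\ref{eqn:integral-formula-hopital}) or passes to the affine subspace of Remark \ref{rem:unfaithful}; the same dilation $\eta = s\xi$ then yields the homogeneity with $d$ interpreted relative to that subspace, and I would record this in one sentence rather than rework the computation.
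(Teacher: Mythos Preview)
Your proof is correct and follows essentially the same route as the paper: derive the clean integral expression for $\mathcal{J}_\lambda$ from (\ref{eqn:integral-formula-gen}) and (\ref{eqn:J-def}), then rescale $\xi \mapsto s\xi$ and read off the power of $s$ from the homogeneities of $\Delta_\goh$, $\Delta_\gog$, and $d\xi$. One small overstatement in Step~3: the integrand is not in general ``rapidly decaying'' (cf.\ Remark~\ref{rem:integral-convergence} and Remark~\ref{rem:negative-ell}), so $\mathcal{J}_\lambda$ need not be continuous; but this does not affect the argument, since on chamber walls the antisymmetric extension forces $\mathcal{J}_\lambda = 0$ and the identity is only asserted almost everywhere anyway.
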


\begin{proof}
We first note that if $\Delta_\gog(\lambda) = 0$ then $\mathcal{J}_{s \lambda} \equiv 0$ for all $s > 0$ and there is nothing to prove.  Thus we assume $\Delta_\gog(\lambda) \not = 0$.  For the sake of simplicity in the calculations that follow we also assume as in Theorem \ref{thm:integral-formula-gen} that $\Delta_\gog |_\tot$ is not identically zero; if this assumption does not hold then we can apply L'H\^opital's rule as in Remark \ref{rem:unfaithful} and proceed by the same argument as below.

From the definition (\ref{eqn:J-def}) of $\mathcal{J}_\lambda$ in terms of $p_\lambda$ and from the integral formula (\ref{eqn:integral-formula-gen}) for $p_\lambda$, we have
\begin{equation} \label{eqn:J-integral}
\mathcal{J}_\lambda(x) = \frac{(-i)^{|\Phi_\gog^+| - |\Phi_\goh^+|}}{(2\pi)^{r}} \int_\tot \frac{\Delta_\goh(\xi)}{\Delta_\gog(\xi)}  \sum_{w \in W_\gog} \epsilon(w) \, e^{i \langle \xi, w(\lambda) - x \rangle} \, d\xi,
\end{equation}
so that
\begin{align*}
\mathcal{J}_{s \lambda}(sx) &= \frac{(-i)^{|\Phi_\gog^+| - |\Phi_\goh^+|}}{(2\pi)^{r}} \int_\tot \frac{\Delta_\goh(\xi)}{\Delta_\gog(\xi)}  \sum_{w \in W_\gog} \epsilon(w) \, e^{i \langle \xi, w(s \lambda) - sx \rangle} \, d\xi \\
&= \frac{(-i)^{|\Phi_\gog^+| - |\Phi_\goh^+|}}{(2\pi)^{r}} \int_\tot \frac{\Delta_\goh(\xi)}{\Delta_\gog(\xi)}  \sum_{w \in W_\gog} \epsilon(w) \, e^{i \langle s \xi, w(\lambda) - x \rangle} \, d\xi \\
&= \frac{(-i)^{|\Phi_\gog^+| - |\Phi_\goh^+|}}{(2\pi)^{r}} \int_\tot \frac{\Delta_\goh(s^{-1} \xi)}{\Delta_\gog(s^{-1} \xi)}  \sum_{w \in W_\gog} \epsilon(w) \, e^{i \langle \xi, w(\lambda) - x \rangle} \, d(s\xi).
\end{align*}
The statement then follows from the homogeneity properties
$$\Delta_\goh(s^{-1} \xi) = s^{-|\Phi_\goh^+|} \Delta_\goh(\xi), \quad \Delta_\gog(s^{-1} \xi) = s^{-|\Phi_\gog^+|} \Delta_\gog(\xi), \quad d(s \xi) = s^{-r} d\xi.$$
\end{proof}

\begin{proof}[Proof of Corollary \ref{cor:sc-limit-dist}]
From Proposition \ref{prop:spline-conv}, we have $$\mathcal{J}_{n\lambda + \rho_\gog} = \mathcal{B}_{\gog / \goh} * M_{n \lambda}.$$  Applying $\mathcal{R}_n$ on both sides and using Lemma \ref{lem:J-homog}, we find
  \begin{equation} \label{eqn:conv-intermediate}
      \mathcal{R}_n \mathcal{J}_{n\lambda + \rho_\gog} = \mathcal{R}_n \big( \mathcal{B}_{\gog / \goh} * M_{n \lambda} \big) = \mathcal{R}_n \mathcal{B}_{\gog / \goh} * \mathcal{R}_n M_{n \lambda} = n^{d+r} \mathcal{J}_{\lambda + \rho_\gog / n},
  \end{equation}
where $\mathcal{R}_n$ acts on $\mathcal{B}_{\gog / \goh}$ as a measure and on $\mathcal{J}_{n\lambda + \rho_\gog}$ as a function.  Thus we have
\begin{align*}
\lim_{n \to \infty} n^{-(d+ r)} \big( \mathcal{R}_n \mathcal{B}_{\gog / \goh} * \mathcal{R}_n M_{n \lambda} \big) &= \lim_{n \to \infty} \big[ \mathcal{R}_n \mathcal{B}_{\gog / \goh} * \big( n^{-(d + r)} \mathcal{R}_n M_{n \lambda} \big) \big] \\
&= \lim_{n \to \infty} \mathcal{J}_{\lambda + \rho_\gog / n} = \mathcal{J}_\lambda
\end{align*}
almost everywhere with respect to Lebesgue measure on $\tot$, which is a strictly stronger statement than the weak limit
\begin{equation} \label{eqn:wlim-approx}
\mathcal{R}_n \mathcal{B}_{\gog / \goh} * \big( n^{-(d + r)} \mathcal{R}_n M_{n \lambda} \big) \implies \mathcal{J}_\lambda(x) \, dx.
 \end{equation}
Since $\mathcal{B}_{\gog / \goh}$ is a probability measure, the sequence $\mathcal{R}_n \mathcal{B}_{\gog / \goh}$ is an approximation to the identity as $n \to \infty$.  Accordingly, the sequence of measures $n^{-(d + r)} \mathcal{R}_n M_{n \lambda}$ must have the same weak limit as the left-hand side of (\ref{eqn:wlim-approx}), which is the desired result, since $d + r = |\Phi^+_\gog| - |\Phi^+_\goh| = |\Phi^+_{\gog / \goh}|.$
\end{proof}

We can also interpret the pushforward measure $\psi_* \phi_* \beta_\lambda$ itself as a semiclassical limit.  For $\lambda \in P^+_G,$ define the probability measure
\begin{equation} \label{eqn:Xilambda-def}
\Xi_\lambda = \frac{1}{\dim V^G_\lambda} \sum_{\mu \in P^+_H} m^\lambda_\mu \, \dim V^H_\mu \, \delta_\mu.
\end{equation}
The measure $\Xi_\lambda$ is a generalization of the Littlewood--Richardson process studied by Biane \cite{Biane95} and Collins--Novak--\'Sniady \cite{CNS}, which corresponds to the case $G = \U(n)^2$ with $H \cong \U(n)$ the diagonal subgroup.  The case $G = \U(n)$, $H = \U(m)$, $m < n$ was also studied in \cite{Biane95} and is sometimes called the $\U(n)$ branching process.  When $G = \U(n)$, the following theorem can be deduced from the result shown in \cite[\textsection2.4]{Biane95}.  Here we show a general result for compact connected Lie groups, using very different methods based on box spline convolution.

\begin{thm} \label{thm:sc-limit-prob}
For $\lambda \in P^+_G,$ we have the weak limit
\begin{equation} \label{eqn:sc-limit-prob}
\mathcal{R}_n \Xi_{n \lambda} \implies \psi_* \phi_* \beta_\lambda
\end{equation}
as $n \to \infty$.
\end{thm}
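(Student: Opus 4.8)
The plan is to deduce the statement from the distributional semiclassical limit of Corollary~\ref{cor:sc-limit-dist} by re-expressing the measure $\Xi_{n\lambda}$ in terms of the signed measure $M_{n\lambda}$, the bridge being the Weyl dimension formula $\dim V^H_\mu = \Delta_\goh(\mu+\rho_\goh)/\Delta_\goh(\rho_\goh)$ and $\dim V^G_{n\lambda} = \Delta_\gog(n\lambda+\rho_\gog)/\Delta_\gog(\rho_\gog)$, which turns (\ref{eqn:Xilambda-def}) into
$$\Xi_{n\lambda} \;=\; \frac{\Delta_\gog(\rho_\gog)}{\Delta_\gog(n\lambda+\rho_\gog)\,\Delta_\goh(\rho_\goh)} \sum_{\mu \in P_H^+} m^{n\lambda}_\mu\,\Delta_\goh(\mu+\rho_\goh)\,\delta_\mu.$$
Since $\mathcal{R}_n\Xi_{n\lambda}$ and $\psi_*\phi_*\beta_\lambda$ are probability measures supported in $\tot_+$, and vague convergence of probability measures to a probability measure implies weak convergence, it suffices to show $\int f\, d(\mathcal{R}_n\Xi_{n\lambda}) \to \int f\, d(\psi_*\phi_*\beta_\lambda)$ for every $f \in C_c(\tot)$. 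I may further assume $f$ is $W_\goh$-invariant: replacing $f$ by $f \circ \psi|_\tot$ changes neither integral (both measures being supported in the fundamental domain $\tot_+$, on which $\psi|_\tot$ is the identity), while producing a $W_\goh$-invariant function still in $C_c(\tot)$.

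Fix such an $f$ and set $h = \Delta_\goh\cdot f \in C_c(\tot)$, which is $W_\goh$-skew because $\Delta_\goh$ is. First I would trade $\delta_\mu$ for $\delta_{\mu+\rho_\goh}$: since the weights $m^{n\lambda}_\mu \dim V^H_\mu / \dim V^G_{n\lambda}$ are nonnegative and sum to $1$, and $f$ is uniformly continuous, replacing $f(\mu/n)$ by $f((\mu+\rho_\goh)/n)$ costs only $\sup_y|f(y)-f(y+\rho_\goh/n)| = o(1)$. Next, the homogeneity $\Delta_\goh(\mu+\rho_\goh) = n^{|\Phi^+_\goh|}\Delta_\goh((\mu+\rho_\goh)/n)$ together with the skewness identity $\sum_{w \in W_\goh}\epsilon(w)\,h(w(y)) = |W_\goh|\,h(y)$ applied to the definition (\ref{eqn:Mlambda-def}) of $M_{n\lambda}$ yields
$$\sum_{\mu \in P_H^+} m^{n\lambda}_\mu\,\Delta_\goh(\mu+\rho_\goh)\,f\big((\mu+\rho_\goh)/n\big) \;=\; \frac{n^{|\Phi^+_\goh|}}{|W_\goh|}\int_\tot h(x/n)\, dM_{n\lambda}(x).$$
Combining the last three displays, and using $\Delta_\gog(n\lambda+\rho_\gog) = n^{|\Phi^+_\gog|}\Delta_\gog(\lambda)(1+O(1/n))$ together with $|\Phi^+_{\gog/\goh}| = |\Phi^+_\gog|-|\Phi^+_\goh|$, I obtain
$$\int_\tot f\, d(\mathcal{R}_n\Xi_{n\lambda}) \;=\; \frac{\Delta_\gog(\rho_\gog)}{|W_\goh|\,\Delta_\goh(\rho_\goh)\,\Delta_\gog(\lambda)}\big(1+O(1/n)\big)\cdot n^{-|\Phi^+_{\gog/\goh}|}\!\int_\tot h(x/n)\, dM_{n\lambda}(x) + o(1).$$

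By Corollary~\ref{cor:sc-limit-dist}, $n^{-|\Phi^+_{\gog/\goh}|}\!\int_\tot h(x/n)\, dM_{n\lambda}(x) \to \int_\tot h(x)\,\mathcal{J}_\lambda(x)\, dx$, which is finite, so the $O(1/n)$ factor is harmless. Since $\Delta_\goh\,\mathcal{J}_\lambda$ is $W_\goh$-invariant and equals $\frac{\Delta_\goh(\rho_\goh)\Delta_\gog(\lambda)}{\Delta_\gog(\rho_\gog)}\,p_\lambda$ on $\tot_+$ by (\ref{eqn:J-def}), and $f$ is $W_\goh$-invariant, folding the integral into the chamber gives $\int_\tot h\,\mathcal{J}_\lambda\, dx = |W_\goh|\,\frac{\Delta_\goh(\rho_\goh)\Delta_\gog(\lambda)}{\Delta_\gog(\rho_\gog)}\int_{\tot_+}f\,p_\lambda\, dx$. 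Feeding this back, the constants cancel exactly and $\int f\, d(\mathcal{R}_n\Xi_{n\lambda}) \to \int_{\tot_+}f\,p_\lambda\, dx = \int f\, d(\psi_*\phi_*\beta_\lambda)$, which is the claim. The degenerate case $\Delta_\gog(\lambda)=0$ is handled by the modifications of Remark~\ref{rem:unfaithful} (replacing $\Delta_\gog$ by a suitable iterated directional derivative $\partial_{\rho_\gog}^j\Delta_\gog$ and passing to the affine span of the support), which alter the bookkeeping of powers of $n$ but not the structure of the argument.

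The main obstacle I expect is purely the bookkeeping of scaling factors: one must track the homogeneity degree $|\Phi^+_\goh|$ of $\Delta_\goh$, the growth $\dim V^G_{n\lambda}\sim n^{|\Phi^+_\gog|}$, and the normalization $n^{-|\Phi^+_{\gog/\goh}|}$ in Corollary~\ref{cor:sc-limit-dist} carefully enough that the constant $\Delta_\gog(\rho_\gog)/\Delta_\goh(\rho_\goh)$ and the factor $|W_\goh|$ cancel to leave precisely the probability density $p_\lambda$. The supporting reductions — to $W_\goh$-invariant compactly supported test functions, from vague to weak convergence, and the $\rho_\goh/n$ shift — are routine.
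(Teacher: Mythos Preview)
Your argument is correct when $\Delta_\gog(\lambda)\ne 0$, and it is essentially the same strategy as the paper's: both rely on the box-spline convolution identity underlying Corollary~\ref{cor:sc-limit-dist}, together with the Weyl dimension formula to translate between $M_\lambda$ and $\Xi_\lambda$ and between $\mathcal{J}_\lambda$ and $p_\lambda$. The paper packages the computation using skew-extension and translation operators $\mathcal{S}$, $\mathcal{T}_v$ acting on measures, while you unfold everything against a $W_\goh$-invariant test function; the content is the same.

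There is, however, a genuine gap in your treatment of the degenerate case $\Delta_\gog(\lambda)=0$. You invoke Corollary~\ref{cor:sc-limit-dist} as a black box, which only tells you $n^{-|\Phi^+_{\gog/\goh}|}\int h\,d(\mathcal{R}_nM_{n\lambda})\to \int h\,\mathcal{J}_\lambda = 0$. But the prefactor $\Delta_\gog(\rho_\gog)\,n^{|\Phi^+_\goh|}/\Delta_\gog(n\lambda+\rho_\gog)$ is now of order $n^{j}\cdot n^{-|\Phi^+_{\gog/\goh}|}$ for some $j\ge 1$, so you would need the sharper estimate $n^{-|\Phi^+_{\gog/\goh}|+j}\int h\,d(\mathcal{R}_nM_{n\lambda})\to (\text{something finite})$, which Corollary~\ref{cor:sc-limit-dist} does not give. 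Remark~\ref{rem:unfaithful} does not supply this either: it modifies the pointwise formula for $p_\lambda$, not the rate in the weak limit of $M_{n\lambda}$. The paper avoids this issue by \emph{not} passing to the limit via Corollary~\ref{cor:sc-limit-dist}; instead it goes back to the exact pre-limit identity $\mathcal{R}_n\mathcal{B}_{\gog/\goh}*\mathcal{R}_nM_{n\lambda}=n^{d+r}\mathcal{J}_{\lambda+\rho_\gog/n}$ from~(\ref{eqn:conv-intermediate}) and rewrites both sides in terms of $\Xi_{n\lambda}$ and $\psi_*\phi_*\beta_{\lambda+\rho_\gog/n}$, using only that $\Delta_\gog(\lambda+\rho_\gog/n)\ne 0$. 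The approximation-to-the-identity argument is then applied directly to $\mathcal{S}\mathcal{T}_{\rho_\goh/n}\mathcal{R}_n\Xi_{n\lambda}$, and the limit $\psi_*\phi_*\beta_{\lambda+\rho_\gog/n}\Rightarrow\psi_*\phi_*\beta_\lambda$ holds for all $\lambda$. To fix your proof you should follow the same route: keep the $\rho_\gog/n$ shift and use the exact identity rather than the limiting Corollary~\ref{cor:sc-limit-dist}.
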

Note that Theorem \ref{thm:sc-limit-prob} applies even when $\Delta_\gog(\lambda) = 0$.

\begin{proof}
If $v \in \tot$ and $\mathcal{M}$ is a signed measure on $\tot$, write $\mathcal{T}_v \mathcal{M}$ for the translation of $\mathcal{M}$ by $v$.  That is, $\mathcal{T}_v \mathcal{M}(E) = \mathcal{M}(E - v)$ for Borel sets $E \subset \tot$.  If $\mathcal{M}$ is supported on $\tot_+$, write $\mathcal{S} \mathcal{M}$ for its $W_\goh$-skew extension to $\tot$, that is, $\mathcal{S} \mathcal{M}(w(E)) = \epsilon(w) \mathcal{M}(E)$ for Borel sets $E \subset \tot_+$.  We have the following relations between $\mathcal{S}$, $\mathcal{T}_v$, and the rescaling operator $\mathcal{R}_n$:
\begin{equation} \label{eqn:rescaling-relns}
\mathcal{R}_n \mathcal{S} = \mathcal{S} \mathcal{R}_n, \qquad \mathcal{R}_n \mathcal{T}_v = \mathcal{T}_{v/n} \mathcal{R}_n.
\end{equation}

Recall the Weyl dimension formula:
\begin{equation} \label{eqn:weyl-dim}
\dim V^G_\lambda = \frac{\Delta_\gog(\lambda + \rho_\gog)}{\Delta_\gog(\rho_\gog)}.
\end{equation}
Using (\ref{eqn:weyl-dim}), we find that the measure $M_\lambda$ defined in (\ref{eqn:Mlambda-def}) and the measure $\Xi_\lambda$ defined in (\ref{eqn:Xilambda-def}) are related by
\begin{equation} \label{eqn:M-Xi}
M_\lambda(dx) = \frac{\Delta_\goh(\rho_\goh)}{\Delta_\gog(\rho_\gog)} \frac{\Delta_\gog(\lambda + \rho_\gog)}{\Delta_\goh(x)} \, \mathcal{S} \mathcal{T}_{\rho_\goh} \Xi_\lambda (dx).
\end{equation}
Note that $\Delta_\gog(\lambda + \rho_\gog / n) \ne 0$, so from the definition (\ref{eqn:J-def}) of $\mathcal{J}_\lambda$ we have
\begin{equation} \label{eqn:J-pshfwd}
\mathcal{S} \psi_* \phi_* \beta_{\lambda + \rho_\gog / n} (dx) = \frac{\Delta_\gog(\rho_\gog)}{\Delta_\goh(\rho_\goh)} \frac{\Delta_\goh(x)}{\Delta_\gog(\lambda + \rho_\gog / n)} \mathcal{J}_{\lambda + \rho_\gog / n}(x) \, dx.
\end{equation}
Using (\ref{eqn:M-Xi}) and (\ref{eqn:J-pshfwd}), we can rewrite the final equality of (\ref{eqn:conv-intermediate}) as
\begin{align*}
\mathcal{S} \psi_* \phi_* \beta_{\lambda + \rho_\gog / n} &= \mathcal{R}_n \mathcal{B}_{\gog / \goh} * \mathcal{R}_n \mathcal{S} \mathcal{T}_{\rho_\goh} \Xi_{n \lambda} \\
&= \mathcal{R}_n \mathcal{B}_{\gog / \goh} * \mathcal{S} \mathcal{T}_{\rho_\goh / n} \mathcal{R}_n \Xi_{n \lambda},
\end{align*}
where in the second equality above we have used the relations (\ref{eqn:rescaling-relns}).  Since $\mathcal{R}_n \mathcal{B}_{\gog / \goh}$ is an approximation to the identity as $n \to \infty$, the sequence of measures $\mathcal{S} \mathcal{T}_{\rho_\goh / n} \mathcal{R}_n \Xi_{n \lambda}$ must have the same weak limit as the sequence $\mathcal{S} \psi_* \phi_* \beta_{\lambda + \rho_\gog / n}$, namely $\mathcal{S} \psi_* \phi_* \beta_\lambda$.  Restricting to $\tot_+$, we find that $\mathcal{T}_{\rho_\goh / n} \mathcal{R}_n \Xi_{n \lambda} \Rightarrow \psi_* \phi_* \beta_\lambda$ and therefore also $\mathcal{R}_n \Xi_{n \lambda} \Rightarrow \psi_* \phi_* \beta_\lambda$ as desired.
\end{proof}

Finally, we briefly discuss the question of inverting the semiclassical limit, that is, recovering the multiplicities $m^\lambda_\mu$ from the function $\mathcal{J}_\lambda$.  Such a computation amounts to inverting the convolution relation (\ref{eqn:spline-conv}).  Somewhat surprisingly, this can be accomplished by several different methods.  The papers \cite{DV, PV, DPV} use box spline deconvolution results of Dahmen--Micchelli \cite{DM2, DM} to develop techniques for computing the multiplicities from $\mathcal{J}_\lambda$ by applying a differential operator.  In \cite{McS-splines}, the second author presented multiple complementary approaches for recovering Littlewood--Richardson coefficients, Kostka numbers, and other tensor product and weight multiplicities from their semiclassical limits.  These techniques include integral formulae, a deconvolution algorithm, and a representation of $\mathcal{J}_\lambda$ as a solution of a finite difference equation.  All of these approaches can be generalized to cover more general restriction multiplicities; for instance, the convolution identity (\ref{eqn:spline-conv}) can be inverted computationally via the algorithm of \cite[Lemma 5.2]{McS-splines}.  As a further example, here we show an integral formula expressing $m^\lambda_\mu$ in terms of finitely many values of $\mathcal{J}_{\lambda + \rho_\gog}$.

Before stating the formula, we introduce some final notation.  Recall that $\Pi$ is the orthogonal projection from $\tilde \tot$ to $\tot$, and $Q_G \subset \tilde \tot$ is the lattice spanned by $\Phi^+_\gog$.  Then $\Pi(Q_G)$ is a lattice in $\tot$.  Write $|\Pi(Q_G)|$ for the volume of a fundamental domain of $\Pi(Q_G)$, and $\Pi(Q_G)^*$ for the dual lattice.  The character group of $\Pi(Q_G)$ is the torus $T \cong \tot / 2\pi \Pi(Q_G)^*$, which we may identify with any fundamental domain of the lattice $2\pi \Pi(Q_G)^* \subset \tot$.  Define
\begin{equation} \label{eqn:Rpoly-def}
R(x) = \sum_{\nu \in \Pi(Q_G)} \mathcal{B}_{\gog / \goh}(\nu) \, e^{i \langle \nu, x \rangle}, \qquad x \in \tot.
\end{equation}
The function $R(x)$ generalizes the $R$-polynomial of a semisimple Lie algebra, which arises in the study of tensor product multiplicities.  The $R$-polynomial was introduced in \cite{CZ1} and further studied in \cite{Coq2, CMZ, ER, McS-splines}.  Although the sum in (\ref{eqn:Rpoly-def}) runs over the entire lattice $\Pi(Q_G)$, in fact only finitely many terms are nonzero, since $\mathcal{B}_{\gog / \goh}$ is compactly supported.  Thus $R(x)$ is a $\Pi(Q_G)$-periodic trigonometric polynomial on $\tot$.  The following integral formula generalizes \cite[Theorem 5.5]{McS-splines}.

\begin{thm} \label{thm:mult-integral}
For $\lambda \in P^+_G$ and $\mu \in P^+_H$,
\begin{equation} \label{eqn:mult-integral}
m^\lambda_\mu = \frac{|\Pi(Q_G)|}{(2 \pi)^{r}} \int_{T} \frac{1}{R(x)} \sum_{\substack{\nu \in \\ \Pi(\lambda + Q_G)}} \mathcal{J}_{\lambda + \rho_\gog} (\nu + \rho_\goh) \, e^{i \langle \nu - \mu, x \rangle} \, dx.
\end{equation}
\end{thm}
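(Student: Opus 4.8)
The plan is to invert the box spline convolution identity (\ref{eqn:spline-conv}) by passing to Fourier series on the torus $T\cong\tot/2\pi\Pi(Q_G)^*$. First I would set up the lattice bookkeeping that makes everything fit together. Since each root of $\goh$, restricted to $\tot$, is the restriction of a root of $\gog$, the root lattice $Q_\goh$ of $\goh$ satisfies $Q_\goh\subseteq\Pi(Q_G)$, and likewise $\Phi^+_{\gog/\goh}\subseteq\Pi(Q_G)$. Combined with the standard fact that $w(v)-v\in Q_\goh$ for $w\in W_\goh$ and $v$ in the weight lattice of $\goh$, and with the vanishing $m^\lambda_\mu=0$ for $\mu\notin\Pi(\lambda+Q_G)$ recalled earlier in this section, this shows: $M_\lambda$ is supported on the coset $\Pi(\lambda+Q_G)+\rho_\goh$; the points $\nu+\rho_\goh$ sampled in (\ref{eqn:mult-integral}) lie on that same coset; and, crucially, every value of $\mathcal{B}_{\gog/\goh}$ entering the convolution (\ref{eqn:spline-conv2}) at such a sample point is a value at a point of the lattice $\Pi(Q_G)$.

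Next I would introduce the two trigonometric polynomials
\begin{align*}
S(x)&=\sum_{\nu\in\Pi(\lambda+Q_G)}\mathcal{J}_{\lambda+\rho_\gog}(\nu+\rho_\goh)\,e^{i\langle\nu,x\rangle},\\
N(x)&=\sum_{\mu\in P_H^+}m^\lambda_\mu\sum_{w\in W_\goh}\epsilon(w)\,e^{i\langle w(\mu+\rho_\goh)-\rho_\goh,\,x\rangle},
\end{align*}
which are genuinely finite sums: $N$ because only finitely many $m^\lambda_\mu$ are nonzero, and $S$ because $\mathcal{J}_{\lambda+\rho_\gog}=\mathcal{B}_{\gog/\goh}*M_\lambda$ is compactly supported, being a convolution of the compactly supported $\mathcal{B}_{\gog/\goh}$ with the finitely supported $M_\lambda$. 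Substituting (\ref{eqn:spline-conv2}) into $S$ and, for each fixed $\mu$ and $w$, reindexing the inner sum by $\eta=\nu+\rho_\goh-w(\mu+\rho_\goh)$ -- which by the facts above runs bijectively over $\Pi(Q_G)$ -- one has $e^{i\langle\nu,x\rangle}=e^{i\langle\eta,x\rangle}e^{i\langle w(\mu+\rho_\goh)-\rho_\goh,x\rangle}$, and the double sum factors and collapses to $S(x)=R(x)\,N(x)$ with $R$ the trigonometric polynomial (\ref{eqn:Rpoly-def}). Since $R(0)\ge\mathcal{B}_{\gog/\goh}(0)>0$ we have $R\not\equiv0$, and wherever $R$ vanishes so does $S=RN$ (as $N$ is everywhere finite); hence the integrand of (\ref{eqn:mult-integral}) equals the bounded function $N(x)e^{-i\langle\mu,x\rangle}$ off the measure-zero set $\{R=0\}$ and is unambiguous.

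Finally I would extract a Fourier coefficient. The characters $e^{i\langle\nu,\bullet\rangle}$, $\nu\in\Pi(Q_G)$, are orthogonal on $T$, and $\Vol(T)=(2\pi)^{r}/|\Pi(Q_G)|$, so for $\mu\in\Pi(\lambda+Q_G)\cap P_H^+$ the quantity $\tfrac{|\Pi(Q_G)|}{(2\pi)^{r}}\int_T N(x)\,e^{-i\langle\mu,x\rangle}\,dx$ equals the coefficient of $e^{i\langle\mu,\bullet\rangle}$ in $N$, namely $\sum\epsilon(w)\,m^\lambda_{\mu'}$ over the pairs $(w,\mu')$ with $w(\mu'+\rho_\goh)=\mu+\rho_\goh$. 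For $\mu,\mu'\in P_H^+$ both $\mu+\rho_\goh$ and $\mu'+\rho_\goh$ are strictly dominant, hence $W_\goh$-regular, so this equation forces $\mu'=\mu$ and $w=e$; the coefficient is therefore exactly $m^\lambda_\mu$. Substituting $N=S/R$ and expanding $S$ yields (\ref{eqn:mult-integral}); for $\mu\in P_H^+\setminus\Pi(\lambda+Q_G)$ one has $m^\lambda_\mu=0$.

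I expect the main obstacle to be the lattice bookkeeping in the second step -- verifying that for each $\mu$ and $w$ the reindexing $\nu\mapsto\eta$ really is a bijection onto all of $\Pi(Q_G)$, which is precisely what makes the convolution telescope into the clean product $S=RN$. A secondary point requiring care is the meaning of the pointwise identity (\ref{eqn:spline-conv2}) at lattice points where $\mathcal{B}_{\gog/\goh}$, and hence $\mathcal{J}_{\lambda+\rho_\gog}$, may be discontinuous -- which by Theorem \ref{thm:density-regularity} happens exactly when $\ell=-1$; there one must fix a compatible convention for the boundary values of $\mathcal{B}_{\gog/\goh}$ under which (\ref{eqn:spline-conv2}) continues to hold at the sample points appearing in (\ref{eqn:mult-integral}).
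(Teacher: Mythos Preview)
Your proposal is correct and follows essentially the same route as the paper: sample the convolution identity (\ref{eqn:spline-conv}) at the shifted lattice $\Pi(\lambda+Q_G)+\rho_\goh$, recognize the resulting discrete convolution as a product $S=RN$ of trigonometric polynomials on $T$, and extract $m^\lambda_\mu$ as a Fourier coefficient. Your treatment is in fact more careful than the paper's on two points it leaves implicit --- the lattice bookkeeping ensuring the reindexing $\nu\mapsto\eta$ is a bijection onto $\Pi(Q_G)$, and the argument that $w(\mu'+\rho_\goh)=\mu+\rho_\goh$ for dominant $\mu,\mu'$ forces $w=e$, $\mu'=\mu$ --- and you correctly flag the pointwise-vs.-a.e.\ issue in (\ref{eqn:spline-conv2}) when $\ell=-1$, which the paper does not address.
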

Similarly to (\ref{eqn:Rpoly-def}), the sum in (\ref{eqn:mult-integral}) involves only finitely many nonzero terms due to the fact that $\mathcal{J}_{\lambda + \rho_\gog}$ is compactly supported.
\begin{proof}
At points of the form $x = \nu + \rho_\goh$ with $\nu \in \Pi(\lambda + Q_G)$, the convolution identity (\ref{eqn:spline-conv2}) reads:
\begin{align} \begin{split} \label{eqn:J-nu-rho}
\mathcal{J}_{\lambda + \rho_\gog}(\nu + \rho_\goh) &= \sum_{\mu \in P_H^+} m^\lambda_\mu \sum_{w \in W_\goh} \epsilon(w) \, \mathcal{B}_{\gog / \goh} \big(\nu + \rho_\goh - w(\mu + \rho_\goh) \big) \\
 &= \sum_{\mu \in P_H^+ \cap \Pi(\lambda + Q_G)} m^\lambda_\mu \sum_{w \in W_\goh} \epsilon(w) \, \mathcal{B}_{\gog / \goh} \big(\nu + \rho_\goh - w(\mu + \rho_\goh) \big),
\end{split} \end{align}
where the second equality follows from the fact that $m^\lambda_\mu = 0$ unless $\mu \in \Pi(\lambda + Q_G)$.  Working in the fundamental weight basis and writing $w$ as a product of reflections, a direct calculation reveals that if $\mu, \nu \in  \Pi(\lambda + Q_G)$ then $\nu + \rho_\goh - w(\mu + \rho_\goh) \in \Pi(Q_G)$.  Accordingly, the final line in the display above only involves values of $\mathcal{B}_{\gog / \goh}$ at points of $\Pi(Q_G)$.  Multiplying the first and last expressions in (\ref{eqn:J-nu-rho}) by a Dirac mass at $\nu + \rho_\goh$ and summing over all $\nu \in \Pi(\lambda + Q_G)$, we obtain the following identity of signed measures:
\begin{equation} \label{eqn:disc-conv}
\sum_{\nu \in \Pi(\lambda + Q_G)} \mathcal{J}_{\lambda + \rho_\gog}(\nu + \rho_\goh) \, \delta_{\nu + \rho_\goh} = \bigg( \sum_{\nu \in \Pi(Q_G)} \mathcal{B}_{\gog / \goh}(\nu) \, \delta_\nu \bigg) * M_\lambda.
\end{equation}
Note that the measure expressed on either side of (\ref{eqn:disc-conv}) is supported on finitely many points of $\tot$.  Taking the Fourier transform of both sides and dividing through by $R(x)$, we find
\begin{equation} \label{eqn:disc-conv-FT}
\sum_{\mu \in P_H^+} m^\lambda_\mu \sum_{w \in W_\goh} \epsilon(w) \, e^{i \langle w(\mu + \rho_\goh), x \rangle} = \frac{1}{R(x)} \sum_{\nu \in \Pi(\lambda + Q_G)} \mathcal{J}_{\lambda + \rho_\gog}(\nu + \rho_\goh) \, e^{i \langle \nu + \rho_\goh, x \rangle}
\end{equation}
at all points $x \in T$ with $R(x) \ne 0$.  Since $R(x)$ is a nonzero trigonometric polynomial, it can vanish at most on a set of measure zero, so that (\ref{eqn:disc-conv-FT}) holds almost everywhere with respect to Haar measure on $T$.  From inspecting the left-hand side, it is clear that the function expressed in (\ref{eqn:disc-conv-FT}) is bounded and therefore integrable on $T$.  Thus we can extract the coefficient $m^\lambda_\mu$ of $e^{i \langle \mu+ \rho_\goh, x \rangle}$ by integrating the right-hand side against $e^{- i \langle \mu + \rho_\goh, x \rangle}$ with respect to the normalized Haar measure.  Since $T \cong \tot / 2\pi \Pi(Q_G)^*$ and $|\Pi(Q_G)^*|^{-1} = |\Pi(Q_G)|$, this normalized Haar measure is just $(2 \pi)^{-r} |\Pi(Q_G)|$ times Lebesgue measure on a fundamental domain of $2\pi \Pi(Q_G)^*$, which gives (\ref{eqn:mult-integral}).
\end{proof}

\section{Directions for further research} \label{sec:further}

To conclude, we describe some directions for further research, including potential applications to large-deviations principles in random matrix theory and to tests of entanglement in quantum information theory.

\subsection{Random matrix theory} In recent work, Belinschi, Guionnet and Huang have shown large-deviations principles for the spectral measures of the randomized Horn's and Schur's problems, and have used these to derive related large-deviations principles for Kotska numbers and Littlewood--Richardson coefficients in the limit of large rank \cite{BGH}.  The proofs proceed by relating these objects to Dyson Brownian motion, a stochastic interacting particle system for which the relevant asymptotic behavior is well understood.  The authors note that it might be possible instead to study the large-rank behavior directly from Fourier-analytic expressions for the probability densities in terms of orbital integrals, as derived in \cite{Z, CMZ}.  The main barrier to such an investigation is that it requires knowledge of the asymptotic behavior of the orbital integrals when evaluated at complex arguments, which until recently was not well understood.  However, there is reason to hope that this approach may now be possible due to a recent breakthrough by Novak related to complex asymptotics of orbital integrals \cite{Novak-integrals, Novak-integrals2}.  If such methods can be made to work for the randomized Horn's and Schur's problems, then it should be equally possible to apply them to the quantum marginal problems studied in this paper, which could in turn yield large-deviations principles for Kronecker coefficients.  Another possible approach, which would be of independent interest, is to try to relate quantum marginal problems to Dyson Brownian motion or an analogous stochastic process.

\subsection{Quantum information theory}
If $H$ is a Hilbert space, we write $B(H)$ for the space of bounded linear operators on $H$ and $D(H)$ for the convex subset of $B(H)$ consisting of positive semidefinite self-adjoint operators of trace $1$. In the terminology of quantum information, elements of $D(H)$ are called density matrices. 
Given two finite-dimensional Hilbert spaces $H_1,H_2$, the convex hull of $D(H_1) \otimes D(H_2)$ is clearly 
a subset of $D(H_1\otimes H_2)$, and this inclusion is strict as soon as both $H_1$ and $H_2$ are of
dimension $2$ or greater.
We denote this subset by $\mathrm{Sep}(H_1,H_2)$, and its elements are called separable states, whereas
the elements of $D(H_1\otimes H_2) \setminus \mathrm{Sep}(H_1,H_2)$ are called entangled states. 

It is clear that if $x\in \mathrm{Sep}(H_1,H_2)$, then for any $n \ge 2$, there exists
$x^{(n)}\in D(H_1\otimes \cdots \otimes H_n)$, with $H_2=\ldots =H_n$,
 such that $x^{(n)}$ is invariant under 
any permutation of legs $\{2,\ldots , n\}$ of the tensor product and $\mathrm{id}\otimes \mathrm{id}\otimes \Tr^{\otimes n-2}x^{(n)}=x$.
The quantum de Finetti theorem states that $x\in D(H_1\otimes H_2)$ satisfies such a property for
all $n\ge 2$ if and only if $x$ is actually separable \cite{Stormer, HudsonMoody, Caves}.

Many variants of this theorem exist \cite{KonigRenner, CKMR}, in particular a symmetric one, where the spaces $H_i$ are all equal and $x\in D(H_1\otimes H_2)$ is invariant under exchange of the two legs. In the symmetric setting, the same conclusion holds as in the original quantum de Finetti theorem if in addition one requires $x^{(n)}$ to be invariant under permutations of all $n$ legs.

This approach has been used to describe $\mathrm{Sep}(H_1,H_2)$ as the decreasing limit of
$\mathrm{Ext}_2^n(H_1,H_2)$, the image 
under the map $\mathrm{id} \otimes \mathrm{id} \otimes \Tr^{\otimes n-2}$
of elements of 
$D(H_1\otimes \cdots \otimes H_n)$ that are stable under any permutation of legs $\{2,\ldots , n\}$.
Elements of $\mathrm{Ext}_2^n(H_1,H_2)$ are usually called $n$-extendable density matrices, and therefore
the de Finetti theorem says that being separable is equivalent to being $n$-extendable
for all $n \ge 2$.

If we take an element at random in an orbit of $D(H_1\otimes \cdots \otimes H_n)$
with approximate permutation invariance,
our results allow in principle to compute the distribution of eigenvalues of the pushforward into
$\mathrm{Ext}_2^n(H_1,H_2)$ along the partial trace. 
In the limit of infinite tensor products and with an appropriate choice of an initial orbit in 
$D(H_1\otimes\ldots \otimes H_n)$, we suspect that our calculations could yield either new tests 
of entanglement or at least new natural and exploitable families of probabilities on the set of separable states. 

\section*{Acknowledgements}
The authors would like to thank the three anonymous reviewers for their helpful feedback.  CM would like to thank Jean-Bernard Zuber and Patrick McSwiggen for productive discussions.  The work of BC was supported by JSPS KAKENHI 17K18734 and 17H04823, as well as by Japan--France Integrated Action Program (SAKURA), grant number JPJSBP120203202.  The work of CM was supported by the National Science Foundation under grant number DMS 2103170, and by JST CREST program JPMJCR18T6.

\bibliographystyle{amsplain}

\end{document}